\newsavebox{\imagebox}
\definecolor{nblue}{rgb}{0.2,0.2,0.7}
\definecolor{ngreen}{rgb}{0.2,0.6,0.2}
\definecolor{nred}{rgb}{0.7,0.2,0.2}
\definecolor{nblack}{rgb}{0,0,0}
\definecolor{urlblue}{RGB}{30,19,156}
\newcommand{\blk}{\color{nblack}}
\definecolor{mangotango}{rgb}{1.0, 0.51, 0.26}
\crefname{appendix}{Supplemental Material}{Supplemental Material}
\renewcommand\appendixname{Supplemental Material}
\lstdefinestyle{mystyle}{
	breaklines=true,
}
\newcommand{\ten}{\otimes}
\newcommand{\trnorm}[1]{||#1||_{\tr}} 
\def\L{\mathcal{L}}
\def\N{\mathcal{N}}
\def\E{\mathcal{E}}
\def\id{\mathcal{I}}
\def\calS{\mathcal{S}}
\def\calT{\mathcal{T}}
\DeclareRobustCommand{\upchi}{{\mathpalette\uupchi\relax}}
\newcommand{\uupchi}[2]{\raisebox{\depth}{$#1\chi$}} 
\def\brho{\bm{\rho}}
\def\bsigma{\bm{\sigma}}
\def\btau{\bm{\tau}}
\def\bchi{\bm{\upchi}}
\def\bPi{\bm{\Pi}}
\def\L{\mathbf{L}}
\def\P{\mathbf{P}}
\def\R{\mathbf{R}}
\def\I{\mathbf{I}}
\def\K{\mathbf{K}}
\def\X{\mathbf{X}}
\def\Y{\mathbf{Y}}
\def\Z{\mathbf{Z}}
\def\S{\mathbf{S}}
\def\U{\mathbf{U}}
\def\swap{\mathbf{S}}
\newcommand{\zero}{\mathbf{0}} 
\newcommand{\+}{^{\dagger}} 
\def\bea{\begin{eqnarray}}
\def\eea{\end{eqnarray}}
\def\bean{\begin{eqnarray*}}
\def\eean{\end{eqnarray*}}
\theoremstyle{definition}
\newtheorem{rst}{Result}
\newtheorem*{con*}{Conjecture}
\newcommand{\at}{atemporality}
\begin{document}
	
\title{Causal Classification of Spatiotemporal Quantum Correlations}
	\author{Minjeong Song}
	\email{song.at.qit@gmail.com}
\affiliation{Nanyang Quantum Hub, School of Physical and Mathematical Sciences, Nanyang Technological University, 637371, Singapore}
    \author{Varun Narasimhachar}
      \email{varun.achar@gmail.com }
\affiliation{Nanyang Quantum Hub, School of Physical and Mathematical Sciences, Nanyang Technological University, 637371, Singapore}
\affiliation{Institute of High Performance Computing, Agency for Science, Technology and Research (A*STAR), 1 Fusionopolis Way, Singapore 138632}
    \author{Bartosz Regula}
    \affiliation{Mathematical Quantum Information RIKEN Hakubi Research Team, RIKEN Cluster for Pioneering Research (CPR) and RIKEN Center for Quantum Computing (RQC), Wako, Saitama 351-0198, Japan}
    \author{Thomas J.~Elliott}
    \affiliation{Department of Physics \& Astronomy, University of Manchester, Manchester M13 9PL, United Kingdom}
    \affiliation{Department of Mathematics, University of Manchester, Manchester M13 9PL, United Kingdom}
    \author{Mile Gu}
    \email{mgu@quantumcomplexity.org}
\affiliation{Nanyang Quantum Hub, School of Physical and Mathematical Sciences, Nanyang Technological University, 637371, Singapore}
\affiliation{Centre for Quantum Technologies, National University of Singapore, 3 Science Drive 2, 117543, Singapore}
\affiliation{MajuLab, CNRS-UNS-NUS-NTU International Joint Research Unit, UMI 3654, Singapore 117543, Singapore}
	
	\begin{abstract}  

From correlations in measurement outcomes alone, can two otherwise isolated parties establish whether such correlations are atemporal? That is, can they rule out that they have been given the same system at two different times? Classical statistics says no, yet quantum theory disagrees. Here, we introduce the necessary and sufficient conditions by which such quantum correlations can be identified as atemporal. We demonstrate the asymmetry of atemporality under time reversal, and reveal it to be a measure of spatial quantum correlation distinct from entanglement. Our results indicate that certain quantum correlations possess an intrinsic arrow of time, and enable classification of general quantum correlations across space-time based on their (in)compatibility with various underlying causal structures.

	\end{abstract}
	
	\maketitle

Consider Alice and Bob, situated in their own laboratories. In each round, they each receive correlated random variables $A$ and $B$. The correlations were distributed via one of three possible causal mechanisms: (i) spatially such that $A$ and $B$ share a common cause, (ii) temporally such that measurement outcomes of $A$ are communicated to $B$ or vice versa, and (iii) some combination of the above (see \cref{fig:circuit}). Alice and Bob record these correlations. With only this recording, can we rule out one of the causal mechanisms above? Classical statistics says no. The observed correlations will always be compatible with all possible scenarios. Thus the adage `\emph{correlation does not imply causation}'.  

Quantum correlations can exhibit remarkable differences. Suppose Alice and Bob each receive a single qubit each round, which they then measure in some Pauli basis. The correlations between their measurement outcomes can lie outside what a density operator describes. Such \emph{aspatial correlations} cannot be explained purely by a common cause (see \cref{subfig:sdist}), leading to quantum correlations that can indeed imply causation~\cite{ried15advantage,fitzsimons15pdo}. There is thus significant interest in quantum causal inference~\cite{ried15advantage, kubler18cuasaltwoqubit, hu18discrimination, zhang20observationscheme, fitzsimons15pdo, zhao18pdogeometry,costa16causalmodelling,allen17commoncausemodel}, due to its stark departure from classical statistics. 

Here we ask, \emph{do certain quantum correlations require a common cause}? We answer in the affirmative by formalizing the notion of \emph{atemporal} quantum correlations --  correlations between Alice and Bob's qubit measurements that cannot be explained by purely temporal means (i.e., as two measurements on a qubit communicated by some quantum channel between Alice and Bob, see \cref{subfig:tdist}). We demonstrate computable necessary and sufficient indicators for atemporality and demonstrate that (1) it is asymmetric under time-reversal and thus reveals the existence of correlations possessing an intrinsic arrow of time, and (2) it represents a new operational form of non-classical correlations distinct from entanglement. This induces a framework for \emph{causal classification} -- classifying general spatiotemporal quantum correlations based on their compatibility with various causal mechanisms. Our results thus provide new mathematical tools and concepts for understanding how quantum correlations can infer causal structure in ways without classical analogs.

\medskip
\noindent\textbf{Framework.} Returning to Alice and Bob in their own laboratories, we label the qubits Alice and Bob each possess 
respectively by $A$ and $B$. We assume that the qubit pair is prepared in the same way in each round. Such a preparation scheme may be:

\begin{itemize}[itemsep=0pt, leftmargin=*, itemindent=0pt]
\item \emph{Spatially distributed}, such that $A$ and $B$ correspond to two parts of some bipartite state $\brho_{AB}$ (see \cref{subfig:sdist}).
\item \emph{Temporally distributed}, such that $B$ is the output of $A$ subject to some fixed quantum channel (completely-positive trace-preserving map) $\mathcal{E}$, or vice versa (see \cref{subfig:tdist}).
\item Neither purely spatially nor temporally distributed, such $A$ and $B$ are related by general \emph{process matrices} (see \cref{subfig:stdist}). A special case being \emph{non-Markovianity}, where evolution from $A$ to $B$ involves coupling from an ancillary environment $E$ that is initially correlated with $A$
~\cite{Note_markov,PhysRevLett.103.210401,milz21stochastic}.
\end{itemize}

\begin{figure*}[htb]
    \begin{minipage}{0.8\textwidth}
        \subfloat[Spatially distributed\label{subfig:sdist}]{\includegraphics[width=0.3\textwidth]{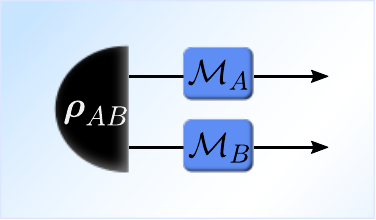}}
        \hspace{1em}
        \subfloat[Temporally distributed\label{subfig:tdist}]{\includegraphics[width=0.3\textwidth]{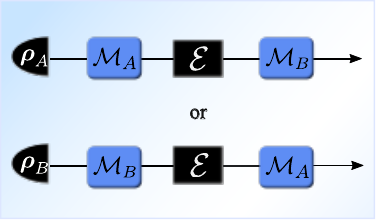}}
        \hspace{1em}
        \subfloat[General spatiotemporally distributed\label{subfig:stdist}]{\includegraphics[width=0.3\textwidth]{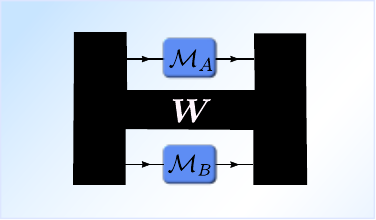}}
    \end{minipage}
    \noindent \begin{minipage}{0.12\textwidth}
    \includegraphics[width=0.8\textwidth]{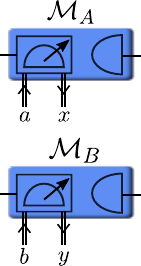}
    \end{minipage}
    \hspace{1em}
    \caption{\textbf{Causal distribution mechanisms}. Consider Alice and Bob that makes respective projective measurements $\mathcal{M}_{A}$ and $\mathcal{M}_{B}$ on respective quantum systems $A$ and $B$ (blue boxes). We divide potential mechanisms of correlating these measurements into three categories: (a) purely spatially-distributed mechanisms involving $A$ and $B$ being two aims of some bipartite state $\rho_{AB}$, reflecting the case of common cause, (b) purely temporally-distributed such that $A$ and $B$ represent the input and output of some quantum channel $\E$, reflecting the case of direct cause, or (c) some combination of both. Examples include non-Markovian evolution~\cite{PhysRevLett.103.210401,milz21stochastic} or cases of indefinite causal order \cite{oreshkov12nocausalorder}. We refer to correlations that are incompatible with (a) as aspatial and those incompatible with (b) as atemporal.} 
    \label{fig:circuit}
\end{figure*}

\noindent Let $\bsigma_{0} = \I$, $\bsigma_{1} = \X$, $\bsigma_{2} = \Y$, $\bsigma_{3} = \Z$ be the identity and the three standard Pauli operators
\footnote{Throughout this paper, we denote operators acting on Hilbert spaces of states by boldface letters.}.
Let $\Pr(x,y|a,b)$ then denote the probability of Alice getting outcome $x$ and Bob getting outcome $y$ when Alice chooses to measure in basis $\bsigma_{a}$ and Bob in $\bsigma_{b}$. Alice and Bob do not perform any other interventions. By choosing appropriate Pauli measurements over a large number of rounds, Alice and Bob can determine the expectation values
$\expval{\bsigma_{a},\bsigma_{b}} = \sum_{x,y} xy\Pr(x,y|a,b)$ describing how their measurement outcomes correlate in various Pauli basis to any desired level of accuracy. Alice and Bob then pass this information to us. What can we conclude about the causal distribution mechanisms behind the preparation of $A$ and $B$? 

Given Pauli correlations $\expval{\bsigma_{a},\bsigma_{b}}$ for each $a,b \in \{0,1,2,3\}$, Ref.~\cite{fitzsimons15pdo} proposed a concise description of this information via the pseudo-density operator (PDO)
\begin{eqnarray}
    \R_{AB} \equiv \sum_{a,b=0}^{3} \frac{\expval{\bsigma_{a},\bsigma_{b}}}{4}\bsigma_{a}\ten\bsigma_{b}.
\end{eqnarray}
Initially proposed to identify quantum correlations that imply causality, it has seen significant uses toward building quantum information theories that place space and time on equal footing~\cite{zhao18pdogeometry,pisarczyk19pdocausallimit,utagi2021pdomarkov,jia23marginal,fullwood2023pdo,marletto19pdoentanglement,marletto21temporaltele,marletto2020non,liu2024unification,liu23lighttouch,liu2024inferring}.
They contain all the information about $\expval{\bsigma_{a},\bsigma_{b}}$, since the latter can be retrieved directly via $\expval{\bsigma_{a},\bsigma_{b}} = \tr[\R_{AB} (\bsigma_a \otimes \bsigma_b)]$. Thus our capacity to infer causal distribution mechanisms from the Pauli correlations coincides with our capacity to infer causal distribution mechanisms from the corresponding PDO. Also note that when qubits $A$ and $B$ are spatially distributed, $\R_{AB}$ reduces to a standard density operator. Meanwhile, their marginal distributions are always positive and describe local measurement statistics for Alice and Bob. 

\medskip
\noindent\textbf{Spatial and temporal compatibility}. We introduce two distinct criteria on PDOs: We say that $\R_{AB}$ is \emph{spatially compatible}, or belongs to $\calS$ if its statistics can be generated via a spatial distribution mechanism (i.e., as in \cref{subfig:sdist}). Similarly, we say that $\R_{AB}$ is \emph{temporally compatible}, or belongs to $\calT$ if its statistics can be generated via a temporal distribution mechanism (i.e., as in \cref{subfig:tdist}). We will often use the terms \emph{spatial} and \emph{temporal} for brevity, but we stress that they only mean \emph{compatible with} a spatially or temporally distributed structure. PDOs that lie outside of $\calS$ are referred to as \emph{aspatial}, and those that lie outside of $\calT$ are referred to as \emph{atemporal}.

We then divide the set of all PDOs using a Venn diagram into four separate classes based on their spatial-temporal compatibility: Those that (a) lie in $\calS$ and $\calT$ and are thus compatible with any distribution mechanism, (b) lie in $\calS$ but not $\calT$ and thus rule out purely temporal distribution mechanisms, (c) lie in $\calT$ but not $\calS$ and thus rule out purely spatial distribution mechanisms and (d) those that lie outside $\calS$ and $\calT$ that cannot be explained by either purely spatial or temporal distribution scheme but rather, rely on a more complicated combination of spatial and temporal mechanisms. States that lie in (a) behave like classical probability distributions, and we cannot infer anything conclusive about their underlying causal distribution mechanism. Quantum correlations, however, permit PDOs in each of (b), (c), and (d), where certain causal distribution mechanisms can be ruled out (see \cref{fig:Venn}).

\begin{figure}[ht]
    \includegraphics[width=0.48\textwidth]{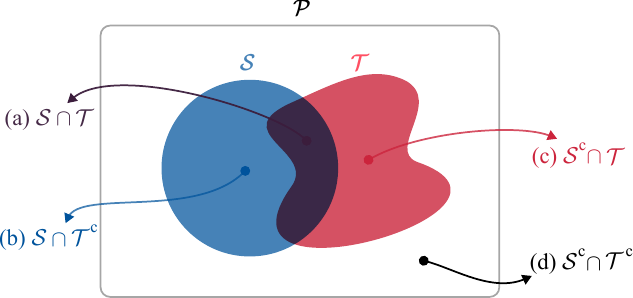}
    \caption{\textbf{Venn diagram of all spatial--temporal compatibility}. The set $\mathcal P$ of observed spatiotemporal quantum correlations (as described by PDOs) is divided into four mutually exclusive subsets: (a) $\calS \cap \calT$ represent correlations that are compatible with purely spatial and purely temporal distribution mechanisms, 
    (b) $\calS \cap \calT^c$ represents correlations that rule out purely temporal distribution mechanisms,
    (c) $\calS^c \cap \calT$ represents correlations that rule out purely spatial distribution mechanisms such as two coexisting qubits measured separately, and (d) $\calS^c \cap \calT^c$ designates correlations that require a combination of spatial and temporal distribution mechanisms to explain. 
    Note that unlike $\mathcal{S}$, $\mathcal{T}$ does not form a convex set (see Example 5 in \appendixname).}

    \label{fig:Venn}
\end{figure}

To better understand what PDOs lie within each class, we need a necessary and sufficient criterion for aspatiality and atemporality. PDOs were initially introduced to study the former, with Ref.~\cite{fitzsimons15pdo} showing that the negativity of $\R_{AB}$ is necessary and sufficient for \emph{aspatiality}. Some works also looked into temporal correlations in some limited scenarios under additional assumptions, e.g., maximally mixed or full ranked initial state \cite{ried15advantage,kubler18cuasaltwoqubit,zhao18pdogeometry, fullwood2023pdo}, or unitary evolution \cite{hu18discrimination,zhang20observationscheme}. We will derive conditions without such assumptions for when a PDO is \emph{atemporal}, and thus build a full picture of spatial-temporal compatibility.

\medskip
\noindent\textbf{Certifying atemporality}. Given $\R_{AB}$, our goal is to determine identifiers of atemporality that rule out compatibility with temporal distribution mechanisms. Consider first a \emph{forward atemporality} measure $\overrightarrow{f}$ that is zero if and only if $\mathbf{R}_{AB}$ has statistics consistent with a temporal distribution mechanism from $A$ to $B$; and a \emph{reverse atemporality} measure $\overleftarrow{f}$ that is zero if and only if $\mathbf{R}_{AB}$ has statistics consistent with a temporal distribution mechanism from $B$ to $A$. Together they naturally induce a general \emph{atemporality} measure $f = \mathrm{min}(\overrightarrow{f},\overleftarrow{f})$ that is zero if and only if $\R_{AB}$ lies in $\mathcal{T}$.

We then introduce \emph{pseudo-channels}, a temporal analog of pseudo-density operators. Recall that the Choi--Jamio{\l}kowski isomorphism enables us to represent each qubit channel $\Lambda$ by a Choi operator~\cite{choi75}
\begin{eqnarray}
    \bchi_{\Lambda} \equiv (\id \ten \Lambda) \ketbra*{\phi^{+}},
\end{eqnarray}
describing the output state when $\Lambda$ is applied to one arm of a Bell state $\ket*{\phi^{+}} = \frac{1}{\sqrt{2}}(\ket{00}+\ket{11})$. Here, $\id$ denotes the identity channel. We observe that this output does not need to be a valid spatial quantum state if $\Lambda$ is not a valid quantum channel. More generally, let $\bchi_{\Lambda}$ be a PDO with non-zero negativity $\mathcal{N}(\bchi_{\Lambda})$ (absolute sum of its negative eigenvalues). In such scenarios, $\Lambda$ remains trace-preserving, Hermiticity-preserving, and linear, but is no longer completely positive.

\begin{figure}[ht]
    \centering
    \includegraphics[width=0.95\linewidth]{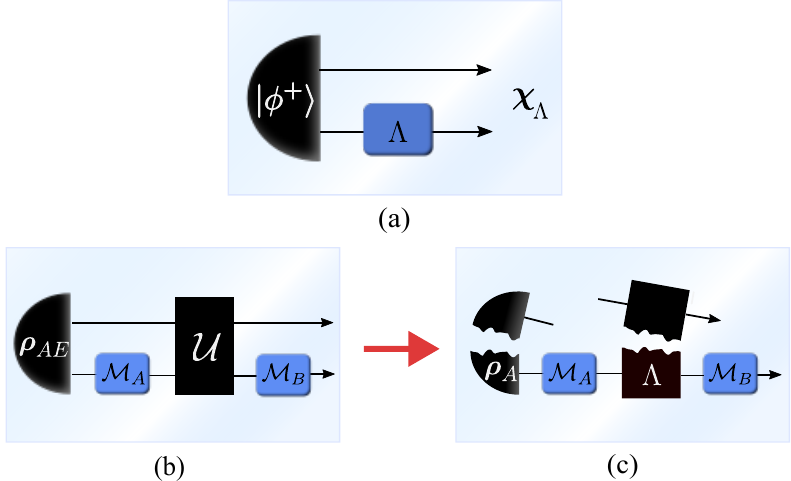}
    \caption{\textbf{Choi operator and pseudo-channels}. (a) The Choi operator of a quantum channel $\Lambda$ describes the resulting state when we apply $\Lambda$ on one arm of the Bell state $\ket{\phi^{+}}\equiv \frac{1}{\sqrt{2}}(\ket{00}+\ket{11})$.
    (b) In cases where the resulting correlations $\R_{AB}$ between $A$ and $B$ are atemporal, there is no valid quantum channel from $A$ to $B$. (c) Nevertheless, we can identify a pseudo-channel $\Lambda$ that forcibly interprets the dynamics as a map from $A$ to $B$. The resulting $\Lambda$ is non-physical, which is reflected by the negativity of its Choi operator. We show that this negativity is a necessary and sufficient condition for the forward atemporality of $\R_{AB}$.}
    \label{fig:choi_pc}
\end{figure}

This motivates us to define \emph{pseudo-channels}: linear maps that preserve trace and hermiticity but which can be non completely positive. $\Lambda$ is then a \emph{pseudo-channel}, and $\mathcal{N}(\bchi_{\Lambda})$ provides a necessary and sufficient indicator of its \emph{non-physicality}. Such pseudo-channels provide a natural means to define $\overrightarrow{f}$ (and thus $\overleftarrow{f}$ and $f$). Given $\R_{AB}$, we first assert that it describes correlations resulting from some quantum channel $\overrightarrow{\Lambda}$ with input system $A$ and output system $B$. Ref.~\cite{zhao18pdogeometry,horsman17pdoandothers} demonstrated that for temporal PDOs, the associated quantum channel satisfies 
    \begin{equation}
            \R_{AB} = \left(\id_A \ten \overrightarrow{\Lambda} \right) \K_{AB} \label{eq:temporal_PDO}
    \end{equation}
\noindent where $\K_{AB} \equiv \left\{ \brho_A \ten \frac{\I_B}{2}, \S_{AB} \right\}$ with $\brho_A \equiv \tr_B \R_{AB}$ being the first marginal, $\{\cdot,\cdot\}$ the anti-commutator, $\I$ the identity operator, and $\S$ the swap operator. In more general cases where $\R_{AB}$ is not necessarily a temporal PDO, we rationalize the following;
When $\R_{AB}$ is incompatible with a temporal distribution mechanism from $A$ to $B$, no such valid quantum channels exist. However, we can drop the complete-positivity requirements on $\overrightarrow{\Lambda}$. In \appendixname~\footnote{See Supplemental Material.} (see Lemma 1), we prove that any PDO will have at least one compatible forward pseudo-channel. This allows us to interpret \emph{any} spatiotemporal correlations as resulting from a pseudo-channel $\overrightarrow{\Lambda}$ acting on $A$ to generate $B$ (see \cref{fig:choi_pc}). The minimal non-physicality of such a channel then motivates our definition for \emph{forward atemporality}:
\begin{equation}
\overrightarrow{f}(\R_{AB})\equiv\min_{ \overrightarrow{\Lambda}} ~\mathcal{N}(\bchi_{\overrightarrow{\Lambda}}),
\end{equation}
\noindent where the minimization is over all forward pseudo-channels $\overrightarrow{\Lambda}$ that is compatible with $\R_{AB}$. Similarly, we define the \emph{reverse atemporality} $\overleftarrow{f}$ by interchanging $A$ and $B$, thus also defining the overall atemporality $f = \mathrm{min}(\overrightarrow{f},\overleftarrow{f})$. For example, the entangled Bell state $\frac{1}{\sqrt{2}}(\ket{01} - \ket{10})$ has forward, reverse and overall atemporality of $0.5$ (its corresponding pseudo-channel being the non-physical universal-NOT gate~\cite{bu2000universal}). This then leads to one of our key results:

\begin{rst}     
Given spatiotemporal correlations described by a PDO $\R_{AB}$, let
    \begin{equation}
\overrightarrow{f}(\R_{AB})\equiv\min_{ \overrightarrow{\Lambda}} ~\mathcal{N}(\bchi_{\overrightarrow{\Lambda}}),
\end{equation}
    where the minimization is over all forward pseudo-channels $\overrightarrow{\Lambda}$ that is compatible with $\R_{AB}$ (i.e., those that satisfy Eq. (\ref{eq:temporal_PDO})). Then $f > 0$ is a necessary and sufficient condition for atemporality. Moreover, we have a systematic algorithm (see below) to compute $\overrightarrow{f}$ for any $\R_{AB}$.\label{rst:1}
\end{rst}

Observe first $f > 0$ implies that no physical channel is compatible with $\R_{AB}$ by definition, while existence of a compatible $\overrightarrow{\Lambda}$ (as expressed by their Choi operator) can be systematically identified as follows: When $\R_{AB}$ has full rank marginals, $\overrightarrow{\Lambda}$ is unique and the algorithm that returns its Choi operator $\bchi$ is particularly simple (see \cref{alg:pseudo_channel}). From this, the forward atemporality can be directly computed. 

 \begin{algorithm}[H]
\caption{Choi operator of pseudo-channel construction}\label{alg:pseudo_channel}
\begin{algorithmic}[1]
\Require $2$-qubit PDO $\R_{AB}$
\State $\brho_A \gets \tr_B \R_{AB}$
    \State $\L \gets \left(\brho_A-\frac{\I}{2}\right)\ten\tr_A\left[\left(\frac{1}{2}\brho_A^{-1} \ten \I\right) \R_{AB}\right]+\frac{\I}{2}\ten\tr_A\left[\left((\I-\frac{1}{2}\brho_A^{-1}) \ten \I\right) \R_{AB}\right]$

    \State $\bchi \gets (T \ten \id)(\R_{AB} - \L)$
    \color{gray}{\Comment{ $T$ denotes the transpose map}} \blk
    
\State \Return{$\bchi$}
\end{algorithmic}
\end{algorithm}

When the first marginal $\brho_A$ is rank-deficient (i.e., some pure state $\ketbra{\phi}$), the pseudo-channels compatible with $\R_{AB}$ are no longer unique. This is because any such causal interpretation corresponds to Alice being given a system in $\ket{\phi}$, such that measured statistics do not contain any information regarding outputs when $\Lambda$ acts on a state $\ket{\smash{\phi^{\perp}}}$ perpendicular to $\ket{\phi}$. In \appendixname~\cite{Note2}, we generalize the above algorithm to identify all compatible pseudo-channels, and a semi-definite program to find the minimum non-physicality among them. Thus, we can systematically evaluate the forward atemporality $\overrightarrow{f}$ for all $2$-qubit PDOs. Interchanging $A$ and $B$ enables evaluation of the reverse atemporality $\overleftarrow{f}$ and thus overall atemporality $f$.

\medskip
\noindent\textbf{Properties of atemporality}. The computability of atemporality offers efficient means to study its properties. Here, we survey key results (see \appendixname~\cite{Note2} for further details). The first is time-reversal asymmetry. Unlike classical correlations, certain quantum correlations admit temporal mechanisms in only one temporal direction.

\begin{rst} Forward atemporality does not imply reverse atemporality or vice versa.
\end{rst}

Consider a PDO describing a single qubit $A$ undergoing probabilistic dephasing $\E(\brho):=p\brho+(1-p)\Z\brho\Z\+$, the output of which we label qubit $B$. Clearly, its forward atemporality is $0$ by construction. However, $\overleftarrow{f}(\R_{AB}) > 0$ for all $p \neq 0,1$ (see Example 8 in \appendixname~\cite{Note2}). Thus, quantum correlations not only can imply causality as previously suggested~\cite{fitzsimons15pdo, ried15advantage}, but can also only imply causality in a particular temporal direction.

The interplay of temporal and spatial compatibility is also a natural point of interest. Specifically, let us restrict ourselves to density operators (i.e., correlations that lie in $\mathcal{S}$) and let $\{\proj{e_j}\}$ and $\{\proj{f_j}\}$ be some orthonormal basis respectively on $A$ and $B$. In \appendixname~\cite{Note2}, we show that classical distributions of such orthogonal states, i.e., of form $\sum_{jk} p_{jk} \proj{e_j} \ten \proj{f_k}$ have zero atemporality in either direction -- aligning with the intuition that classical statistics cannot rule out any causal distribution mechanism without active intervention. Meanwhile states of the form $\rho_{AB} = \sum_{j} p_j \proj{e_j} \ten \btau_j$, where $\btau_j$ is an arbitrary state on $B$ (i.e., those with zero one-way discord~\cite{ollivier2001quantum,henderson2001classical,modi2012classical}) has zero forward atemporality. 

One might also speculate that entanglement implies atemporality. Indeed, we show whenever $\R \in \calS$ is pure or has maximally mixed marginals (see Theorem 3 in \appendixname~\cite{Note2}), non-zero atemporality coincides with non-zero entanglement. However, this does not hold in more general conditions:
\begin{rst}
   Entanglement does not imply atemporality: Certain entangled states are temporally compatible.  \label{obs:first}
\end{rst}
Consider the parameterized family of biased Werner states  $\brho_{p,q}^\text{B.W.} \equiv (1-p)\brho^\text{Werner}_{q} + p\proj{00}$, achieved by mixing a standard Werner state $\brho^\text{Werner}_{q}$ \cite{werner89}
with the state $\ketbra{00}$
(see \cref{fig:bwerner}). Here, $\brho_{0.5,0.25}^\text{B.W.}$ has zero atemporality, but non-zero entanglement negativity ($\approx0.0087$) \cite{vidal02negativity}. Nevertheless, sufficiently strong entanglement does guarantee atemporality. In \appendixname~\cite{Note2} (see Theorem 4 within), we prove the following:
\begin{rst} Any temporally compatible $2$-qubit state must have entanglement negativity of at most $\frac{1}{2}(\sqrt{2}-1)$.
\end{rst}
\noindent Indeed a scatter plot of atemporality vs. entanglement negativity for 1000 randomly generated density operators suggest that two concepts are heavily correlated but not the same -- with atemporality looking to be a stronger notion of non-classical correlations than entanglement (see \appendixname~\cite{Note2}). Thus we anticipate that future study of atemporality could well lead to a new and finer-grained understanding of quantum correlations.

\begin{figure}[tb]
    \centering
    \includegraphics[width=0.95\linewidth]{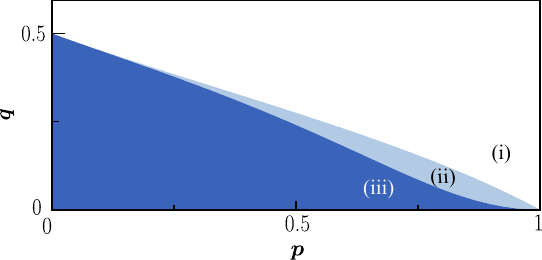}
    \caption{ \textbf{Entanglement and atemporality.} The family of biased Werner states $\brho_{p,q}^\text{B.W.} \equiv (1-p)\brho^\text{Werner}_{q} + p\proj{00}$ illustrates the differences between entanglement and atemporality. The plot depicts a colour map depicting entanglement and atemporality of $\brho_{p,q}^{B.W.}$ for various values of $p$ and $q$. While (i) all separable states are not atemporal, (ii) there exist entangled states that nevertheless admit temporal distribution mechanisms. Notwithstanding, (iii) most entangled states within the family are atemporal. 
    }
    \label{fig:bwerner}
\end{figure}

\medskip
\noindent\textbf{Discussion}.
Spatiotemporal quantum correlations differ crucially from classical counterparts in that they can be fundamentally incompatible with certain underlying causal distribution mechanisms. In this article, we showed that such correlations between various Pauli measurements on two qubits $A$ and $B$ can be \emph{atemporal}, such that their explanation necessitates some common cause. We provided a necessary and sufficient indicator of atemporality and a systematic algorithm to compute it. In studying atemporality, we illustrated (1) the existence of temporal asymmetry -- whereby certain correlations admit purely causal explanations in only one temporal direction and (2) that \emph{atemporality} induces a notion of quantum correlations distinct from entanglement. Combined with prior work showing quantum correlations can also be \emph{aspatial} -- such that they cannot be purely explained by a common cause --  our results enable a framework to classify quantum correlations based on their compatibility with \emph{spatial} and \emph{temporal} mechanisms. 

This classification opens a number of interesting directions. Our work here focused on the spatiotemporal correlation between two qubits as it allowed for closed-form expressions for atemporality; however, the fundamental concept introduced applies to arbitrary bipartite systems. Fundamentally, we can define the atemporality of any correlations between $A$ and $B$ (or vice versa) as how non-physical a quantum channel from $A$ and $B$ (or vice versa) must be to generate the correlations observed. Indeed, PDOs are well-defined for bipartite systems with $n$-qubit partitions~\cite{liu2024unification}, while variants extend these ideas to general $d$-dimensional or continuous variable systems~\cite{fullwood2024operator,zhang20different}. The identification of an analog of our \cref{alg:pseudo_channel} for finding (pseudo)-channels then provides a natural pathway for understanding spatiotemporal compatibility on systems of arbitrary dimensions. Meanwhile, our quantifier of atemporality used one particular measure of non-physicality. This choice is not unique; other definitions of non-physical maps and quantifiers of non-physicality exist~\cite{dominy16beyondcp, pechukas94reduced,paz19dynamics,lu2016structure,modi12reduced, buscemi14reduced, vacchini16reduced, utagi2021pdomarkov, parzygnat21conditional, parzygnat22non,parzygnat23bayes,parzygnat23axioms,regula2021operational, carteret2008dynamics}; and thus we can anticipate many alternative measures of atemporality akin to what exists for entanglement. 

The relation between atemporality and spatial quantum correlations also yields fascinating insights. We proved that some forms of quantum correlation are required for a standard bipartite quantum system to be temporal, and sufficient entanglement guarantees atemporality.  Still, many open questions remain. Some entangled states are not atemporal, so is atemporality a strictly stronger notion of quantum correlations? And if so, is it guaranteed by steering or Bell non-locality~\cite{wiseman2007steering}? We also have no proof that atemporality guarantees entanglement, and thus could atemporality persist in more robust forms of quantum correlations~\cite{modi2012classical}? Whatever the case, atemporality has a clear operational interpretation and introduces an entirely new category to the existing hierarchy of quantum correlations, and answering such questions will help us better understand the uniquely quantum incompatibility between spatial and temporal correlations.

\medskip
\begin{acknowledgments}
\noindent\textbf{Acknowledgements}.
This work is supported by the National Research Foundation, Singapore, and Agency for Science, Technology and Research (A*STAR) under its QEP2.0 programme (NRF2021-QEP2-02-P06) and its CQT Bridging Grant, the Singapore Ministry of Education Tier 1 Grants RG77/22 and RT4/23, the Singapore Ministry of Education Tier 2 Grant MOE-T2EP50221-0005, grant no.~FQXi-RFP-1809 (The Role of Quantum Effects in Simplifying Quantum Agents) from the Foundational Questions Institute and Fetzer Franklin Fund (a donor-advised fund of Silicon Valley Community Foundation). TJE is supported by the University of Manchester Dame Kathleen Ollerenshaw Fellowship. VN acknowledges support from the Lee Kuan Yew Endowment Fund (Postdoctoral Fellowship).  
\end{acknowledgments}

\bibliography{main.bib}

\begin{thebibliography}{49}%
\makeatletter
\providecommand \@ifxundefined [1]{%
 \@ifx{#1\undefined}
}%
\providecommand \@ifnum [1]{%
 \ifnum #1\expandafter \@firstoftwo
 \else \expandafter \@secondoftwo
 \fi
}%
\providecommand \@ifx [1]{%
 \ifx #1\expandafter \@firstoftwo
 \else \expandafter \@secondoftwo
 \fi
}%
\providecommand \natexlab [1]{#1}%
\providecommand \enquote  [1]{``#1''}%
\providecommand \bibnamefont  [1]{#1}%
\providecommand \bibfnamefont [1]{#1}%
\providecommand \citenamefont [1]{#1}%
\providecommand \href@noop [0]{\@secondoftwo}%
\providecommand \href [0]{\begingroup \@sanitize@url \@href}%
\providecommand \@href[1]{\@@startlink{#1}\@@href}%
\providecommand \@@href[1]{\endgroup#1\@@endlink}%
\providecommand \@sanitize@url [0]{\catcode `\\12\catcode `\$12\catcode `\&12\catcode `\#12\catcode `\^12\catcode `\_12\catcode `\%12\relax}%
\providecommand \@@startlink[1]{}%
\providecommand \@@endlink[0]{}%
\providecommand \url  [0]{\begingroup\@sanitize@url \@url }%
\providecommand \@url [1]{\endgroup\@href {#1}{\urlprefix }}%
\providecommand \urlprefix  [0]{URL }%
\providecommand \Eprint [0]{\href }%
\providecommand \doibase [0]{https://doi.org/}%
\providecommand \selectlanguage [0]{\@gobble}%
\providecommand \bibinfo  [0]{\@secondoftwo}%
\providecommand \bibfield  [0]{\@secondoftwo}%
\providecommand \translation [1]{[#1]}%
\providecommand \BibitemOpen [0]{}%
\providecommand \bibitemStop [0]{}%
\providecommand \bibitemNoStop [0]{.\EOS\space}%
\providecommand \EOS [0]{\spacefactor3000\relax}%
\providecommand \BibitemShut  [1]{\csname bibitem#1\endcsname}%
\let\auto@bib@innerbib\@empty
\bibitem [{\citenamefont {Ried}\ \emph {et~al.}(2015)\citenamefont {Ried}, \citenamefont {Agnew}, \citenamefont {Vermeyden}, \citenamefont {Janzing}, \citenamefont {Spekkens},\ and\ \citenamefont {Resch}}]{ried15advantage}%
  \BibitemOpen
  \bibfield  {author} {\bibinfo {author} {\bibfnamefont {K.}~\bibnamefont {Ried}}, \bibinfo {author} {\bibfnamefont {M.}~\bibnamefont {Agnew}}, \bibinfo {author} {\bibfnamefont {L.}~\bibnamefont {Vermeyden}}, \bibinfo {author} {\bibfnamefont {D.}~\bibnamefont {Janzing}}, \bibinfo {author} {\bibfnamefont {R.~W.}\ \bibnamefont {Spekkens}},\ and\ \bibinfo {author} {\bibfnamefont {K.~J.}\ \bibnamefont {Resch}},\ }\href {https://doi.org/10.1038/nphys3266} {\bibfield  {journal} {\bibinfo  {journal} {Nature Physics}\ }\textbf {\bibinfo {volume} {11}},\ \bibinfo {pages} {414} (\bibinfo {year} {2015})}\BibitemShut {NoStop}%
\bibitem [{\citenamefont {Fitzsimons}\ \emph {et~al.}(2015)\citenamefont {Fitzsimons}, \citenamefont {Jones},\ and\ \citenamefont {Vedral}}]{fitzsimons15pdo}%
  \BibitemOpen
  \bibfield  {author} {\bibinfo {author} {\bibfnamefont {J.~F.}\ \bibnamefont {Fitzsimons}}, \bibinfo {author} {\bibfnamefont {J.~A.}\ \bibnamefont {Jones}},\ and\ \bibinfo {author} {\bibfnamefont {V.}~\bibnamefont {Vedral}},\ }\href {https://doi.org/10.1038/srep18281} {\bibfield  {journal} {\bibinfo  {journal} {Scientific reports}\ }\textbf {\bibinfo {volume} {5}},\ \bibinfo {pages} {18281} (\bibinfo {year} {2015})}\BibitemShut {NoStop}%
\bibitem [{\citenamefont {K\"{u}bler}\ and\ \citenamefont {Braun}(2018)}]{kubler18cuasaltwoqubit}%
  \BibitemOpen
  \bibfield  {author} {\bibinfo {author} {\bibfnamefont {J.~M.}\ \bibnamefont {K\"{u}bler}}\ and\ \bibinfo {author} {\bibfnamefont {D.}~\bibnamefont {Braun}},\ }\bibfield  {journal} {\bibinfo  {journal} {New Journal of Physics}\ }\textbf {\bibinfo {volume} {20}},\ \href {https://doi.org/10.1088/1367-2630/aad612} {10.1088/1367-2630/aad612} (\bibinfo {year} {2018})\BibitemShut {NoStop}%
\bibitem [{\citenamefont {Hu}\ and\ \citenamefont {Hou}(2018)}]{hu18discrimination}%
  \BibitemOpen
  \bibfield  {author} {\bibinfo {author} {\bibfnamefont {M.}~\bibnamefont {Hu}}\ and\ \bibinfo {author} {\bibfnamefont {Y.}~\bibnamefont {Hou}},\ }\href {https://doi.org/10.1103/PhysRevA.97.062125} {\bibfield  {journal} {\bibinfo  {journal} {Physical Review A}\ }\textbf {\bibinfo {volume} {97}},\ \bibinfo {pages} {062125} (\bibinfo {year} {2018})}\BibitemShut {NoStop}%
\bibitem [{\citenamefont {Zhang}\ \emph {et~al.}(2020{\natexlab{a}})\citenamefont {Zhang}, \citenamefont {Hou},\ and\ \citenamefont {Song}}]{zhang20observationscheme}%
  \BibitemOpen
  \bibfield  {author} {\bibinfo {author} {\bibfnamefont {C.}~\bibnamefont {Zhang}}, \bibinfo {author} {\bibfnamefont {Y.}~\bibnamefont {Hou}},\ and\ \bibinfo {author} {\bibfnamefont {D.}~\bibnamefont {Song}},\ }\href {https://doi.org/10.1103/PhysRevA.101.062103} {\bibfield  {journal} {\bibinfo  {journal} {Physical Review A}\ }\textbf {\bibinfo {volume} {101}},\ \bibinfo {pages} {062103} (\bibinfo {year} {2020}{\natexlab{a}})}\BibitemShut {NoStop}%
\bibitem [{\citenamefont {Zhao}\ \emph {et~al.}(2018)\citenamefont {Zhao}, \citenamefont {Pisarczyk}, \citenamefont {Thompson}, \citenamefont {Gu}, \citenamefont {Vedral},\ and\ \citenamefont {Fitzsimons}}]{zhao18pdogeometry}%
  \BibitemOpen
  \bibfield  {author} {\bibinfo {author} {\bibfnamefont {Z.}~\bibnamefont {Zhao}}, \bibinfo {author} {\bibfnamefont {R.}~\bibnamefont {Pisarczyk}}, \bibinfo {author} {\bibfnamefont {J.}~\bibnamefont {Thompson}}, \bibinfo {author} {\bibfnamefont {M.}~\bibnamefont {Gu}}, \bibinfo {author} {\bibfnamefont {V.}~\bibnamefont {Vedral}},\ and\ \bibinfo {author} {\bibfnamefont {J.~F.}\ \bibnamefont {Fitzsimons}},\ }\href {https://doi.org/10.1103/PhysRevA.98.052312} {\bibfield  {journal} {\bibinfo  {journal} {Physical Review A}\ }\textbf {\bibinfo {volume} {98}},\ \bibinfo {pages} {052312} (\bibinfo {year} {2018})}\BibitemShut {NoStop}%
\bibitem [{\citenamefont {Costa}\ and\ \citenamefont {Shrapnel}(2016)}]{costa16causalmodelling}%
  \BibitemOpen
  \bibfield  {author} {\bibinfo {author} {\bibfnamefont {F.}~\bibnamefont {Costa}}\ and\ \bibinfo {author} {\bibfnamefont {S.}~\bibnamefont {Shrapnel}},\ }\href {https://doi.org/10.1088/1367-2630/18/6/063032} {\bibfield  {journal} {\bibinfo  {journal} {New Journal of Physics}\ }\textbf {\bibinfo {volume} {18}},\ \bibinfo {pages} {063032} (\bibinfo {year} {2016})}\BibitemShut {NoStop}%
\bibitem [{\citenamefont {Allen}\ \emph {et~al.}(2017)\citenamefont {Allen}, \citenamefont {Barrett}, \citenamefont {Horsman}, \citenamefont {Lee},\ and\ \citenamefont {Spekkens}}]{allen17commoncausemodel}%
  \BibitemOpen
  \bibfield  {author} {\bibinfo {author} {\bibfnamefont {J.-M.~A.}\ \bibnamefont {Allen}}, \bibinfo {author} {\bibfnamefont {J.}~\bibnamefont {Barrett}}, \bibinfo {author} {\bibfnamefont {D.~C.}\ \bibnamefont {Horsman}}, \bibinfo {author} {\bibfnamefont {C.~M.}\ \bibnamefont {Lee}},\ and\ \bibinfo {author} {\bibfnamefont {R.~W.}\ \bibnamefont {Spekkens}},\ }\href {https://doi.org/10.1103/PhysRevX.7.031021} {\bibfield  {journal} {\bibinfo  {journal} {Physical Review X}\ }\textbf {\bibinfo {volume} {7}},\ \bibinfo {pages} {031021} (\bibinfo {year} {2017})}\BibitemShut {NoStop}%
\bibitem [{Not()}]{Note_markov}%
  \BibitemOpen
  \href@noop {} {}\bibinfo {note} {A general quantum process describes evolution of an open system in contact with an external enviroment. Markovian evolution then aligns with the special case where optimal modelling of the the systems future behaviour requires only knowledge of its current state $\rho_0$, whereby its future state $\rho_t$ can be described by quantum channels (completely positive and trace-preserving maps) acting on $\rho_0$. A non-Markovian process comes about when one can gain more information about $\rho_t$ by knowing the state of the enviroment - as this signifies that information about the system-enviroment correlations established at earlier times ($t < 0$) can be fed back into the system. A signature of non-Markovianty is thus when the mapping between $\rho_0$ and $\rho_t$ cannot be described by a quantum channel}\BibitemShut {NoStop}%
\bibitem [{\citenamefont {Breuer}\ \emph {et~al.}(2009)\citenamefont {Breuer}, \citenamefont {Laine},\ and\ \citenamefont {Piilo}}]{PhysRevLett.103.210401}%
  \BibitemOpen
  \bibfield  {author} {\bibinfo {author} {\bibfnamefont {H.-P.}\ \bibnamefont {Breuer}}, \bibinfo {author} {\bibfnamefont {E.-M.}\ \bibnamefont {Laine}},\ and\ \bibinfo {author} {\bibfnamefont {J.}~\bibnamefont {Piilo}},\ }\href {https://doi.org/10.1103/PhysRevLett.103.210401} {\bibfield  {journal} {\bibinfo  {journal} {Phys. Rev. Lett.}\ }\textbf {\bibinfo {volume} {103}},\ \bibinfo {pages} {210401} (\bibinfo {year} {2009})}\BibitemShut {NoStop}%
\bibitem [{\citenamefont {Milz}\ and\ \citenamefont {Modi}(2021)}]{milz21stochastic}%
  \BibitemOpen
  \bibfield  {author} {\bibinfo {author} {\bibfnamefont {S.}~\bibnamefont {Milz}}\ and\ \bibinfo {author} {\bibfnamefont {K.}~\bibnamefont {Modi}},\ }\href {https://doi.org/10.1103/PRXQuantum.2.030201} {\bibfield  {journal} {\bibinfo  {journal} {PRX Quantum}\ }\textbf {\bibinfo {volume} {2}},\ \bibinfo {pages} {030201} (\bibinfo {year} {2021})}\BibitemShut {NoStop}%
\bibitem [{\citenamefont {Oreshkov}\ \emph {et~al.}(2012)\citenamefont {Oreshkov}, \citenamefont {Costa},\ and\ \citenamefont {Brukner}}]{oreshkov12nocausalorder}%
  \BibitemOpen
  \bibfield  {author} {\bibinfo {author} {\bibfnamefont {O.}~\bibnamefont {Oreshkov}}, \bibinfo {author} {\bibfnamefont {F.}~\bibnamefont {Costa}},\ and\ \bibinfo {author} {\bibfnamefont {C.}~\bibnamefont {Brukner}},\ }\href {https://doi.org/10.1038/ncomms2076} {\bibfield  {journal} {\bibinfo  {journal} {Nature communications}\ }\textbf {\bibinfo {volume} {3}},\ \bibinfo {pages} {1} (\bibinfo {year} {2012})}\BibitemShut {NoStop}%
\bibitem [{Note1()}]{Note1}%
  \BibitemOpen
  \bibinfo {note} {Throughout this paper, we denote operators acting on Hilbert spaces of states by boldface letters.}\BibitemShut {Stop}%
\bibitem [{\citenamefont {Pisarczyk}\ \emph {et~al.}(2019)\citenamefont {Pisarczyk}, \citenamefont {Zhao}, \citenamefont {Ouyang}, \citenamefont {Vedral},\ and\ \citenamefont {Fitzsimons}}]{pisarczyk19pdocausallimit}%
  \BibitemOpen
  \bibfield  {author} {\bibinfo {author} {\bibfnamefont {R.}~\bibnamefont {Pisarczyk}}, \bibinfo {author} {\bibfnamefont {Z.}~\bibnamefont {Zhao}}, \bibinfo {author} {\bibfnamefont {Y.}~\bibnamefont {Ouyang}}, \bibinfo {author} {\bibfnamefont {V.}~\bibnamefont {Vedral}},\ and\ \bibinfo {author} {\bibfnamefont {J.~F.}\ \bibnamefont {Fitzsimons}},\ }\href {https://doi.org/10.1103/PhysRevLett.123.150502} {\bibfield  {journal} {\bibinfo  {journal} {Physical review letters}\ }\textbf {\bibinfo {volume} {123}},\ \bibinfo {pages} {150502} (\bibinfo {year} {2019})}\BibitemShut {NoStop}%
\bibitem [{\citenamefont {Utagi}(2021)}]{utagi2021pdomarkov}%
  \BibitemOpen
  \bibfield  {author} {\bibinfo {author} {\bibfnamefont {S.}~\bibnamefont {Utagi}},\ }\href {https://doi.org/10.1016/j.physleta.2020.126983} {\bibfield  {journal} {\bibinfo  {journal} {Physics Letters A}\ }\textbf {\bibinfo {volume} {386}},\ \bibinfo {pages} {126983} (\bibinfo {year} {2021})}\BibitemShut {NoStop}%
\bibitem [{\citenamefont {Jia}\ \emph {et~al.}(2023)\citenamefont {Jia}, \citenamefont {Song},\ and\ \citenamefont {Kaszlikowski}}]{jia23marginal}%
  \BibitemOpen
  \bibfield  {author} {\bibinfo {author} {\bibfnamefont {Z.}~\bibnamefont {Jia}}, \bibinfo {author} {\bibfnamefont {M.}~\bibnamefont {Song}},\ and\ \bibinfo {author} {\bibfnamefont {D.}~\bibnamefont {Kaszlikowski}},\ }\href {https://doi.org/10.1088/1367-2630/ad1416} {\bibfield  {journal} {\bibinfo  {journal} {New Journal of Physics}\ }\textbf {\bibinfo {volume} {25}},\ \bibinfo {pages} {123038} (\bibinfo {year} {2023})}\BibitemShut {NoStop}%
\bibitem [{\citenamefont {Fullwood}(2023)}]{fullwood2023pdo}%
  \BibitemOpen
  \bibfield  {author} {\bibinfo {author} {\bibfnamefont {J.}~\bibnamefont {Fullwood}},\ }\href@noop {} {\bibinfo {title} {Quantum dynamics as a pseudo-density matrix}} (\bibinfo {year} {2023}),\ \Eprint {https://arxiv.org/abs/2304.03954} {arXiv:2304.03954 [quant-ph]} \BibitemShut {NoStop}%
\bibitem [{\citenamefont {Marletto}\ \emph {et~al.}(2019)\citenamefont {Marletto}, \citenamefont {Vedral}, \citenamefont {Virzi}, \citenamefont {Rebufello}, \citenamefont {Avella}, \citenamefont {Piacentini}, \citenamefont {Gramegna}, \citenamefont {Degiovanni},\ and\ \citenamefont {Genovese}}]{marletto19pdoentanglement}%
  \BibitemOpen
  \bibfield  {author} {\bibinfo {author} {\bibfnamefont {C.}~\bibnamefont {Marletto}}, \bibinfo {author} {\bibfnamefont {V.}~\bibnamefont {Vedral}}, \bibinfo {author} {\bibfnamefont {S.}~\bibnamefont {Virzi}}, \bibinfo {author} {\bibfnamefont {E.}~\bibnamefont {Rebufello}}, \bibinfo {author} {\bibfnamefont {A.}~\bibnamefont {Avella}}, \bibinfo {author} {\bibfnamefont {F.}~\bibnamefont {Piacentini}}, \bibinfo {author} {\bibfnamefont {M.}~\bibnamefont {Gramegna}}, \bibinfo {author} {\bibfnamefont {I.~P.}\ \bibnamefont {Degiovanni}},\ and\ \bibinfo {author} {\bibfnamefont {M.}~\bibnamefont {Genovese}},\ }\href {https://doi.org/10.1038/s41467-018-08100-1} {\bibfield  {journal} {\bibinfo  {journal} {Nat Commun}\ }\textbf {\bibinfo {volume} {10}},\ \bibinfo {pages} {182} (\bibinfo {year} {2019})}\BibitemShut {NoStop}%
\bibitem [{\citenamefont {Marletto}\ \emph {et~al.}(2021)\citenamefont {Marletto}, \citenamefont {Vedral}, \citenamefont {Virz{\`\i}}, \citenamefont {Avella}, \citenamefont {Piacentini}, \citenamefont {Gramegna}, \citenamefont {Degiovanni},\ and\ \citenamefont {Genovese}}]{marletto21temporaltele}%
  \BibitemOpen
  \bibfield  {author} {\bibinfo {author} {\bibfnamefont {C.}~\bibnamefont {Marletto}}, \bibinfo {author} {\bibfnamefont {V.}~\bibnamefont {Vedral}}, \bibinfo {author} {\bibfnamefont {S.}~\bibnamefont {Virz{\`\i}}}, \bibinfo {author} {\bibfnamefont {A.}~\bibnamefont {Avella}}, \bibinfo {author} {\bibfnamefont {F.}~\bibnamefont {Piacentini}}, \bibinfo {author} {\bibfnamefont {M.}~\bibnamefont {Gramegna}}, \bibinfo {author} {\bibfnamefont {I.~P.}\ \bibnamefont {Degiovanni}},\ and\ \bibinfo {author} {\bibfnamefont {M.}~\bibnamefont {Genovese}},\ }\href {https://doi.org/10.1126/sciadv.abe4742} {\bibfield  {journal} {\bibinfo  {journal} {Science Advances}\ }\textbf {\bibinfo {volume} {7}},\ \bibinfo {pages} {eabe4742} (\bibinfo {year} {2021})}\BibitemShut {NoStop}%
\bibitem [{\citenamefont {Marletto}\ \emph {et~al.}(2020)\citenamefont {Marletto}, \citenamefont {Vedral}, \citenamefont {Virz{\`\i}}, \citenamefont {Rebufello}, \citenamefont {Avella}, \citenamefont {Piacentini}, \citenamefont {Gramegna}, \citenamefont {Degiovanni},\ and\ \citenamefont {Genovese}}]{marletto2020non}%
  \BibitemOpen
  \bibfield  {author} {\bibinfo {author} {\bibfnamefont {C.}~\bibnamefont {Marletto}}, \bibinfo {author} {\bibfnamefont {V.}~\bibnamefont {Vedral}}, \bibinfo {author} {\bibfnamefont {S.}~\bibnamefont {Virz{\`\i}}}, \bibinfo {author} {\bibfnamefont {E.}~\bibnamefont {Rebufello}}, \bibinfo {author} {\bibfnamefont {A.}~\bibnamefont {Avella}}, \bibinfo {author} {\bibfnamefont {F.}~\bibnamefont {Piacentini}}, \bibinfo {author} {\bibfnamefont {M.}~\bibnamefont {Gramegna}}, \bibinfo {author} {\bibfnamefont {I.~P.}\ \bibnamefont {Degiovanni}},\ and\ \bibinfo {author} {\bibfnamefont {M.}~\bibnamefont {Genovese}},\ }\href {https://doi.org/10.3390/e22020228} {\bibfield  {journal} {\bibinfo  {journal} {Entropy}\ }\textbf {\bibinfo {volume} {22}},\ \bibinfo {pages} {228} (\bibinfo {year} {2020})}\BibitemShut {NoStop}%
\bibitem [{\citenamefont {Liu}\ \emph {et~al.}(2024{\natexlab{a}})\citenamefont {Liu}, \citenamefont {Jia}, \citenamefont {Qiu}, \citenamefont {Li},\ and\ \citenamefont {Dahlsten}}]{liu2024unification}%
  \BibitemOpen
  \bibfield  {author} {\bibinfo {author} {\bibfnamefont {X.}~\bibnamefont {Liu}}, \bibinfo {author} {\bibfnamefont {Z.}~\bibnamefont {Jia}}, \bibinfo {author} {\bibfnamefont {Y.}~\bibnamefont {Qiu}}, \bibinfo {author} {\bibfnamefont {F.}~\bibnamefont {Li}},\ and\ \bibinfo {author} {\bibfnamefont {O.}~\bibnamefont {Dahlsten}},\ }\href {https://doi.org/10.1088/1367-2630/ad264c} {\bibfield  {journal} {\bibinfo  {journal} {New Journal of Physics}\ }\textbf {\bibinfo {volume} {26}},\ \bibinfo {pages} {033008} (\bibinfo {year} {2024}{\natexlab{a}})}\BibitemShut {NoStop}%
\bibitem [{\citenamefont {Liu}\ \emph {et~al.}(2023)\citenamefont {Liu}, \citenamefont {Qiu}, \citenamefont {Dahlsten},\ and\ \citenamefont {Vedral}}]{liu23lighttouch}%
  \BibitemOpen
  \bibfield  {author} {\bibinfo {author} {\bibfnamefont {X.}~\bibnamefont {Liu}}, \bibinfo {author} {\bibfnamefont {Y.}~\bibnamefont {Qiu}}, \bibinfo {author} {\bibfnamefont {O.}~\bibnamefont {Dahlsten}},\ and\ \bibinfo {author} {\bibfnamefont {V.}~\bibnamefont {Vedral}},\ }\href@noop {} {\bibinfo {title} {Quantum causal inference with extremely light touch}} (\bibinfo {year} {2023}),\ \Eprint {https://arxiv.org/abs/2303.10544} {arXiv:2303.10544 [quant-ph]} \BibitemShut {NoStop}%
\bibitem [{\citenamefont {Liu}\ \emph {et~al.}(2024{\natexlab{b}})\citenamefont {Liu}, \citenamefont {Chen},\ and\ \citenamefont {Dahlsten}}]{liu2024inferring}%
  \BibitemOpen
  \bibfield  {author} {\bibinfo {author} {\bibfnamefont {X.}~\bibnamefont {Liu}}, \bibinfo {author} {\bibfnamefont {Q.}~\bibnamefont {Chen}},\ and\ \bibinfo {author} {\bibfnamefont {O.}~\bibnamefont {Dahlsten}},\ }\href {https://doi.org/10.1103/PhysRevA.109.032219} {\bibfield  {journal} {\bibinfo  {journal} {Physical Review A}\ }\textbf {\bibinfo {volume} {109}},\ \bibinfo {pages} {032219} (\bibinfo {year} {2024}{\natexlab{b}})}\BibitemShut {NoStop}%
\bibitem [{\citenamefont {Choi}(1975)}]{choi75}%
  \BibitemOpen
  \bibfield  {author} {\bibinfo {author} {\bibfnamefont {M.-D.}\ \bibnamefont {Choi}},\ }\href {https://doi.org/https://doi.org/10.1016/0024-3795(75)90075-0} {\bibfield  {journal} {\bibinfo  {journal} {Linear Algebra and its Applications}\ }\textbf {\bibinfo {volume} {10}},\ \bibinfo {pages} {285} (\bibinfo {year} {1975})}\BibitemShut {NoStop}%
\bibitem [{\citenamefont {Horsman}\ \emph {et~al.}(2017)\citenamefont {Horsman}, \citenamefont {Heunen}, \citenamefont {Pusey}, \citenamefont {Barrett},\ and\ \citenamefont {Spekkens}}]{horsman17pdoandothers}%
  \BibitemOpen
  \bibfield  {author} {\bibinfo {author} {\bibfnamefont {D.}~\bibnamefont {Horsman}}, \bibinfo {author} {\bibfnamefont {C.}~\bibnamefont {Heunen}}, \bibinfo {author} {\bibfnamefont {M.~F.}\ \bibnamefont {Pusey}}, \bibinfo {author} {\bibfnamefont {J.}~\bibnamefont {Barrett}},\ and\ \bibinfo {author} {\bibfnamefont {R.~W.}\ \bibnamefont {Spekkens}},\ }\href {https://doi.org/10.1098/rspa.2017.0395} {\bibfield  {journal} {\bibinfo  {journal} {Proceedings of the Royal Society A: Mathematical, Physical and Engineering Sciences}\ }\textbf {\bibinfo {volume} {473}},\ \bibinfo {pages} {20170395} (\bibinfo {year} {2017})},\ \Eprint {https://arxiv.org/abs/1607.03637} {arXiv:1607.03637} \BibitemShut {NoStop}%
\bibitem [{Note2()}]{Note2}%
  \BibitemOpen
  \bibinfo {note} {See Supplemental Material.}\BibitemShut {Stop}%
\bibitem [{\citenamefont {Bu\v{z}ek}\ \emph {et~al.}(2000)\citenamefont {Bu\v{z}ek}, \citenamefont {Hillery},\ and\ \citenamefont {Werner}}]{bu2000universal}%
  \BibitemOpen
  \bibfield  {author} {\bibinfo {author} {\bibfnamefont {V.}~\bibnamefont {Bu\v{z}ek}}, \bibinfo {author} {\bibfnamefont {M.}~\bibnamefont {Hillery}},\ and\ \bibinfo {author} {\bibfnamefont {F.}~\bibnamefont {Werner}},\ }\href {https://doi.org/10.1080/09500340008244037} {\bibfield  {journal} {\bibinfo  {journal} {Journal of Modern Optics}\ }\textbf {\bibinfo {volume} {47}},\ \bibinfo {pages} {211} (\bibinfo {year} {2000})}\BibitemShut {NoStop}%
\bibitem [{\citenamefont {Ollivier}\ and\ \citenamefont {Zurek}(2001)}]{ollivier2001quantum}%
  \BibitemOpen
  \bibfield  {author} {\bibinfo {author} {\bibfnamefont {H.}~\bibnamefont {Ollivier}}\ and\ \bibinfo {author} {\bibfnamefont {W.~H.}\ \bibnamefont {Zurek}},\ }\href {https://doi.org/10.1103/PhysRevLett.88.017901} {\bibfield  {journal} {\bibinfo  {journal} {Physical review letters}\ }\textbf {\bibinfo {volume} {88}},\ \bibinfo {pages} {017901} (\bibinfo {year} {2001})}\BibitemShut {NoStop}%
\bibitem [{\citenamefont {Henderson}\ and\ \citenamefont {Vedral}(2001)}]{henderson2001classical}%
  \BibitemOpen
  \bibfield  {author} {\bibinfo {author} {\bibfnamefont {L.}~\bibnamefont {Henderson}}\ and\ \bibinfo {author} {\bibfnamefont {V.}~\bibnamefont {Vedral}},\ }\href {https://doi.org/10.1088/0305-4470/34/35/315} {\bibfield  {journal} {\bibinfo  {journal} {Journal of physics A: mathematical and general}\ }\textbf {\bibinfo {volume} {34}},\ \bibinfo {pages} {6899} (\bibinfo {year} {2001})}\BibitemShut {NoStop}%
\bibitem [{\citenamefont {Modi}\ \emph {et~al.}(2012{\natexlab{a}})\citenamefont {Modi}, \citenamefont {Brodutch}, \citenamefont {Cable}, \citenamefont {Paterek},\ and\ \citenamefont {Vedral}}]{modi2012classical}%
  \BibitemOpen
  \bibfield  {author} {\bibinfo {author} {\bibfnamefont {K.}~\bibnamefont {Modi}}, \bibinfo {author} {\bibfnamefont {A.}~\bibnamefont {Brodutch}}, \bibinfo {author} {\bibfnamefont {H.}~\bibnamefont {Cable}}, \bibinfo {author} {\bibfnamefont {T.}~\bibnamefont {Paterek}},\ and\ \bibinfo {author} {\bibfnamefont {V.}~\bibnamefont {Vedral}},\ }\href {https://doi.org/10.1103/RevModPhys.84.1655} {\bibfield  {journal} {\bibinfo  {journal} {Reviews of Modern Physics}\ }\textbf {\bibinfo {volume} {84}},\ \bibinfo {pages} {1655} (\bibinfo {year} {2012}{\natexlab{a}})}\BibitemShut {NoStop}%
\bibitem [{\citenamefont {Werner}(1989)}]{werner89}%
  \BibitemOpen
  \bibfield  {author} {\bibinfo {author} {\bibfnamefont {R.~F.}\ \bibnamefont {Werner}},\ }\href {https://doi.org/10.1103/PhysRevA.40.4277} {\bibfield  {journal} {\bibinfo  {journal} {Phys. Rev. A}\ }\textbf {\bibinfo {volume} {40}},\ \bibinfo {pages} {4277} (\bibinfo {year} {1989})},\ \bibinfo {note} {the Werner states are a well-studied class of quantum states on bipartite systems $AB$ with subsystems of equal dimensions. A Werner state $\brho_{AB}$ satisfies $\brho_{AB} = (\U\ten\U)\brho_{AB}(\U\+\ten\U\+)$ for all unitaries $\U$. For $2$-qubit case, these states can be represented parametrically as $\brho^\text{Werner}_{q} = \frac{q}{3}\P_s + (1-q)\P_a$, where $q\in[0,1]$ and $\P_s,\P_a$ are the projectors onto the symmetric and the antisymmetric space, respectively.}\BibitemShut {Stop}%
\bibitem [{\citenamefont {Vidal}\ and\ \citenamefont {Werner}(2002)}]{vidal02negativity}%
  \BibitemOpen
  \bibfield  {author} {\bibinfo {author} {\bibfnamefont {G.}~\bibnamefont {Vidal}}\ and\ \bibinfo {author} {\bibfnamefont {R.~F.}\ \bibnamefont {Werner}},\ }\href {https://doi.org/10.1103/PhysRevA.65.032314} {\bibfield  {journal} {\bibinfo  {journal} {Phys. Rev. A}\ }\textbf {\bibinfo {volume} {65}},\ \bibinfo {pages} {032314} (\bibinfo {year} {2002})}\BibitemShut {NoStop}%
\bibitem [{\citenamefont {Fullwood}\ and\ \citenamefont {Parzygnat}(2024)}]{fullwood2024operator}%
  \BibitemOpen
  \bibfield  {author} {\bibinfo {author} {\bibfnamefont {J.}~\bibnamefont {Fullwood}}\ and\ \bibinfo {author} {\bibfnamefont {A.~J.}\ \bibnamefont {Parzygnat}},\ }\bibfield  {journal} {\bibinfo  {journal} {arXiv preprint}\ }\href {https://doi.org/10.48550/arXiv.2405.17555} {10.48550/arXiv.2405.17555} (\bibinfo {year} {2024})\BibitemShut {NoStop}%
\bibitem [{\citenamefont {Zhang}\ \emph {et~al.}(2020{\natexlab{b}})\citenamefont {Zhang}, \citenamefont {Dahlsten},\ and\ \citenamefont {Vedral}}]{zhang20different}%
  \BibitemOpen
  \bibfield  {author} {\bibinfo {author} {\bibfnamefont {T.}~\bibnamefont {Zhang}}, \bibinfo {author} {\bibfnamefont {O.}~\bibnamefont {Dahlsten}},\ and\ \bibinfo {author} {\bibfnamefont {V.}~\bibnamefont {Vedral}},\ }\href {https://doi.org/10.1088/1367-2630/ab6b9f} {\bibfield  {journal} {\bibinfo  {journal} {New Journal of Physics}\ }\textbf {\bibinfo {volume} {22}},\ \bibinfo {pages} {023029} (\bibinfo {year} {2020}{\natexlab{b}})}\BibitemShut {NoStop}%
\bibitem [{\citenamefont {Dominy}\ and\ \citenamefont {Lidar}(2016)}]{dominy16beyondcp}%
  \BibitemOpen
  \bibfield  {author} {\bibinfo {author} {\bibfnamefont {J.~M.}\ \bibnamefont {Dominy}}\ and\ \bibinfo {author} {\bibfnamefont {D.~A.}\ \bibnamefont {Lidar}},\ }\href {https://doi.org/10.1007/s11128-015-1228-1} {\bibfield  {journal} {\bibinfo  {journal} {Quantum Information Processing}\ }\textbf {\bibinfo {volume} {15}},\ \bibinfo {pages} {1349} (\bibinfo {year} {2016})}\BibitemShut {NoStop}%
\bibitem [{\citenamefont {Pechukas}(1994)}]{pechukas94reduced}%
  \BibitemOpen
  \bibfield  {author} {\bibinfo {author} {\bibfnamefont {P.}~\bibnamefont {Pechukas}},\ }\href {https://doi.org/10.1103/PhysRevLett.73.1060} {\bibfield  {journal} {\bibinfo  {journal} {Phys. Rev. Lett.}\ }\textbf {\bibinfo {volume} {73}},\ \bibinfo {pages} {1060} (\bibinfo {year} {1994})}\BibitemShut {NoStop}%
\bibitem [{\citenamefont {Paz-Silva}\ \emph {et~al.}(2019)\citenamefont {Paz-Silva}, \citenamefont {Hall},\ and\ \citenamefont {Wiseman}}]{paz19dynamics}%
  \BibitemOpen
  \bibfield  {author} {\bibinfo {author} {\bibfnamefont {G.~A.}\ \bibnamefont {Paz-Silva}}, \bibinfo {author} {\bibfnamefont {M.~J.}\ \bibnamefont {Hall}},\ and\ \bibinfo {author} {\bibfnamefont {H.~M.}\ \bibnamefont {Wiseman}},\ }\href {https://doi.org/10.1103/PhysRevA.100.042120} {\bibfield  {journal} {\bibinfo  {journal} {Physical Review A}\ }\textbf {\bibinfo {volume} {100}},\ \bibinfo {pages} {042120} (\bibinfo {year} {2019})}\BibitemShut {NoStop}%
\bibitem [{\citenamefont {Lu}(2016)}]{lu2016structure}%
  \BibitemOpen
  \bibfield  {author} {\bibinfo {author} {\bibfnamefont {X.-M.}\ \bibnamefont {Lu}},\ }\href {https://doi.org/10.1103/PhysRevA.93.042332} {\bibfield  {journal} {\bibinfo  {journal} {Physical Review A}\ }\textbf {\bibinfo {volume} {93}},\ \bibinfo {pages} {042332} (\bibinfo {year} {2016})}\BibitemShut {NoStop}%
\bibitem [{\citenamefont {Modi}\ \emph {et~al.}(2012{\natexlab{b}})\citenamefont {Modi}, \citenamefont {Rodr\'{\i}guez-Rosario},\ and\ \citenamefont {Aspuru-Guzik}}]{modi12reduced}%
  \BibitemOpen
  \bibfield  {author} {\bibinfo {author} {\bibfnamefont {K.}~\bibnamefont {Modi}}, \bibinfo {author} {\bibfnamefont {C.~A.}\ \bibnamefont {Rodr\'{\i}guez-Rosario}},\ and\ \bibinfo {author} {\bibfnamefont {A.}~\bibnamefont {Aspuru-Guzik}},\ }\href {https://doi.org/10.1103/PhysRevA.86.064102} {\bibfield  {journal} {\bibinfo  {journal} {Phys. Rev. A}\ }\textbf {\bibinfo {volume} {86}},\ \bibinfo {pages} {064102} (\bibinfo {year} {2012}{\natexlab{b}})}\BibitemShut {NoStop}%
\bibitem [{\citenamefont {Buscemi}(2014)}]{buscemi14reduced}%
  \BibitemOpen
  \bibfield  {author} {\bibinfo {author} {\bibfnamefont {F.}~\bibnamefont {Buscemi}},\ }\href {https://doi.org/10.1103/PhysRevLett.113.140502} {\bibfield  {journal} {\bibinfo  {journal} {Physical review letters}\ }\textbf {\bibinfo {volume} {113}},\ \bibinfo {pages} {140502} (\bibinfo {year} {2014})}\BibitemShut {NoStop}%
\bibitem [{\citenamefont {Vacchini}\ and\ \citenamefont {Amato}(2016)}]{vacchini16reduced}%
  \BibitemOpen
  \bibfield  {author} {\bibinfo {author} {\bibfnamefont {B.}~\bibnamefont {Vacchini}}\ and\ \bibinfo {author} {\bibfnamefont {G.}~\bibnamefont {Amato}},\ }\href {https://doi.org/10.1038/srep37328} {\bibfield  {journal} {\bibinfo  {journal} {Scientific Reports}\ }\textbf {\bibinfo {volume} {6}},\ \bibinfo {pages} {37328} (\bibinfo {year} {2016})}\BibitemShut {NoStop}%
\bibitem [{\citenamefont {Parzygnat}(2021)}]{parzygnat21conditional}%
  \BibitemOpen
  \bibfield  {author} {\bibinfo {author} {\bibfnamefont {A.~J.}\ \bibnamefont {Parzygnat}},\ }\href {https://doi.org/10.4204/eptcs.343.1} {\bibfield  {journal} {\bibinfo  {journal} {Electronic Proceedings in Theoretical Computer Science}\ }\textbf {\bibinfo {volume} {343}},\ \bibinfo {pages} {1} (\bibinfo {year} {2021})}\BibitemShut {NoStop}%
\bibitem [{\citenamefont {Parzygnat}\ and\ \citenamefont {Russo}(2022)}]{parzygnat22non}%
  \BibitemOpen
  \bibfield  {author} {\bibinfo {author} {\bibfnamefont {A.~J.}\ \bibnamefont {Parzygnat}}\ and\ \bibinfo {author} {\bibfnamefont {B.~P.}\ \bibnamefont {Russo}},\ }\href {https://doi.org/10.1016/j.laa.2022.02.030} {\bibfield  {journal} {\bibinfo  {journal} {Linear Algebra and its Applications}\ }\textbf {\bibinfo {volume} {644}},\ \bibinfo {pages} {28} (\bibinfo {year} {2022})}\BibitemShut {NoStop}%
\bibitem [{\citenamefont {Parzygnat}\ and\ \citenamefont {Fullwood}(2023)}]{parzygnat23bayes}%
  \BibitemOpen
  \bibfield  {author} {\bibinfo {author} {\bibfnamefont {A.~J.}\ \bibnamefont {Parzygnat}}\ and\ \bibinfo {author} {\bibfnamefont {J.}~\bibnamefont {Fullwood}},\ }\href {https://doi.org/10.1103/PRXQuantum.4.020334} {\bibfield  {journal} {\bibinfo  {journal} {PRX Quantum}\ }\textbf {\bibinfo {volume} {4}},\ \bibinfo {pages} {020334} (\bibinfo {year} {2023})}\BibitemShut {NoStop}%
\bibitem [{\citenamefont {Parzygnat}\ and\ \citenamefont {Buscemi}(2023)}]{parzygnat23axioms}%
  \BibitemOpen
  \bibfield  {author} {\bibinfo {author} {\bibfnamefont {A.~J.}\ \bibnamefont {Parzygnat}}\ and\ \bibinfo {author} {\bibfnamefont {F.}~\bibnamefont {Buscemi}},\ }\href {https://doi.org/10.22331/q-2023-05-23-1013} {\bibfield  {journal} {\bibinfo  {journal} {Quantum}\ }\textbf {\bibinfo {volume} {7}},\ \bibinfo {pages} {1013} (\bibinfo {year} {2023})}\BibitemShut {NoStop}%
\bibitem [{\citenamefont {Regula}\ \emph {et~al.}(2021)\citenamefont {Regula}, \citenamefont {Takagi},\ and\ \citenamefont {Gu}}]{regula2021operational}%
  \BibitemOpen
  \bibfield  {author} {\bibinfo {author} {\bibfnamefont {B.}~\bibnamefont {Regula}}, \bibinfo {author} {\bibfnamefont {R.}~\bibnamefont {Takagi}},\ and\ \bibinfo {author} {\bibfnamefont {M.}~\bibnamefont {Gu}},\ }\href {https://doi.org/10.22331/q-2021-08-09-522} {\bibfield  {journal} {\bibinfo  {journal} {Quantum}\ }\textbf {\bibinfo {volume} {5}},\ \bibinfo {pages} {522} (\bibinfo {year} {2021})}\BibitemShut {NoStop}%
\bibitem [{\citenamefont {Carteret}\ \emph {et~al.}(2008)\citenamefont {Carteret}, \citenamefont {Terno},\ and\ \citenamefont {{\.Z}yczkowski}}]{carteret2008dynamics}%
  \BibitemOpen
  \bibfield  {author} {\bibinfo {author} {\bibfnamefont {H.~A.}\ \bibnamefont {Carteret}}, \bibinfo {author} {\bibfnamefont {D.~R.}\ \bibnamefont {Terno}},\ and\ \bibinfo {author} {\bibfnamefont {K.}~\bibnamefont {{\.Z}yczkowski}},\ }\href {https://doi.org/10.1103/PhysRevA.77.042113} {\bibfield  {journal} {\bibinfo  {journal} {Physical Review A}\ }\textbf {\bibinfo {volume} {77}},\ \bibinfo {pages} {042113} (\bibinfo {year} {2008})}\BibitemShut {NoStop}%
\bibitem [{\citenamefont {Wiseman}\ \emph {et~al.}(2007)\citenamefont {Wiseman}, \citenamefont {Jones},\ and\ \citenamefont {Doherty}}]{wiseman2007steering}%
  \BibitemOpen
  \bibfield  {author} {\bibinfo {author} {\bibfnamefont {H.~M.}\ \bibnamefont {Wiseman}}, \bibinfo {author} {\bibfnamefont {S.~J.}\ \bibnamefont {Jones}},\ and\ \bibinfo {author} {\bibfnamefont {A.~C.}\ \bibnamefont {Doherty}},\ }\href {https://doi.org/10.1103/PhysRevLett.98.140402} {\bibfield  {journal} {\bibinfo  {journal} {Physical review letters}\ }\textbf {\bibinfo {volume} {98}},\ \bibinfo {pages} {140402} (\bibinfo {year} {2007})}\BibitemShut {NoStop}%
\bibitem [{\citenamefont {Brodutch}\ \emph {et~al.}(2013)\citenamefont {Brodutch}, \citenamefont {Datta}, \citenamefont {Modi}, \citenamefont {Rivas},\ and\ \citenamefont {Rodr\'{\i}guez-Rosario}}]{brodutch13vanishingdiscord}%
  \BibitemOpen
  \bibfield  {author} {\bibinfo {author} {\bibfnamefont {A.}~\bibnamefont {Brodutch}}, \bibinfo {author} {\bibfnamefont {A.}~\bibnamefont {Datta}}, \bibinfo {author} {\bibfnamefont {K.}~\bibnamefont {Modi}}, \bibinfo {author} {\bibfnamefont {A.}~\bibnamefont {Rivas}},\ and\ \bibinfo {author} {\bibfnamefont {C.~A.}\ \bibnamefont {Rodr\'{\i}guez-Rosario}},\ }\href {https://doi.org/10.1103/PhysRevA.87.042301} {\bibfield  {journal} {\bibinfo  {journal} {Phys. Rev. A}\ }\textbf {\bibinfo {volume} {87}},\ \bibinfo {pages} {042301} (\bibinfo {year} {2013})}\BibitemShut {NoStop}%
\end{thebibliography}%

\onecolumngrid
\begin{appendix}
    \section{Supplemental Material}
    \subsection{Examples}\label{app1}

Recall that we focus on two-level bipartite quantum systems, and that we often use the terms spatial (respectively, temporal) for brevity, but they only mean compatible with a spatially (resp. temporally) distributed structure. $\calS$ denotes the set of all spatial pseudo density operators (PDOs) and $\calT$ denotes that of all temporal PDOs. Here, we list various examples for each subsets $\mathcal{S}\cap\mathcal{T}^c$, $\mathcal{S}^c\cap\mathcal{T}$, $\mathcal{S}\cap \mathcal{T}$, and $\mathcal{S}^c \cap \mathcal{T}^c$.

\begin{restatable}{exm}{exsat}
    Maximally entangled states are spatial but atemporal.
\end{restatable}

\begin{proof}
    Maximally entangled states are represented by density operators so it is trivial that they belong to $\mathcal{S}$. Observe that the marginals of maximally entangled states are the maximally mixed state. By using \cref{thm:atempent}, we observe nonzero atemporalities, which implies that maximally entangled states do not belong to $\mathcal{T}$.\\
    
    Hence, we conclude that maximally entangled states belong to $\mathcal{S}\cap\mathcal{T}^c$.
\end{proof}

\begin{restatable}{exm}{exast}
    Let $\R_{AB}$ be the PDO constructed from temporally distributed quantum systems, $A$ and $B$, with its associated map being the noiseless quantum channel $\id$ from $A$ to $B$. Then, the PDO $\R_{AB}$ is temporal but aspatial.
\end{restatable}

\begin{proof}
    It is trivial that the PDO $\R_{AB}$ belongs to $\mathcal{T}$, so we only need to show that $\R_{AB} \notin \mathcal{S}$. We will prove this by showing that $\R_{AB}$ has a negative eigenvalue. Following Zhao et al.~\cite{zhao18pdogeometry} and Horsman et al.~\cite{horsman17pdoandothers}, with an arbitrary initial state $\brho_A \equiv \dfrac{1}{2}\begin{bmatrix}
        1+p & c \\
        c^* & 1-p
    \end{bmatrix}$ where $p \in \mathbb{R}$ and $c \in \mathbb{C}$ and the noiseless quantum channel $\id$ such that $\id(\X) = \X$ for any linear operators $\X$, the PDO $\R_{AB}$ has its matrix form in the canonical basis as follows, 
    \begin{eqnarray}
        \R_{AB} &=& (\id \ten \id) \left\{ \brho_A \ten \frac{\I_{B}}{2}, \S \right\}\\
        &=& \frac{1}{4}
        \begin{bmatrix} 
            2+2p & c & c & 0 \\
            c^* & 0 & 2 & c \\
            c^* & 2 & 0 & c \\
            0 & c^* & c^* & 2-2p
        \end{bmatrix},
    \end{eqnarray}
    where $(\cdot)^*$ denotes the complex conjugate. Note that $\S$ denotes the swap operator, i.e., $\S = \frac{1}{2} \sum_{k=0}^3 \bsigma_k \ten \bsigma_k$, or equivalently $\S = \sum_{i,j=0}^1 \ketbra{i}{j}\ten\ketbra{j}{i}$. To gain the eigenvalues of $\R_{AB}$, we compute the characteristic equation $\det(\R_{AB}-\lambda\I)=0$:
    \begin{eqnarray}
        0 &=& \det(\R_{AB}-\lambda\I) \\
        &=& \det \frac{1}{4} \begin{bmatrix} 
            2+2p-4\lambda & c & c & 0 \\
            c^* & -4\lambda & 2 & c \\
            c^* & 2 & -4\lambda & c \\
            0 & c^* & c^* & 2-2p-4\lambda
        \end{bmatrix}\\
        &=&\det \frac{1}{4} \begin{bmatrix} 
            2+2p-4\lambda & c & c & 0 \\
            c^* & -4\lambda & 2 & c \\
            0 & 2+4\lambda & -4\lambda-2 & 0 \\
            0 & c^* & c^* & 2-2p-4\lambda
        \end{bmatrix}\\
        &=&\left(\lambda-\frac{1}{2}\right)\left(\lambda+\frac{1}{2}\right)\left(\lambda^2-\lambda-\frac{1}{4}\left(\abs{p}^2+\abs{c}^2-1\right)\right).
    \end{eqnarray} 
    We obtain four eigenvalues of $\R_{AB}$; $-\frac{1}{2},\frac{1}{2},\frac{1}{2}(1\pm\sqrt{|p|^2+|c|^2})$, and observe that it always has at least one negative eigenvalue $-\frac{1}{2}$. In fact, it has only one negative value because the positivity of the initial state leads $|p|^2+|c|^2 \le 1$ and thus $\frac{1}{2}\left(1\pm\sqrt{|p|^2+|c|^2}\right)$ are always nonnegative.\\
    
    Hence, we conclude that $\R_{AB}$ belongs to $\mathcal{S}^c\cap\mathcal{T}$ for any initial states.
\end{proof}

\begin{restatable}{exm}{exprod}
    Product states are both spatial and temporal.
\end{restatable}

\begin{proof}
    As product states are density operators, it is left to show that product states belongs to $\mathcal{T}$. Let $\brho \ten \btau$ be a product state for some density operators $\brho,\btau$. Now consider temporally distributed quantum systems of which the initial state is given by $\brho$ and the associated quantum channel is given by the constant channel $\E$ such that $\E(\cdot) = \tr[\cdot] \btau$. By using the property of temporal PDOs as in Ref.~\cite{zhao18pdogeometry,horsman17pdoandothers}, it is easily shown that the PDO $\R$ representing these temporally distributed quantum systems, coincides with $\brho \ten \btau$:
    \begin{eqnarray}
        \R &=& (\id \ten \E) \left\{ \brho \ten \frac{\I}{2}, \S \right\}\\
        &=& \frac{1}{4} \sum_k \left\{ \brho, \bsigma_k \right\} \ten \E(\bsigma_k) \\
        &=& \frac{1}{4} \sum_k \left\{ \brho, \bsigma_k \right\} \ten 2\delta_{k,0}\btau \\
        &=& \brho \ten \btau. 
    \end{eqnarray}

Hence, we conclude that $\brho \ten \btau$ belong to $\mathcal{S}\cap\mathcal{T}$.
\end{proof}

\begin{restatable}{exm}{exdiscord}
    Let $\R_{AB}$ be a separable quantum state such that $\R_{AB} = \sum_{i=0}^{1} p_i \proj{e_i} \ten \btau_i$, where $\sum_i p_i =1$, $p_i \ge 0$, and $\{\ket{e_i}\}_i$ is an orthonormal basis of Hilbert space of the system $A$. Then, $\R_{AB}$ is both spatial and temporal. \label{ex:discord}
\end{restatable}

\begin{proof}
    Consider a completely positive and trace-preserving (CPTP) map $\E: \brho \mapsto \sum_{i=0}^{1}\tr\left[\brho \proj{e_i}\right] \btau_i$. Now we show that $\E$ is a pseudo-channel compatible with $\R_{AB}= \sum_{i=0}^{1} p_i \proj{e_i} \ten \btau_i$. To this end, it suffices to show that $\E$ satisfies 
    \begin{equation}
        \R_{AB} = \frac{1}{4} \sum_k\left\{ \brho_A, \bsigma_k \right\} \ten \E(\bsigma_k), 
    \end{equation} 
    where $\brho_A \equiv \tr_B \R_{AB}$, i.e., $ \brho_A = \sum_i p_i \proj{e_i}$.

    Indeed,
    \begin{eqnarray}
        \frac{1}{4} \sum_k\left\{ \brho_A, \bsigma_k \right\} \ten \E(\bsigma_k) &=& \frac{1}{4} \sum_k\left\{ \brho_A, \bsigma_k \right\} \ten \sum_i\tr\bigl[\bsigma_k \proj{e_i}\bigr] \btau_i\\
        &=& \frac{1}{2} \sum_{i}\left\{ \brho_A, \biggl(\frac{1}{2}\sum_{k}\tr\bigl[\bsigma_k \proj{e_i}\bigr]\bsigma_k\biggr) \right\} \ten \btau_i\\
        &=& \frac{1}{2} \sum_{i}\bigl\{ \brho_A,  \proj{e_i} \bigr\} \ten \btau_i\\
        &=& \frac{1}{2} \sum_{i}\left\{ \sum_j p_j \proj{e_j},  \proj{e_i} \right\} \ten \btau_i\\
        &=& \frac{1}{2} \sum_{i,j} p_j \delta_{i,j}\left( \ketbra{e_j}{e_i}+ \ketbra{e_i}{e_j} \right) \ten \btau_i\\
        &=&\frac{1}{2} \sum_{i} 2p_i \proj{e_i} \ten \btau_i\\
        &=& \sum_{i} p_i \proj{e_i} \ten \btau_i = \R_{AB}.
    \end{eqnarray}
    The third equation holds due to the fact that Pauli operators and the identity operator form a complete basis for all linear operators and so $\proj{e_i} = \frac{1}{2}\sum_{k}\tr[\bsigma_k \proj{e_i}]\bsigma_k, \forall i$.\\
    
    Hence, we conclude that any separable states of zero discord \cite{ollivier2001quantum,henderson2001classical,modi2012classical} belong to $\mathcal{S}\cap\mathcal{T}$.
\end{proof}
It is noteworthy that \cref{ex:discord} is consistent with the known result in Ref.\ \cite{brodutch13vanishingdiscord}. Imagine that the initial state between the system of interest $A$ and an environmental system $E$ is prepared in a zero discord state as above, the composite system after Alice's observation is subject to the swap operation, and Bob receive the system $E$ instead of the system $A$. Then the collective state of the composite system $AB$ coincides with the initial zero discord state. In such manner, the collective state of the composite system $AB$ can be interpreted as that of the composite system $AE$. In proof, we showed that the zero discord of the initial state induces the reduced dynamic (as represented by a pseudo-channel) which is always represented by a CPTP map. This provides a comprehensive understanding how vanishing quantum discord in the initial correlation between the system and the environmental system is related to complete positivity of reduced dynamics, by constructing the reduced dynamics explicitly. 

Example.~\ref{ex:discord} showed that atemporality implies quantum discord, and we will see later with Example.~\ref{mex:atem_ent} that the converse is false.

\begin{restatable}{exm}{excounter}
    Consider two forward-temporal processes : (i) the initial system is prepared in the state $\ket{0}$ in the Z-basis and the intervening channel is given by the noiseless channel $\id$. We denote its PDO by $\R_{\ket{0},\id}$. (ii) the initial system is prepared in the state $\ket{+} \equiv \frac{1}{\sqrt{2}}(\ket{0}+\ket{1})$ and the intervening channel is given by a noise quantum channel $\N$ such that $\N(\brho) = \tr[\brho] \frac{\I}{2}$. We denote its PDO by $\R_{\ket{+},\N}$. Then, a probabilistic mixture of these two temporal PDOs with probability $p=\frac{1}{2}$, i.e. $\R\equiv\frac{1}{2}(\R_{\ket{0},\id}+\R_{\ket{+},\N})$, is no longer temporal as we observe from numerical data that $f(\R) \approx 0.0785$. This is a counter example for the set $\mathcal{T}$ being convex. Note that $\R$ is also not spatial either, as it has a negative eigenvalue. Thus, mixing two forward-temporal (or two reverse-temporal) PDOs can already result in a PDO in $\mathcal{S}^c \cap \mathcal{T}^c$.\label{ex:counter}
\end{restatable}

\begin{restatable}{exm}{exwer}
    Given a Werner state $\R_{AB} \in \mathcal{S}$, $\R_{AB}$ is separable if and only if $\R_{AB}$ is temporal.
    \label{mex:Wer}
\end{restatable}

	\begin{proof}
		Werner states are one of the well-studied classes of quantum states, and their mathematical definition was explicitly given in Ref.~\cite{werner89}. A two-qubit Werner state $\R_q$ is parameterized by $q \in [0,1]$ as 
		\begin{equation}
			\R_q = \frac{q}{3}\P_s + (1-q)\P_a, \label{eq:werner}
		\end{equation} 
		where $\P_s,\P_a$ are the projectors onto the symmetric space, the antisymmetric space, respectively. That is, 
		\begin{eqnarray}
			\P_s &=& \frac{1}{2}\bigl(\I\ten\I + \S\bigr), \\
			\P_a &=& \frac{1}{2}\bigl(\I\ten\I - \S\bigr),
		\end{eqnarray}
		where $\S=\dfrac{1}{2}\sum\limits_{i=0}^{3} \bsigma_i \ten \bsigma_i$.
		As marginals of a Werner state are always the maximally mixed state, we observe that $E_\text{neg}(\R_q) = f(\R_q)$ for all Werner states by using \cref{thm:atempent}.\\
  
        Hence, we conclude that  a Werner state $\R_q$ is separable if and only if $\R_q\in \mathcal{T}$. 
	\end{proof}

\cref{fig:werner} illustrates the identity of entanglement and atemporality in the class of Werner states. We observe that not only the presence or absence of atemporality and atemporality coincide, but their quantities also coincide. 
\begin{figure}[t]
    \centering
    \includegraphics[scale=0.6]{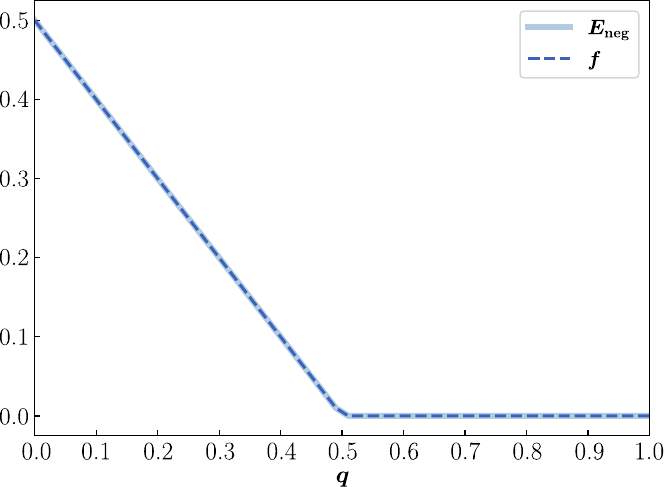}
    \caption{The graph of the entanglement negativity $E_\text{neg}$ of Werner states $\brho_q^\text{Werner}$ parameterized by $q \in [0,1]$ against the parameter $q$ is plotted by the light blue solid line, and the graph of the \at, $f$ is plotted by the blue dashed line. }
    \label{fig:werner}
\end{figure}

\begin{restatable}{exm}{exunitary}
    Let $\R_{AB}$ be a forward temporal PDO,
    and $\brho_A$ the associated initial state.
    If a unitary channel $\mathcal{U}$ is a forward pseudo-channel for $\R_{AB}$,
    then $\R_{AB}$ is reverse temporal with the adjoint channel $\mathcal{U}^\dagger$
    as its reverse pseudo-channel and $\mathcal{U}(\brho_A)$ as its associated initial state.
\end{restatable}

    \begin{proof}
        Since $\R_{AB}$ is a forward temporal PDO with its first marginal being $\brho_A$ and its pseudo-channel being $\mathcal{U}$, following Ref.~\cite{zhao18pdogeometry,horsman17pdoandothers}, we have
        \begin{equation}
        \R_{AB} = \left(\id \ten \mathcal{U}\right) \left\{ \brho_A \ten \frac{\I}{2}, \S \right\}. 
    \end{equation}
        
        Now, we consider its swapped PDO $\R'_{BA}$, i.e., $\R'=\S \R \S^\dagger$. We also note that $\left\{ \brho_A \ten \frac{\I}{2}, \S \right\} = \left\{ \frac{\I}{2} \ten \brho_A, \S \right\}$:
        \begin{eqnarray}
            \left\{ \frac{\I}{2} \ten \brho_A, \S \right\} &=& \left(\frac{\I}{2} \ten \brho_A\right) \S + \S\left( \frac{\I}{2} \ten \brho_A \right)\\
            &=& \S\S \left(\frac{\I}{2} \ten \brho_A\right) \S + \S\left( \frac{\I}{2} \ten \brho_A \right)\S\S\\
            &=& \S \left(\brho_A \ten\frac{\I}{2}\right) + \left( \brho_A \ten \frac{\I}{2} \right)\S\\
            &=& \left\{ \brho_A \ten \frac{\I}{2}, \S \right\},
        \end{eqnarray} 
        where we used $\S\S = \I$.
        The swapped PDO $\R'$ can be written as $\R' = \left(\mathcal{U} \ten \id \right) \left\{ \frac{\I}{2} \ten \brho_A, \S \right\}$ by switching $A$ and $B$ as the action of the swap operation on $\R$. By using the above observation, we have $\R' = \left(\mathcal{U} \ten \id \right) \left\{ \brho_A \ten \frac{\I}{2}, \S \right\}$. Thus, we show that $\R'$ can be expressed as follows:
        \begin{eqnarray}
            \R' &=& \left(\mathcal{U} \ten \id \right) \left\{ \brho_A \ten \frac{\I}{2}, \S \right\}\\
                &=& \left\{ \left(\mathcal{U} \ten \id \right) \left(\brho_A \ten \frac{\I}{2}\right), \left(\mathcal{U} \ten \id \right) \left(\S\right) \right\}\\
                &=& \left\{ \mathcal{U}\left(\brho_A\right) \ten \frac{\I}{2}, \left(\id \ten \mathcal{U}^\dagger \right) \left(\S\right) \right\}\\
                &=& \left(\id \ten \mathcal{U}^\dagger \right) \left\{\mathcal{U}\left(\brho_A\right) \ten \frac{\I}{2}, \S \right\}.
        \end{eqnarray}
        The third equation holds due to the fact that $(\I \ten \mathbf{M})\ket{\Phi^+} = (\mathbf{M}^T \ten \I)\ket{\Phi^+}$ for any linear operator $\mathbf{M}$, where $\ket{\Phi^+}$ denotes the maximally entangled state and $(\cdot)^{T}$ denotes the transpose map.\\
        
        Hence we conclude that $\R_{AB}$ is reverse temporal with the adjoint channel $\mathcal{U}^\dagger$ as its reverse pseudo-channel and $\mathcal{U}(\brho)$ as its associated initial state.
    \end{proof}
    
\begin{restatable}{exm}{exasym}
    Let $\R_{AB}$ be a PDO constructed from the following temporally distributed quantum systems: a quantum system $A$ prepared in a state $\brho_{A} = \frac{1}{4} \proj{+} + \frac{3}{4}\proj{-}$ evolves to a system $B$ via a quantum channel $\E_{p}: \brho \mapsto p\brho+(1-p)\mathbf{Z}\brho \mathbf{Z}^{\dagger}$ for $p \in (0,1)$. By construction $\R_{AB}$ is forward temporal, however it is not reverse temporal as we have verified by explicit computation of the reverse atemporality (see \cref{fig:fvsr}). \label{exm:asym}
\end{restatable}

\begin{figure}[t]
\centering
\includegraphics[width=0.8\textwidth]{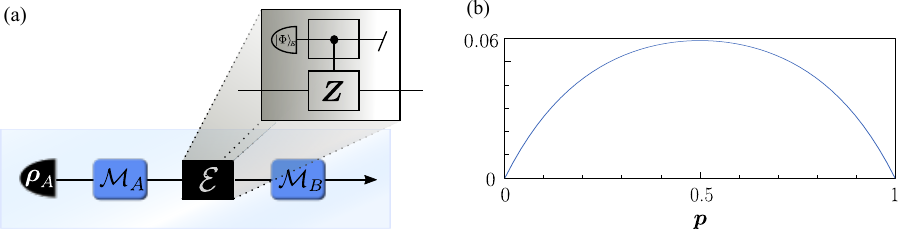}
    \caption{\protect (a) Consider a quantum circuit where a qubit $A$ initially in state $\brho_A = \frac{1}{4} \proj{+} + \frac{3}{4}\proj{-}$ is subject to a quantum channel $\E(\brho):= p\brho + (1-p)\Z\brho\Z\+$, the output of which we label qubit $B$. One can implement this temporal process by a controlled-$Z$ gate with an ancillary system $E$, prepared in a state $\ket{\Phi}_E = \sqrt{p} \ket{0} + \sqrt{1-p}\ket{1}$. Clearly, the correlations between $A$ and $B$ admit a temporal distribution mechanism from $A$ to $B$. (b) However, its reverse atemporality, represented on y-axis, is non-zero and thus has no causal interpretation from $B$ to $A$.}
    \label{fig:fvsr}
\end{figure}

    \cref{fig:swapped} shows the asymmetry of atemporality on randomly sampled density operators. The discrepancy between forward atemporality $\overrightarrow{f}$ and reverse atemporality $\overleftarrow{f}$ is plotted outside of the $x=y$ line. In particular, those on the y-axis indicate the existence of temporal PDOs from the point of view that the system $A$ occurs before the system $B$ but they are not temporal in the reverse point of view. This shows that the forward \at\ does not imply the reverse \at, 
    and vice versa.
\begin{figure}[t]
    \centering
    \includegraphics[scale=0.55]{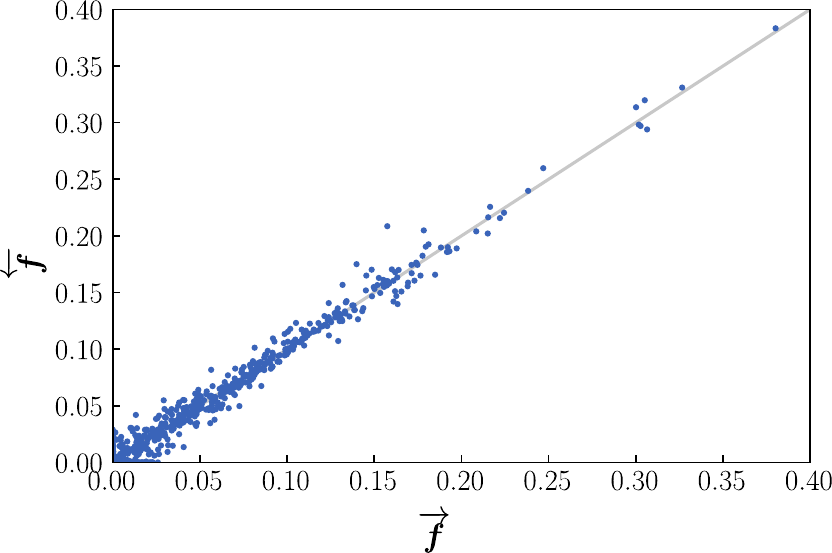}
    \caption{\protect The graph of $\protect\overrightarrow{f}$ against $\protect\overleftarrow{f}$ for 1000 uniformly random PDO $\R \in \mathcal{S}$. The grey line represents the $x=y$ line.}
    \label{fig:swapped}
\end{figure}
    
\begin{restatable}{exm}{exbiasedwer}
    Let $\R_{AB}$ be a PDO given by a biased Werner state $\R = \frac{1}{2} (\brho^\text{Werner}_{q=0.25} + \proj{00})$. 
    Then, it has different values for atemporality $f$ and entanglement negativity $E_\text{neg}$ \cite{vidal02negativity}. More precisely, $f(\R)=0$ and $E_\text{neg}(\R) \approx 0.0087$ (see \cref{fig:bwerner}). \label{mex:atem_ent}
\end{restatable}

\begin{figure}[t]
    \centering
    \includegraphics[scale=0.8]{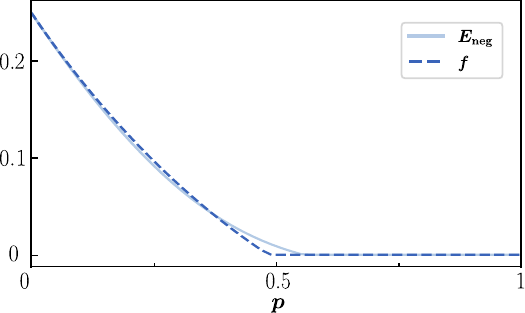}
    \caption{\protect The graphs of entanglement negativity $E_\text{neg}$ and atemporality $f$ of the parameterized family of biased Werner states $\R_{p}^\text{B.W.} \equiv (1-p)\brho^\text{Werner}_{q=0.25} + p\proj{00}$ against $p$. }
    \label{fig:bwerner}
\end{figure}
    A scatter plot (see \cref{fig:mixed}) of atemporality vs. entanglement negativity for 1000 randomly generated density operators suggest that two concepts are heavily correlated but not the same -- with atemporality looking to be a stronger notion of non-classical correlations than entanglement.
    \begin{figure}[ht]
        \includegraphics[scale=0.55]{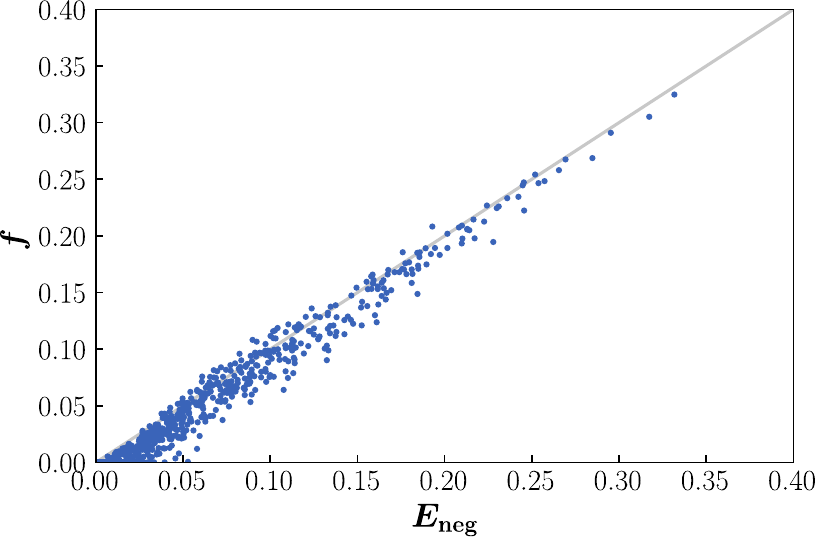}
        \caption{The graph of the entanglement negativity $E_{\text{neg}}$ against the \at\ $f$ for 1000 uniformly random PDO $\R \in \mathcal{S}$. The grey line represents the $x=y$ line.}
        \label{fig:mixed}
    \end{figure}

    \clearpage
\pagebreak
\subsection{Proofs of technical results}\label{app3}

\noindent\textbf{Temporal PDOs.} Imagine that Alice and Bob have qubit systems, $A$ and $B$, respectively, and they had been temporally distributed where measurements on the system $A$ occurs before measurements on the system $B$, namely, the system $A$ is initially prepared in some state $\brho_A$ and the system $B$ is the result of acting some physical time evolution $\E$ on the post-measurement state of the qubit $A$. Each round, Alice and Bob implement Pauli measurements in their choices $a,b$ of basis. In this scenario, the joint probability $\Pr(xy|ab)$ to obtain the measurement outcome $x,y$ is given by
\begin{eqnarray}
    \Pr(xy|ab) = \tr\bigl[ \E\left(\bPi_{x|a}\brho_A\bPi_{x|a}\right) \bPi_{y|b} \bigr],
    \label{eq:app_temporal_pr}
\end{eqnarray}
where $\bPi_{x|a},\bPi_{y|b}$ denotes the projection operator to $x,y$ of the Pauli operator $\bsigma_a,\bsigma_b$, respectively. 

Then it is led to the known results in Ref.~\cite{zhao18pdogeometry,horsman17pdoandothers}, that is, it was shown that the corresponding PDO $\R_{AB}$ can be written by its associated initial state $\brho_A$ of the system $A$ and its associated physical time evolution $\E:A \to B$ as 
	\begin{equation}
        \R_{AB} = \left(\id \ten \E\right) \K, \qquad \mathrm{where} \,\, \K = \left\{ \brho_A \ten \frac{\I}{2}, \swap \right\}.  \label{eq:temporal_app}
    \end{equation} 
Here $\swap$ denotes the swap operator, i.e., $\swap \equiv \frac{1}{2}\sum_{k=0}^3 \bsigma_{k} \ten \bsigma_{k}$, and the curly bracket denotes the anti-commutator, i.e., $\{\mathbf{A},\mathbf{B}\} = \mathbf{A}\mathbf{B}+\mathbf{B}\mathbf{A}$. For notational shorthand, we denote $\left\{ \brho_A \ten \frac{\I}{2}, \swap \right\}$ by $\K$ in what follows. Note that $\brho_A$ is a density operator that $\brho_A = \tr_B \R_{AB}$ and $\E$ is a completely positive (CP) and trace-preserving (TP) map. 

Indeed, from Eq.~(\ref{eq:app_temporal_pr}), we have expectation values $\expval{\bsigma_{a},\bsigma_{b}}$: 
\begin{eqnarray}
    \expval{\bsigma_{a},\bsigma_{b}} &\equiv& \sum_{xy} xy  \Pr(xy|ab) \label{eq:exp1}\\
        &=& \sum_{xy} xy  \tr\left[ \E(\bPi_{x|a}\brho_A\bPi_{x|a}) \bPi_{y|b} \right] \label{eq:exp2}\\
        &=& \tr\bigl[ \E\left(\sum_{x} x \bPi_{x|a}\brho_A\bPi_{x|a}\right)  \left(\sum_y y\bPi_{y|b}\right) \bigr]\\
        &=& \tr\left[ \E\left(\frac{1}{2}\left\{\brho_A, \bsigma_a\right\}\right) \bsigma_{b} \right].
\end{eqnarray}
The last equation holds due to the fact that $\bsigma_a = \sum_{x} x \bPi_{x|a}$ and $\bPi_{x|a} = \frac{1}{2}\left( \I + x\bsigma_a \right)$. By using $\bPi_{x|a} = \frac{1}{2}\left( \I + x\bsigma_a \right)$, we indeed have $\sum_{x=-1,+1} x \bPi_{x|a}\brho_A\bPi_{x|a} = \frac{1}{2}(\brho_A \bsigma_a + \bsigma_a\brho_A)$ (recall that the curly bracket here represents the anti-commutator). Then it is led to \cref{eq:temporal_app} by virtue of the cyclic property of the trace and the fact that $\frac{1}{2}\sum_k \tr\left[ \left\{\brho_A, \bsigma_a\right\}\bsigma_{k} \right]\bsigma_k = \left\{\brho_A, \bsigma_a\right\}$: 
\begin{eqnarray}
    \R_{AB} &\equiv& \frac{1}{4}\sum_{ab}\expval{\bsigma_{a},\bsigma_{b}}\bsigma_a\ten\bsigma_b \\
        &=& \frac{1}{4}\sum_{ab}\tr\biggl[ \E\left(\frac{1}{2}\left\{\brho_A, \bsigma_a\right\}\right) \bsigma_{b} \biggr]\bsigma_a\ten\bsigma_b \\
        &=& \frac{1}{8}\sum_{ab}\frac{1}{2}\sum_k\tr\bigl[ \left\{\brho_A, \bsigma_a\right\} \bsigma_{k} \bigr]\tr\bigl[\E\left(\bsigma_k\right)\bsigma_b\bigr]\bsigma_a\ten\bsigma_b \\
        &=& \frac{1}{16}\sum_{ab}\sum_k\tr\bigl[ \left\{\brho_A, \bsigma_k\right\} \bsigma_{a} \bigr]\tr\bigl[\E\left(\bsigma_k\right)\bsigma_b\bigr]\bsigma_a\ten\bsigma_b \\
        &=& \frac{1}{16}\sum_k\sum_{ab}\tr\Bigl[ \bigl(\left\{\brho_A, \bsigma_k\right\} \ten \E\left(\bsigma_k\right) \bigr)\bsigma_{a}\ten\bsigma_b\Bigr]\bsigma_a\ten\bsigma_b \\
        &=& \frac{1}{4}\sum_k \left\{\brho_A, \bsigma_k\right\} \ten \E\left(\bsigma_k\right).
\end{eqnarray}
The original derivation of Eq.~(\ref{eq:temporal_app}) was provided in Appendix A of the literature \cite{horsman17pdoandothers}. 

Or equivalently, $\R_{AB}$ can be written as 
\begin{eqnarray}
        \R_{AB} &=&\frac{1}{4}\sum_{ab}\tr\left[ \frac{1}{2}\E\bigl(\left\{\brho_A, \bsigma_a\right\}\bigr) \bsigma_{b} \right]\bsigma_a\ten\bsigma_b \\
        &=& \frac{1}{4}\sum_{a}\bsigma_a \ten \left(\frac{1}{2}\sum_{b}\tr\Bigl[ \E\bigl(\left\{\brho_A, \bsigma_a\right\}\bigr) \bsigma_{b} \Bigr]\bsigma_b\right) \\
        &=& \frac{1}{4}\sum_{a} \bsigma_a \ten \E\bigl(\left\{\brho_A, \bsigma_a\right\}\bigr)\\
        &=& \left(\frac{1}{4}\sum_{a\ne0} \bsigma_a \ten \E\bigl(\left\{\brho_A, \bsigma_a\right\}\bigr)\right) + \left(\frac{\I}{2} \ten \E\left(\brho_A\right)\right)\\
        &=& \left(\frac{1}{4}\sum_{a\ne0} \bsigma_a \ten \E\bigl( \bsigma_a + \tr[\brho_A\bsigma_a]\I \bigr)\right) + \left(\frac{\I}{2} \ten \E\left(\brho_A\right)\right) \\
        &=& \left(\frac{1}{4}\sum_{a\ne0} \bsigma_a \ten \E\left( \bsigma_a \right) + \frac{1}{4}\sum_{a\ne0} \tr[\brho_A\bsigma_a] \bsigma_a \ten \E\left( \I \right) \right) + \left(\frac{\I}{2} \ten \E\left(\brho_A\right)\right)\\
        &=& \left(\frac{1}{4}\sum_{a} \bsigma_a \ten \E\left( \bsigma_a \right) \right) + \left( \brho_A - \frac{\I}{2} \right) \ten \E\left( \frac{\I}{2} \right) + \frac{\I}{2} \ten \E\left(\brho_A - \frac{\I}{2} \right) \label{eq:49}. 
\end{eqnarray}
In the fifth equation, we used the following:
\begin{eqnarray}
    \forall a \in \left\{1,2,3\right\}, \quad \left\{\brho_A, \bsigma_a\right\} &=& \frac{1}{2} \sum_k \tr\left[\brho_A \bsigma_k\right] \left\{\bsigma_k,\bsigma_a\right\},\\
    &=& \bsigma_a + \frac{1}{2} \sum_{k\ne0} \tr\left[\brho_A \bsigma_k\right] \left\{\bsigma_k,\bsigma_a\right\},\\
    &=& \bsigma_a + \frac{1}{2} \sum_{k\ne0} \tr\left[\brho_A \bsigma_k\right] \left( 2\delta_{k,a}\I \right),\\
    &=& \bsigma_a + \tr[\brho_A\bsigma_a]\I,
\end{eqnarray}
where $\delta_{i,j}$ denotes the Kronecker delta.

In summary, we have three different versions of representing a PDO, which we will make use of: 
\begin{eqnarray}
    \bullet \quad \R_{AB} &=& \frac{1}{4}\sum_{a} \bsigma_a \ten \E\bigl(\left\{\brho_A, \bsigma_a\right\}\bigr), \label{eq:app_ver1}\\
    \bullet \quad \R_{AB} &=& \left(\frac{1}{4}\sum_{a\ne0} \bsigma_a \ten \E\bigl( \bsigma_a + \tr[\brho_A\bsigma_a]\I \bigr)\right) + \frac{\I}{2} \ten \E\left(\brho_A\right), \label{eq:app_ver2}\\
    \bullet \quad \R_{AB} &=& \left(\frac{1}{4}\sum_{a} \bsigma_a \ten \E\left( \bsigma_a \right) \right) + \left( \brho_A - \frac{\I}{2} \right) \ten \E\left( \frac{\I}{2} \right) + \frac{\I}{2} \ten \E\left(\brho_A - \frac{\I}{2} \right). \label{eq:app_ver3}
\end{eqnarray}
We will use different representations specific to respective contexts.\\

\noindent\textbf{Existence of pseudo-channels.} Result 1 in the main text states that $f>0$ is a necessary and sufficient condition for atemporality. By definition, $f > 0$ implies that $\R_{AB}$ is atemporal, that is, no physical channel is compatible with $\R_{AB}$. Meanwhile, $\R_{AB}$ being atemporal implying $f > 0$ requires each $\R_{AB}$ to have at least one compatible forward and reverse pseudo-channel. This is guaranteed by the following lemma: 
  
\begin{restatable}{lem}{exist}
    Any $2$-qubit PDO $\R_{AB}$ has at least one compatible forward (reverse) pseudo-channel. 
    \label{lem:exist}
\end{restatable}

\begin{proof}
Recall that a forward pseudo-channel of a 2-qubit PDO $\R_{AB}$ is defined as a linear trace preserving map $\Lambda$ that satisfies 
	\begin{equation}
        \R_{AB} = \left(\id_A \ten \Lambda\right) \K, \qquad \mathrm{where} \,\, \K = \left\{ \brho_A \ten \frac{\I}{2}, \swap \right\} \,\,\mathrm{and} \,\,\brho_A=\tr_B \R_{AB}. \label{eq:temporal_1}
    \end{equation} 
Observe that if there exists such map satisfying above equation, it follows that  
    \begin{eqnarray}
        \tr_A\bigl[\left(\bsigma_i \ten \I\right)\R_{AB}\bigr] = \Lambda\bigl(\left\{ \brho_A, \bsigma_i \right\}\bigr).
    \end{eqnarray}
Let $\R_{AB}$ be any PDO such that $\R_{AB} = \frac{1}{4} \sum_{i,j} r_{ij} \bsigma_i \ten \bsigma_j$ for $r_{ij} \in \mathbb{R}$ where $r_{00}=1$. Based on the above observation, we can construct a linear (Hermiticity-preserving) map with scalar multiplication as
    \begin{eqnarray}
		\Lambda\bigl(\left\{ \brho_A, \bsigma_i \right\}\bigr) &\equiv& \tr_A\bigl[\left(\bsigma_i \ten \I\right)\R_{AB}\bigr]\\
    &=& \sum_k r_{ik} \bsigma_k, \label{eq:app_construct_pc} 
	\end{eqnarray}
where $\brho_A$ denotes the first marginal of $\R_{AB}$, i.e., $\brho_A \equiv \tr_B \R_{AB}$.\\
 
 We will prove that, (1) $\Lambda$ indeed satisfies that $\R_{AB} = (\id \ten \Lambda) \K$, (2) $\Lambda$ is well-defined, and that (3) $\Lambda$ is a trace-preserving map. \\
    
	(1) Firstly, we show that the map $\Lambda$ satisfies that $\R_{AB} = (\id \ten \Lambda) \K$. As Pauli operators and the identity operator span the space of Hermitian operators and $\R_{AB}$ is a Hermitian operator, we can instead show that $\tr[\bsigma_i \ten \bsigma_j \R_{AB}] = \tr\left[ \bsigma_i \ten \bsigma_j (\id \ten \Lambda) \K\right]$ for any $\bsigma_i,\bsigma_j$. Thus we have
	\begin{eqnarray}
		\tr\left[ \bsigma_i \ten \bsigma_j (\id \ten \Lambda) \K\right] &=& \frac{1}{4} \sum_k \tr\bigl[\bsigma_i \left\{ \brho_A, \bsigma_k \right\}\bigr] \tr\bigl[ \bsigma_j \Lambda\left(\bsigma_k\right)\bigr] \\
		&=& \frac{1}{4} \sum_k \tr\bigl[\left\{ \bsigma_i, \brho_A \right\}\bsigma_k\bigr] \tr\bigl[ \bsigma_j \Lambda\left(\bsigma_k\right)\bigr] \\
		&=& \frac{1}{2} \tr\left[ \bsigma_j \cdot \Lambda\left( \frac{1}{2} \sum_k \tr\bigl[\bsigma_k \left\{ \brho_A, \bsigma_i \right\}\bigr] \bsigma_k\right)\right] \\
		&=& \frac{1}{2} \tr\Bigl[ \bsigma_j \cdot\Lambda\bigl( \left\{ \brho_A, \bsigma_i \right\}\bigr)\Bigr] \\
		&=& \frac{1}{2} \tr\left[ \bsigma_j \sum_k r_{ik} \bsigma_k \right] \\
		&=& \frac{1}{2} \sum_k r_{ik} \tr\left[ \bsigma_j \bsigma_k \right] \\
		&=& \frac{1}{2} \sum_k r_{ik} 2 \delta_{j,k} = r_{ij} = \tr[\bsigma_i \ten \bsigma_j \R].
	\end{eqnarray}
	Recall that $\K = \frac{1}{4}\sum_k \{\brho_A,\bsigma_k\}\ten\bsigma_k$. The fourth equation holds because Pauli operators and the identity operator form the complete basis of the space of all linear operators and thus $ \left\{ \brho_A, \bsigma_i \right\} = \frac{1}{2} \sum_k \tr[\bsigma_k \left\{ \brho_A, \bsigma_i \right\}] \bsigma_k$. By definition of the map $\Lambda$, the fifth equation follows. \\
	
	(2) Secondly, we will show that the map $\Lambda$ is a well-defined map on the space spanned by $\left\{ \brho_A, \bsigma_{i} \right\}, ~ \forall i$. To verify whether the map $\Lambda$ is well defined on such space, we need to prove that if $\left\{ \brho_A, \bsigma_{i_1} \right\} = \left\{ \brho_A, \bsigma_{i_2} \right\}$, then $\Lambda\left(\left\{ \brho_A, \bsigma_{i_1} \right\}\right) = \Lambda\left(\left\{ \brho_A, \bsigma_{i_2} \right\}\right)$.
Assume that $\left\{ \brho_A, \bsigma_{i_1} \right\} = \left\{ \brho_A, \bsigma_{i_2} \right\}$. \\
	
	(2a) If $i_1 \ne 0 $ and $i_2 \ne 0$,
	
	\begin{eqnarray}
		\left\{ \brho_A, \bsigma_{i_1} \right\} = \left\{ \brho_A, \bsigma_{i_2} \right\} &\Rightarrow& \sum_j \frac{r_{j,0}}{2} \left\{ \bsigma_{j}, \bsigma_{i_1} \right\} = \sum_j \frac{r_{j,0}}{2} \left\{ \bsigma_{j}, \bsigma_{i_2} \right\} \\
		&\Rightarrow& \bsigma_{i_1} + r_{i_1,0}\I = \bsigma_{i_2} + r_{i_2,0}\I \label{eq:LH}
	\end{eqnarray}
	Note that $\brho_A = \sum_j \dfrac{r_{j,0}}{2} \bsigma_j$ and $r_{00}=1$. Multiplying both hand sides of Eq. (\ref{eq:LH}) by $\bsigma_{i_1}$ and taking a trace of the entire term, we have $i_1 = i_2$ because $2 = \tr[\bsigma_{i_1}\bsigma_{i_2}]$. It follows that 
	\begin{eqnarray}
		\Lambda\left(\left\{ \brho_A, \bsigma_{i_1} \right\}\right) &=& \Lambda\left(\left\{ \brho_A, \bsigma_{i_2} \right\}\right).
	\end{eqnarray}
	
	(2b) Otherwise, without loss of generality, let $i_1 = 0$.\\
	
	It is trivial that if $i_2=i_1=0$, $\Lambda\left(\left\{ \brho_A, \bsigma_{i_1} \right\}\right) = \Lambda\left(\left\{ \brho_A, \bsigma_{i_2} \right\}\right)$. Now assume that $i_2 \ne 0$, and see if the equation $\Lambda\left(\left\{ \brho_A, \bsigma_{i_1} \right\}\right) = \Lambda\left(\left\{ \brho_A, \bsigma_{i_2} \right\}\right)$ still holds. The initial assumption of $\left\{ \brho_A, \bsigma_{i_1} \right\} = \left\{ \brho_A, \bsigma_{i_2} \right\}$ implies that $\brho_A = \frac{1}{2}(\I+\bsigma_{i_2})$. Let $\mathbf{\Pi}_{x|a}$ denote the projector of a Pauli operator $\bsigma_{a}$ associated with its eigenvalue $x$, that is, $\bsigma_{a} = \sum_{x=\pm1} x \mathbf{\Pi}_{x|a}$, and we note that $\mathbf{\Pi}_{x|a}=\frac{1}{2}(\I+x\bsigma_{a})$. Thus, $\brho_A$ can be rewritten as $\brho_A = \mathbf{\Pi}_{+1|i_2}$, and furthermore 
    \begin{eqnarray}
		\tr_A[\left(\mathbf{\Pi}_{-1|i_2} \ten \I \right) \R_{AB}] &=& \frac{1}{4} \tr\left[\left\{\brho_A,\bsigma_k\right\}\mathbf{\Pi}_{-1|i_2}\right] \Lambda(\bsigma_k) \\
		&=& \frac{1}{4} \tr\left[\left\{\brho_A,\mathbf{\Pi}_{-1|i_2}\right\}\bsigma_k\right] \Lambda(\bsigma_k) \\
		&=& \zero,
	\end{eqnarray}
    where $\zero$ denotes the zero operator whose elements are all zero. The second equality holds due to the cyclic property of the trace, and the last equality holds because $\mathbf{\Pi}_{+1|i_2}\mathbf{\Pi}_{-1|i_2} =\mathbf{\Pi}_{-1|i_2}\mathbf{\Pi}_{+1|i_2} = \zero$. Together with the fact that $\mathbf{\Pi}_{-1|i_2}=\frac{1}{2}(\I-\bsigma_{a})$, this leads us to $\tr_A \R = \tr_A[\left(\bsigma_{i_2} \ten \I \right) \R]$, i.e., $\sum_k r_{0,k} \bsigma_k = \sum_k r_{i_2,k} \bsigma_k$:  
	\begin{eqnarray}
		\zero = \tr_A[\left(\mathbf{\Pi}_{-1|i_2} \ten \I \right) \R_{AB}] &\Leftrightarrow&  \zero = \tr_A \left[\left( \frac{\I - \bsigma_{i_2}}{2} \ten \I \right)\R_{AB}\right]\\
		  &\Leftrightarrow& \tr_A \R_{AB} = \tr_A \bigl[\left( \bsigma_{i_2} \ten \I \right)\R_{AB}\bigr].
	\end{eqnarray}
	By using $\sum_k r_{0,k} \bsigma_k = \sum_k r_{i_2,k} \bsigma_k $, we have
	\begin{eqnarray}
		\Lambda\bigl(\left\{ \brho_A, \bsigma_{i_1} \right\}\bigr) &=& \Lambda\bigl(\left\{ \brho_A, \bsigma_{0} \right\}\bigr)\\
		&=&  \sum_k r_{0,k} \bsigma_k\\
		&=& \sum_k r_{i_2,k} \bsigma_k \\
		&=& \Lambda\bigl(\left\{ \brho_A, \bsigma_{i_2} \right\}\bigr).
	\end{eqnarray}

    (3) Lastly, we check if the map $\Lambda$ is trace-preserving: the trace of $\left\{ \brho_A, \bsigma_i \right\}$ is given by 
	\begin{eqnarray}
		\tr\left\{ \brho_A, \bsigma_i \right\} &=& 2\tr[\brho_A\bsigma_i] \\
		&=& \sum_{j} r_{j,0} \tr[\bsigma_j \bsigma_i] \\
		&=& 2r_{i,0},
	\end{eqnarray}
	and the trace of $\Lambda\bigl(\left\{ \brho_A, \bsigma_i \right\}\bigr)$ is given by
	\begin{eqnarray}
		\tr\Lambda\bigl(\left\{ \brho_A, \bsigma_i \right\}\bigr) &=& \sum_k r_{ik} \tr\bsigma_k \\
		&=& \sum_{k} r_{i,k} 2\delta_{k,0} \\
		&=& 2r_{i,0}.
	\end{eqnarray}
	
	We have verified that the map $\Lambda$ is trace-preserving by observing that $\tr\left\{ \brho_A, \bsigma_i \right\} = \tr\Lambda(\left\{ \brho_A, \bsigma_i \right\} )$ for $i=1,2,3$. 

    In a very similar way, we can prove the existence of a reverse pseudo-channel, its well-definedness, and its trace-preserving property.\\
    
    Hence, we conclude that for any PDO $\R_{AB}$ there exists a forward pseudo-channel and a reverse pseudo-channel. 
\end{proof}

Although we could prove that the map, defined as Eq.~(\ref{eq:app_construct_pc}), is trace-preserving (TP), what was proven is that it is TP on the domain that $\{\brho_A,\bsigma_i\}_i$ spans (here the curly bracket is the normal bracket that is used to represent sets). Thus, the condition in Eq.~(\ref{eq:app_construct_pc}) is equivalent to trace-preserving requirement on the domain of all linear operators, if $\brho_A$ has full rank. It is because $\{\brho_A,\bsigma_i\}_i$ spans the set of all linear operators, if $\brho_A$ has full rank. Thus, for the cases where $\brho_A$ does not have full rank, we impose the condition of trace-preserving in addition to Eq.~(\ref{eq:app_construct_pc}) for pseudo-channels to satisfy.

\cref{lem:exist} provides a way that any PDOs can be represented as though its subsystems were temporally distributed by means of pseudo-channels.\\

\noindent\textbf{Identifying compatible pseudo-channels.} We introduce a systematic method to identify such compatible pseudo-channels $\overrightarrow{\Lambda}$ (as expressed by their Choi operator) whose existence is guaranteed by \cref{lem:exist}. When $\R_{AB}$ has full rank marginals, $\overrightarrow{\Lambda}$ is unique and the algorithm that returns its normalized Choi operator $\bchi$ is particularly simple (see \cref{alg:pseudo_channel}). From this, the forward atemporality can be directly computed. Here we present the technical details underlying \cref{alg:pseudo_channel} and outline how to extend it to find the set of forward pseudo-channels compatible with a PDO $\R_{AB}$. 


    
 Recall from Eq.~(\ref{eq:49}) that any arbitrary PDO $\R_{AB}$ can be written with its (forward) pseudo-channel $\overrightarrow{\Lambda}$ as
	\begin{eqnarray}
            \R_{AB} = \frac{1}{4} \sum_{k} \bsigma_k \ten \overrightarrow{\Lambda}(\bsigma_k) + \left( \brho_A-\frac{\I}{2} \right) \ten \overrightarrow{\Lambda}\left(\frac{\I}{2}\right) + \frac{\I}{2} \ten \overrightarrow{\Lambda}\left(\brho_A - \frac{\I}{2}\right), \label{eq:map}
	\end{eqnarray}
	where $\brho_A$ denotes the first marginal of $\R_{AB}$, i.e., $\brho_A \equiv \tr_B \R_{AB}$. We achieve the last equation by using $\brho_A = \frac{1}{2} \sum_j \tr[\brho_A \bsigma_j] \bsigma_j$ and the anti-commutation relation of Pauli operators, i.e., $\left\{ \bsigma_j, \bsigma_k \right\} = 2\delta_{jk}\I,  ~\forall j\ne 0, ~k \ne 0$. We can then derive four simultaneous equations:
\begin{eqnarray}
    \begin{aligned}
        \overrightarrow{\Lambda}&(\X) + \tr\bigl[\brho_A \X\bigr] \overrightarrow{\Lambda}(\I) = 2\tr_A\bigl[\left(\X \ten \I\right) \R_{AB}\bigr],\\
        \overrightarrow{\Lambda}&(\Y) + \tr\bigl[\brho_A \Y\bigr] \overrightarrow{\Lambda}(\I) = 2\tr_A\bigl[\left(\Y \ten \I\right) \R_{AB}\bigr],\\
        \overrightarrow{\Lambda}&(\, \Z) + \tr\bigl[\brho_A \, \Z \bigr] \overrightarrow{\Lambda}(\I)= 2\tr_A\bigl[\left(\,\Z \ten \I\right)  \R_{AB}\bigr],\\
        \overrightarrow{\Lambda}&(\brho_A) = \tr_A\R_{AB}. \label{eq:deriven_map}
    \end{aligned}
\end{eqnarray}
	As the Pauli operators and the identity operator form a complete basis for the space of linear operators, the left-hand side of the last equation can be expanded in this basis $\overrightarrow\Lambda(\bsigma_i)$'s as $\overrightarrow\Lambda(\brho_A) = \frac{1}{2} \sum_{i} \tr[\brho_A\bsigma_i] \overrightarrow\Lambda\left(\bsigma_{i}\right)$.
    Then one may notice that we have four unknown operators $\overrightarrow\Lambda(\bsigma_i)$ and four simultaneous equations for them. If we can solve the equations to obtain $\overrightarrow\Lambda(\bsigma_i)$, these would in turn completely determine the action of the pseudo-channel $\overrightarrow\Lambda$.

	(1) Indeed, if the first marginal $\brho_A$ of $\R_{AB}$ has full rank, there is a \emph{unique} solution; that is, there is only one (forward) pseudo-channel $\overrightarrow{\Lambda}$ from $A$ to $B$ compatible with $\R_{AB}$. This unique pseudo-channel $\overrightarrow{\Lambda}$ can be constructed efficiently from $\R_{AB}$. Consequently, we can express $\overrightarrow{\Lambda}(\I)$ in terms of the known operators $\R_{AB}$ and $\brho_A$. Then we show that we can identify the terms in the last line of Eq.~(\ref{eq:map}) with an operator $\L\equiv\left(\brho_A-\frac{\I}{2}\right)\ten\tr_A\left[\left(\frac{1}{2}\brho_A^{-1} \ten \I\right) \R_{AB}\right]+\frac{\I}{2}\ten\tr_A\left[\left((\I-\frac{1}{2}\brho_A^{-1}) \ten \I\right) \R_{AB}\right]$. By noting that $(T\ten\id)\bchi_{\overrightarrow{\Lambda}} \equiv \frac{1}{4} \sum_{k} \bsigma_k \ten \overrightarrow{\Lambda}(\bsigma_k)$, we eventually achieve the Choi operator $\bchi_{\overrightarrow{\Lambda}}$ of $\overrightarrow{\Lambda}$:
    \begin{eqnarray}
        \bchi_{\overrightarrow{\Lambda}} = (T \ten \id)(\R_{AB} - \L),
    \end{eqnarray}
    where $T$ denotes the transpose map. 

(2) Meanwhile, if the first marginal $\brho_A$ is rank-deficient, there are infinitely many pseudo-channels compatible with $\R_{AB}$. Note that in the qubit case, which is what we currently consider, the rank deficiency of $\brho_A$ implies that it is pure. So we denote $\brho_A$ by $\psi$ for some normalized vector $\psi$, i.e., $\brho_A\equiv\proj{\psi}$. In contrast to the full-rank case, here $\R_{AB}$ does not determine how the pseudo-channel $\overrightarrow{\Lambda}$ acts on a state $\smash{\psi^\perp}$ orthogonal to $\psi$. This leads the four simultaneous equations above to be reduced to three equations, so there is a continuous family of solutions: one for each arbitrary Hermitian operator $\btau=:\overrightarrow{\Lambda}(\proj{\smash{\psi^\perp}})$ of trace one that is assigned to the undetermined output. We assign a Hermitian operator of trace one, because pseudo-channels are Hermiticity-preserving and trace-preserving by definition.; let us denote this solution by $\overrightarrow{\Lambda}_{\btau}$. 
We demonstrate through an example where $\ket{\psi}=\ket{0}$. 
In this case, the third and fourth of the equations in Eq.~(\ref{eq:deriven_map}) turn out to be equivalent:

\begin{eqnarray*}
    \begin{aligned}
        &\overrightarrow{\Lambda}\left(\proj{0}\right) = \tr_A\R_{AB}\\
        \Leftrightarrow&
        \overrightarrow{\Lambda}\left(\Z\right) + \overrightarrow{\Lambda}(\I) = 2\tr_A\R_{AB}\\
        \Leftrightarrow&
        \overrightarrow{\Lambda}\left(\Z\right) + \tr\bigl[\proj{0}\Z\bigr]\overrightarrow{\Lambda}(\I) = 2\tr_A\bigl[\left(\Z \ten \I\right)\R_{AB}\bigr].
    \end{aligned}
\end{eqnarray*}
We have only three independent conditions but four unknown operators. Now, we assign an arbitrary Hermitian operator $\btau$ to $\overrightarrow{\Lambda}(\proj{1})$, and then the resulting pseudo-channel $\overrightarrow{\Lambda}_{\btau}$ satisfies
\begin{eqnarray*}
    \begin{aligned}
        \overrightarrow{\Lambda}_{\btau}&\left(\X\right) + \tr\bigl[\proj{0} \X\bigr] \overrightarrow{\Lambda}_{\btau}(\I) := 2\tr_A\bigl[\left(\X \ten \I\right) \R_{AB}\bigr], \\
        \overrightarrow{\Lambda}_{\btau}&\left(\Y\right) + \tr\bigl[\proj{0} \Y\bigr] \overrightarrow{\Lambda}_{\btau}(\I) := 2\tr_A\bigl[\left(\Y \ten \I\right) \R_{AB}\bigr] ,\\
        \overrightarrow{\Lambda}_{\btau}&\left(\,\Z\right) + \tr\bigl[\proj{0} \,\Z\bigr] \overrightarrow{\Lambda}_{\btau}(\I):= 2\tr_A\bigl[\left(\,\Z \ten \I\right) \R_{AB}\bigr],\\
    \overrightarrow{\Lambda}_{\btau}&\left(\proj{0}\right) := \tr_{A}\R_{AB},\\
    \overrightarrow{\Lambda}_{\btau}&\left(\proj{1}\right) := \btau.
    \end{aligned}
\end{eqnarray*}
These now completely determine the action of the map $\overrightarrow{\Lambda}_{\btau}$. We can determine $\overrightarrow{\Lambda}(\I)$ by linearity, as the sum of $\overrightarrow{\Lambda}(\proj{0}) = \tr_A \R_{AB}$ (which follows by the basic definition of a PDO) and $\overrightarrow{\Lambda}(\proj{1}) = \btau$ (by our assignment), and determine $\overrightarrow{\Lambda}(\X), \overrightarrow{\Lambda}(\Y),$ and $\overrightarrow{\Lambda}(\Z)$ accordingly. Essentially the same procedure works for an arbitrary $\psi$.

Now we seek for a PDO with a full-ranked marginal whose unique pseudo-channel is given specifically by $\overrightarrow{\Lambda}_{\btau}$. Indeed, we can show that an operator $\R_{AB}'$, defined as
\begin{eqnarray}
    \begin{aligned}
        \R_{AB}'&:= \R_{AB} + \left( \frac{\I}{2} - \brho_A\right) \ten \frac{1}{2}\biggl(\btau + \tr_A\R_{AB}\biggr) + \frac{\I}{2} \ten \frac{1}{2}\biggl(\btau - \tr_A\R_{AB}\biggr),
    \end{aligned}
\end{eqnarray}
is a PDO whose marginal is $\frac{\I}{2}$ and pseudo-channel is $\Lambda_{\tau}$. This enables us to use the same procedure that we used for full-ranked case, so we can readily obtain the desired pseudo-channel $\overrightarrow{\Lambda}_{\btau}$. In fact, the Choi operator $\bchi_{\overrightarrow{\Lambda}_{\btau}}$ can be achieved by simply taking a partial transpose of $\R_{AB}'$.

\cref{thm:recovermap} summarizes the procedure of recovering pseudo-channels from a given PDO, and provides a closed form expression of the uniquely determined pseudo-channel for the case of PDOs with a full-ranked marginal.

\begin{restatable}{thm}{recovermap}
	For any PDO $\R_{AB}$, there exists a systematic pseudo-channel recovering method that returns the set $\overrightarrow{\mathcal{C}}(\R_{AB})$ of the corresponding Choi operators of forward pseudo-channels compatible with $\R_{AB}$. In particular, if the associated initial state, namely, $\brho_A = \tr_B \R_{AB}$ has full rank, the unique element $\bchi$ in $\overrightarrow{\mathcal{C}}(\R_{AB})$ is given by the following closed-form expression,
    \begin{equation}
		\bchi = \left(T\ten \id\right) \left(\R_{AB} - \L_{AB}\right),
	\end{equation}
	where $T$ denotes the transpose map and
	\begin{eqnarray}
		\begin{aligned}
			\L_{AB} & \equiv \left(\brho_A-\frac{\I}{2}\right)\ten\tr_A\left[\Bigl(\frac{1}{2}\brho_A^{-1} \ten \I\Bigr) \R_{AB}\right] +\frac{\I}{2}\ten\tr_A\left[\Bigl( \bigl(\I-\frac{1}{2}\brho_A^{-1}\bigr) \ten \I\Bigr) \R_{AB}\right].
		\end{aligned}
	\end{eqnarray}
	Else if $\brho_A$ is rank deficient, $\overrightarrow{\mathcal{C}}(\R_{AB})$ is given by
\begin{eqnarray}
    \overrightarrow{\mathcal{C}}(\R_{AB}) &= \Big\{ (T\ten\id)\R_{\btau}: \tr\btau = 1, \btau \text{ is Hermitian} \Big\},
\end{eqnarray}
 where $\R_{\btau} := \R_{AB} + \left( \frac{\I}{2} - \brho_A\right) \ten \frac{1}{2}\left(\btau + \tr_A\R_{AB}\right) + \frac{\I}{2} \ten \frac{1}{2}\left(\btau -\tr_A\R_{AB}\right)$.
	\label{thm:recovermap}
\end{restatable}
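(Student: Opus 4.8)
The plan is to treat the compatibility relation $\R_{AB} = (\id \ten \overrightarrow{\Lambda})\K$ as a linear system for the four unknown operators $\overrightarrow{\Lambda}(\bsigma_k)$ and to solve it explicitly, with the dichotomy between the two cases arising entirely from the invertibility of that system. Starting from the Pauli expansion in \cref{eq:map} and the induced conditions \cref{eq:deriven_map}, I would first use the three equations for $\overrightarrow{\Lambda}(\X),\overrightarrow{\Lambda}(\Y),\overrightarrow{\Lambda}(\Z)$ to eliminate these in favour of $\overrightarrow{\Lambda}(\I)$, and substitute the results into the fourth equation $\overrightarrow{\Lambda}(\brho_A) = \tr_A \R_{AB}$. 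Expanding $\brho_A = \frac{1}{2}\sum_j \tr[\brho_A\bsigma_j]\bsigma_j$, this collapses to a single operator equation in which the scalar coefficient multiplying $\overrightarrow{\Lambda}(\I)$ is $1 - \sum_{j\neq 0}\tr[\brho_A\bsigma_j]^2 = 4\det\brho_A$, so that solvability is governed exactly by the rank of $\brho_A$.

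\emph{Full-rank case.} When $\brho_A$ has full rank this coefficient is nonzero, so $\overrightarrow{\Lambda}(\I)$ — and hence every $\overrightarrow{\Lambda}(\bsigma_k)$ — is uniquely fixed, which establishes uniqueness of the pseudo-channel. To obtain the closed form I would invoke the Bloch-inversion identity $\brho_A^{-1} = \frac{2}{1-|\vec{\alpha}|^2}(\I - \vec{\alpha}\cdot\vec{\bsigma})$, with $\vec\alpha$ the Bloch vector of $\brho_A$; a short computation then gives $\overrightarrow{\Lambda}(\I) = \tr_A[(\brho_A^{-1}\ten\I)\R_{AB}]$. Substituting $\overrightarrow{\Lambda}(\frac{\I}{2}) = \frac12\tr_A[(\brho_A^{-1}\ten\I)\R_{AB}]$ together with $\overrightarrow{\Lambda}(\brho_A - \frac{\I}{2}) = \tr_A\R_{AB} - \overrightarrow{\Lambda}(\frac{\I}{2})$ into the last two terms of \cref{eq:map} reproduces exactly $\L_{AB}$, and since $(T\ten\id)\bchi_{\overrightarrow{\Lambda}} = \frac14\sum_k\bsigma_k\ten\overrightarrow{\Lambda}(\bsigma_k) = \R_{AB} - \L_{AB}$, the stated expression $\bchi = (T\ten\id)(\R_{AB}-\L_{AB})$ follows immediately.

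\emph{Rank-deficient case.} Here $\det\brho_A = 0$ with $\brho_A = \proj\psi$ pure, the coefficient vanishes, and the fourth equation no longer fixes $\overrightarrow{\Lambda}(\I)$. The essential point is that this equation is not inconsistent but genuinely \emph{dependent} on the other three: contracting the first three equations with the Bloch vector of $\proj\psi$ returns the fourth provided $\tr_A[(\proj{\psi^\perp}\ten\I)\R_{AB}] = 0$, and this identity holds for every PDO because \cref{thm:exist} supplies a compatible $\overrightarrow{\Lambda}$ and $\tr_A[(\proj{\psi^\perp}\ten\I)\K]=0$, whence $\tr_A[(\proj{\psi^\perp}\ten\I)\R_{AB}] = \overrightarrow{\Lambda}(\tr_A[(\proj{\psi^\perp}\ten\I)\K]) = 0$. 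The solution set is then a one-parameter family labelled by the undetermined output $\btau := \overrightarrow{\Lambda}(\proj{\psi^\perp})$, an arbitrary trace-one Hermitian operator. I would complete each such partial map to a full-rank problem by constructing $\R_{\btau}$, verify directly that $\tr_B\R_{\btau} = \frac{\I}{2}$ so that the full-rank result applies, and note that $\L_{AB}$ vanishes for a maximally mixed marginal, so the associated Choi operator is simply $(T\ten\id)\R_{\btau}$. What remains is to confirm that the unique pseudo-channel of $\R_{\btau}$ is precisely $\overrightarrow{\Lambda}_{\btau}$ — i.e. that it is compatible with the original $\R_{AB}$ and assigns $\btau$ to $\proj{\psi^\perp}$ — which I would check by matching both maps on the Pauli basis.

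I expect the main obstacle to lie in the rank-deficient case: cleanly establishing that the fourth equation is redundant rather than inconsistent, and that $\btau \mapsto \overrightarrow{\Lambda}_{\btau}$ is a bijection onto the set of compatible pseudo-channels (distinct $\btau$ yield distinct actions on $\proj{\psi^\perp}$, and conversely every compatible pseudo-channel fixes some such $\btau$). Both rest on the consistency identity $\tr_A[(\proj{\psi^\perp}\ten\I)\R_{AB}]=0$, so the delicate part is justifying it for \emph{arbitrary} PDOs via \cref{thm:exist} rather than only for temporally distributed ones. The full-rank analysis, by contrast, is essentially linear algebra combined with the Bloch-sphere inversion identity, and the existence of the ``systematic method'' is furnished by the constructive solution of the linear system itself.
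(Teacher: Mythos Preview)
Your proposal is correct and follows essentially the same route as the paper: both treat the compatibility condition as the linear system \cref{eq:deriven_map}, identify $\overrightarrow{\Lambda}(\I)=\tr_A[(\brho_A^{-1}\ten\I)\R_{AB}]$ in the full-rank case to recover $\L_{AB}$, and in the rank-deficient case establish the redundancy of the fourth equation via the identity $\tr_A[(\proj{\psi^\perp}\ten\I)\R_{AB}]=0$ (the paper states the equivalent $\tr_A[(\proj{\psi}\ten\I)\R_{AB}]=\tr_A\R_{AB}$, proved the same way through \cref{thm:exist}), then reduce to the full-rank case by building $\R_{\btau}$ with maximally mixed marginal. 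The only cosmetic difference is that you solve the linear system explicitly and read off the coefficient $4\det\brho_A$, whereas the paper verifies the formula for $\overrightarrow{\Lambda}(\I)$ by direct substitution into the anti-commutator expansion; both arrive at the same closed form.
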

	
\begin{proof}
    Let $\R_{AB}$ be a PDO and $\brho_A$ its first marginal, i.e., $\brho_A \equiv \tr_B \R_{AB}$. As it is clear that we are only concerned a forward pseudo-channel in this proof, we omit the symbol $\rightarrow$ from the pseudo-channels for brevity.\\
    
    (1) Suppose that $\brho_A$ has full rank. Then $\brho_A$ is invertible, so there exists its inverse $\brho_A^{-1}$. By \cref{lem:exist}, $\R_{AB}$ has a pseudo-channel $\Lambda$ that satisfies $\R_{AB} = (\id \ten \Lambda) \K = \frac{1}{4}\sum_k \left\{\brho_A, \bsigma_k \right\} \ten \Lambda(\bsigma_k)$. 
        In turn, we show that the pseudo-channel $\Lambda$ is uniquely determined and its Choi operator $\bchi_{\Lambda}$ is given by $\bchi_{\Lambda} = \left(T\ten \id\right) \left(\R_{AB} - \L_{AB}\right)$, where  
        \begin{eqnarray}
            \begin{aligned}
                \L_{AB} \equiv \left(\brho_A-\frac{\I}{2}\right)\ten\tr_A\left[\left(\frac{1}{2}\brho_A^{-1} \ten \I\right) \R_{AB}\right]+\frac{\I}{2}\ten\tr_A\left[\left((\I-\frac{1}{2}\brho_A^{-1}) \ten \I\right) \R_{AB}\right].
            \end{aligned}
	   \end{eqnarray}
    
        Firstly, we show that 
		\begin{equation}
			\tr_A\Bigl[\left(\brho_A^{-1} \ten \I\right) \R_{AB}\Bigr] = \Lambda(\I) \label{eq:EI}
		\end{equation}
		and 
		\begin{equation}
			\tr_A \R_{AB} = \Lambda(\brho_A), \label{eq:Erho}
		\end{equation}
		then it follows that
		\begin{equation}
			\L_{AB} = \left(\brho_A-\frac{\I}{2}\right)\ten\Lambda\left(\frac{\I}{2}\right) +\frac{\I}{2}\ten\Lambda\left(\brho_A-\frac{\I}{2}\right).
		\end{equation}
		Indeed, Eq.~(\ref{eq:EI}) and Eq.~(\ref{eq:Erho}) hold: \\
		\begin{eqnarray}
			\tr_A\left[\left(\brho_A^{-1} \ten \I\right) \R_{AB}\right] &=& \tr_A\left[\left(\brho_A^{-1} \ten \I\right) \left( \frac{1}{4}\sum_k \{\brho_A, \bsigma_k \} \ten \Lambda(\bsigma_k)\right)\right] \\
			&=& \frac{1}{4} \sum_k \tr\left[\brho_A^{-1} \{\brho_A, \bsigma_k \} )\right]  \Lambda(\bsigma_k)\\
			&=& \frac{1}{4} \sum_k 2\tr[ \bsigma_k ] \Lambda(\bsigma_k) \\
			&=& \frac{1}{2} \sum_k 2\delta_{k,0} \Lambda(\bsigma_k) \\
			&=& \Lambda(\I),
		\end{eqnarray}
		and  
		\begin{eqnarray}
			\tr_A \R_{AB} &=& \tr_A\left[\frac{1}{4}\sum_k \{\brho_A, \bsigma_k \} \ten \Lambda(\bsigma_k)\right] \\
			&=& \sum_k \frac{\tr[\brho_A \bsigma_k]}{2} \Lambda(\bsigma_k) \\
			&=& \Lambda\left( \sum_k \frac{\tr[\brho_A \bsigma_k]}{2} \bsigma_k \right)\\
			&=& \Lambda(\brho_A).
		\end{eqnarray}
		
		Thus, we have $\L_{AB} = \left(\brho_A-\frac{\I}{2}\right)\ten\Lambda\left(\frac{\I}{2}\right) +\frac{\I}{2}\ten\Lambda\left(\brho_A-\frac{\I}{2}\right)$. Now let us recall Eq.~(\ref{eq:app_ver3}) that a PDO $\R_{AB}$ can be written with its pseudo-channel $\Lambda$ and its marginal $\brho_A$ as
        \begin{eqnarray}
            \R_{AB} = \left(\frac{1}{4}\sum_{a} \bsigma_a \ten \Lambda\left( \bsigma_a \right) \right) + \left( \brho_A - \frac{\I}{2} \right) \ten \Lambda\left( \frac{\I}{2} \right) + \frac{\I}{2} \ten \Lambda\left(\brho_A - \frac{\I}{2} \right).
        \end{eqnarray}
        Namely, we have
        \begin{eqnarray}
            \R_{AB} = \bchi_{\Lambda}^{T_{A}} + \L_{AB},
        \end{eqnarray}
        where $(\cdot)^{T_{A}}$ denote the partial transpose on the first system $A$. Recall that the Choi operator $\bchi_{\Lambda}$ is defined as $\bchi_{\Lambda} \equiv\frac{1}{4}\sum_{a} (\bsigma_a)^T \ten \Lambda\left( \bsigma_a \right)$, with the transpose map $(\cdot)^T$.  We then observe that the Choi operator $\bchi_{\Lambda}$ of the pseudo-channel $\Lambda$ is obtained by $\R_{AB}^{T_{A}} - \L_{AB}^{T_{A}}$.\\

    (2) Now suppose that $\brho_A$ does not have full rank, that is, it is pure. The proof is divided into two parts: We show (2a) conditions that a pseudo-channel of $\R_{AB}$ must satisfy, (2b) that there are infinitely many pseudo-channels compatible with $\R_{AB}$, and  (2c) how we can systematically find these pseudo-channels. \\

    (2a) We recall Eq.~(\ref{eq:app_ver2}) that $\R_{AB}$ can be written as
    \begin{eqnarray}
        \R_{AB} = \left(\frac{1}{4}\sum_{a\ne0} \bsigma_a \ten \Lambda\left( \bsigma_a + \tr[\brho_A\bsigma_a]\I \right)\right) + \frac{\I}{2} \ten \Lambda\left(\brho_A\right).
    \end{eqnarray}
From this, we can derive four simultaneous equations that a pseudo-channel $\Lambda$ of $\R_{AB}$ must satisfy:
    \begin{eqnarray}
        \begin{aligned}
        \forall k \in \left\{1,2,3\right\}, \quad \tr_A\bigl[\left(\bsigma_k \ten \I\right)\R_{AB}\bigr] &= \left(\frac{1}{4}\sum_{a\ne0} \tr\left[\bsigma_k\bsigma_a\right]  \Lambda\bigl( \bsigma_a + \tr[\brho_A\bsigma_a]\I \bigr)\right) + \tr\left[\frac{\bsigma_k}{2}\right]  \Lambda\left(\brho_A\right)\\
        &= \frac{1}{4}\sum_{a\ne0} \left(2\delta_{k,a} \right)\Lambda\bigl( \bsigma_a + \tr[\brho_A\bsigma_a]\I \bigr) \\
        &= \frac{1}{2} \Lambda\bigl( \bsigma_k + \tr[\brho_A\bsigma_k]\I \bigr), \label{eq:pc_condi_123}
        \end{aligned}
    \end{eqnarray}
    and 
    \begin{eqnarray}
        \begin{aligned}
        \tr_A\R_{AB} &= \left(\frac{1}{4}\sum_{a\ne0} \tr\left[\bsigma_a\right] \Lambda\bigl( \bsigma_a + \tr[\brho_A\bsigma_a]\I \bigr)\right) + \tr\left[\frac{\I}{2}\right] \Lambda\left(\brho_A\right)\\
        &=\Lambda\left(\brho_A\right). \label{eq:pc_condi_0}
        \end{aligned}
    \end{eqnarray}
    Here we used $\tr[\bsigma_i\bsigma_j]=2\delta_{ij}, ~\forall i,j$ and the traceless property of Pauli operators. By rearranging Eqs.~(\ref{eq:pc_condi_123}) and Eq.~(\ref{eq:pc_condi_0}), we have the following equations:
    \begin{eqnarray}
        \Lambda(\X) + \tr\bigl[\proj{\psi} \X\bigr]\Lambda(\I) &=& 2\tr_A\bigl[(\X \ten \I) \R_{AB}\bigr], \label{eq:pc_con_x} \\ 
        \Lambda(\Y) + \tr\bigl[\proj{\psi} \Y\bigr]\Lambda(\I) &=& 2\tr_A\bigl[(\Y \ten \I) \R_{AB}\bigr], \label{eq:pc_con_y} \\ 
        \Lambda(\, \Z) + \tr\bigl[\proj{\psi} \, \Z\bigr]\Lambda(\I) &=& 2\tr_A\bigl[(\, \Z \ten \I) \R_{AB}\bigr], \label{eq:pc_con_z} \\ 
        \Lambda\left(\proj{\psi}\right) &=& \tr_A\R_{AB}, \label{eq:pc_con_i}
    \end{eqnarray}
    where we denote $\brho_A$ by $\proj{\psi}$ for some unit vector $\psi$, because it is pure. \\

    (2b) As the Pauli operators and the identity operator form a complete basis for the space of linear operators, the left hand side of the last equation can be expanded in this basis $\Lambda(\bsigma_i)$'s. Consequently, the last equation become 
    \begin{eqnarray}
        \frac{1}{2} \sum_i \tr\bigl[\proj{\psi}\bsigma_i\bigr]\Lambda(\bsigma_i) = \tr_A\R_{AB}. \label{eq:pc_con_last}
    \end{eqnarray}
    To solve the above simultaneous equations, we plug Eqs.~(\ref{eq:pc_con_x},\ref{eq:pc_con_y},\ref{eq:pc_con_z}) into Eq.~(\ref{eq:pc_con_last}). Then we observe that Eq.~(\ref{eq:pc_con_last}) is redundant, that is, in the rank-deficient cases, we have only three simultaneous equations and four unknown operators $\Lambda(\X), \Lambda(\Y), \Lambda(\Z),$ and $\Lambda(\I)$: 
    \begin{eqnarray}
        & & \frac{1}{2}\Lambda(\I) + \frac{1}{2}\left(\sum_{i\ne0} \tr\bigl[\proj{\psi}\bsigma_i\bigr]\Bigl( 2\tr_A\bigl[\left(\bsigma_i \ten \I\right)\R_{AB}\bigr] - \tr\bigl[\proj{\psi}\bsigma_i\bigr] \Lambda\left(\I\right)\Bigr) \right) = \tr_A\R_{AB},\\ 
        &\Leftrightarrow & \Lambda(\I) + 2 \tr_A\left[\Biggl(\biggl(\sum_{i\ne0} \tr[\proj{\psi}\bsigma_i] \bsigma_i\biggr) \ten \I\Biggr)\R_{AB}\right] - \sum_{i\ne0}\bigl( \tr\left[\proj{\psi}\bsigma_i\right]\bigr)^2 \Lambda(\I) = 2\tr_A\R_{AB}, \\
        &\Leftrightarrow & \Lambda(\I) + 2 \tr_A\Bigl[\bigl( \left(2\proj{\psi} - \I \right) \ten \I\bigr)\R_{AB}\Bigr] - \Lambda(\I) = 2\tr_A\R_{AB},\\
        &\Leftrightarrow & 4\tr_A\bigl[\left( \proj{\psi} \ten \I\right)\R_{AB}\bigr] -2\tr_A\R_{AB} = 2\tr_A\R_{AB},\\
        &\Leftrightarrow & 2\tr_A\R_{AB} = 2\tr_A\R_{AB}\\
        &\Leftrightarrow & \zero = \zero.
    \end{eqnarray}
    Here we used $\sum_{i\ne0} \tr[\proj{\psi}\bsigma_i] \bsigma_i = 2\proj{\psi} - \I$, and due to the purity of $\proj{\psi}$, we used $\sum_{i\ne0}\left( \tr[\proj{\psi}\bsigma_i]]\right)^2=1$. We also used $\tr_A\left[\left( \proj{\psi} \ten \I\right)\R_{AB}\right] = \tr_A[\R_{AB}]$. This can be verified by noting that $\R_{AB} = \frac{1}{4}\sum_{j=0}^3 \left\{ \proj{\psi}, \bsigma_j \right\} \ten \Lambda(\bsigma_j)$ and the cyclic property of the trace:
    \begin{eqnarray}
        \tr_A\bigl[\left( \proj{\psi} \ten \I\right)\R_{AB}\bigr] &=& \tr_A\left[\Biggl( \proj{\psi} \ten \I\Biggr)\Biggl( \frac{1}{4}\sum_{j=0}^3 \left\{ \proj{\psi}, \bsigma_j \right\} \ten \Lambda(\bsigma_j) \Biggr)\right], \\
        &=& \frac{1}{4} \sum_{j=0}^3 \tr\Bigl[ \proj{\psi} \bigl\{ \proj{\psi}, \bsigma_j \bigr\} \Bigr] \Lambda(\bsigma_j) , \\
        &=& \frac{1}{4} \sum_{j=0}^3 \tr\Bigl[ \proj{\psi} \bigl( \proj{\psi} \bsigma_j + \bsigma_j \proj{\psi}\bigr) \Bigr] \Lambda(\bsigma_j), \\ 
        &=& \frac{1}{4} \sum_{j=0}^3 \tr\Bigl[  \proj{\psi} \bsigma_j + \bsigma_j \proj{\psi} \Bigr] \Lambda(\bsigma_j), \\ 
        &=& \frac{1}{4} \sum_{j=0}^3 \tr\bigl\{ \proj{\psi}, \bsigma_j \bigr\} \Lambda(\bsigma_j), \\ 
        &=& \tr_A\left[\frac{1}{4}\sum_{j=0}^3 \bigl\{ \proj{\psi}, \bsigma_j \bigr\}
 \ten \Lambda(\bsigma_j) \right] = \tr_A \R_{AB}.
    \end{eqnarray}
    In rank-deficient cases, we do not know the action of $\Lambda$ on the perpendicular state $\smash{\psi^{\perp}}$ to $\psi$, that is, we do not know  $\Lambda(\proj{\smash{\psi^{\perp}}})$. Let us denote $\Lambda(\proj{\smash{\psi^{\perp}}})$ by some operator $\btau$. Because of this unknown operator $\btau$, we cannot determine $\Lambda(\X), \Lambda(\Y), \Lambda(\Z),$ and $\Lambda(\I)$, and consequently cannot determine $\Lambda$. 
    As pseudo-channels are defined to be Hermiticity-preserving and trace-preserving, we can only consider $\btau$ which are Hermitian operators of trace one.  
    
    By assigning $\btau$ to $\Lambda(\proj{\smash{\psi^{\perp}}})$, we can specify one pseudo-channel among all pseudo-channels, and we denote this particular pseudo-channel by $\Lambda_{\btau}$. That is, by construction, $\Lambda_{\btau}$ is a linear trace-preserving map that satisfies
        \begin{eqnarray}
        \Lambda_{\btau}(\X) + \tr\bigl[\proj{\psi} \X\bigr] \Lambda_{\btau}(\I) &=& 2\tr_A\bigl[\left(\X \ten \I\right) \R_{AB}\bigr] , \label{eq:app_L_tau_X}\\ 
        \Lambda_{\btau}(\Y) + \tr\bigl[\proj{\psi} \Y\bigr]\Lambda_{\btau}(\I) &=& 2\tr_A\bigl[\left(\Y \ten \I\right) \R_{AB}\bigr], \\ 
        \Lambda_{\btau}(\, \Z) + \tr\bigl[\proj{\psi} \, \Z\bigr]\Lambda_{\btau}(\I) &=& 2\tr_A\bigl[(\, \Z \ten \I)\, \R_{AB}\bigr], \\ 
        \Lambda_{\btau}(\proj{\psi}) &=& \tr_A\R_{AB},\\
        \Lambda_{\btau}(\I-\proj{\psi}) &=& \btau. \label{eq:app_L_tau_last}
    \end{eqnarray}
    Note that $\I-\proj{\psi} = \proj{\smash{\psi^{\perp}}}$.\\

    (2c) In turn, we provide a systematic way to find the Choi operator $\bchi_{\Lambda_{\btau}}$ of $\Lambda_{\btau}$. We fist find a PDO $\R_{AB}'$ whose associated initial state $\brho'_A\equiv \tr_A \R_{AB}'$ is $\frac{\I}{2}$ and (unique) pseudo-channel $\Lambda'$ is $\Lambda_{\btau}$, and then we achieve $\bchi_{\Lambda_{\btau}}$ by taking a partial transpose of $\R_{AB}'$. We first recall Eq.~(\ref{eq:app_ver2}) that $\R_{AB}$ can be written in terms of its marginal $\brho_A$ and its pseudo-channel $\Lambda_{\btau}$ as
    \begin{eqnarray}
        \R_{AB} = \left(\frac{1}{4}\sum_{a\ne0} \bsigma_a \ten \Lambda_{\btau}\bigl( \bsigma_a + \tr\left[\brho_A\bsigma_a\right]\I \bigr)\right) + \frac{\I}{2} \ten \Lambda_{\btau}\left(\brho_A\right).
    \end{eqnarray}
    Also note the following equations that $\Lambda_{\btau}$ must satisfy. These are the equations we get from Eqs.~(\ref{eq:app_L_tau_X}-\ref{eq:app_L_tau_last}):
    \begin{eqnarray}
        \forall a \in \left\{1,2,3\right\}, \quad \Lambda_{\btau}\bigl(\bsigma_a + \tr\left[\brho_A \bsigma_a\right]\I \bigr) &=& 2\tr_A\bigl[\left(\bsigma_a \ten \I\right) \R_{AB}\bigr], \label{eq:app_lambda_1} \\ 
        \Lambda_{\btau}(\brho_A) &=& \tr_A\R_{AB}, \label{eq:app_lambda_2}\\
        \Lambda_{\btau}(\I-\brho_A) &=& \btau. \label{eq:app_lambda_3} 
    \end{eqnarray}
    That is, we clearly see that  
    \begin{eqnarray}
        \R_{AB} = \frac{1}{4} \sum_{a\ne0} \bsigma_a \ten \Bigl( 2\tr_A\bigl[ \left(\bsigma_a\ten\I\right)\R_{AB} \bigr] \Bigr) + \frac{\I}{2}\ten \tr_A\R_{AB}. \label{eq:app_R}
    \end{eqnarray}
    
    In the same manner, $\R_{AB}'$ can also be written as
    \begin{eqnarray}
        \R_{AB}' &=& \left(\frac{1}{4}\sum_{a\ne0} \bsigma_a \ten \Lambda'\bigl( \bsigma_a + \tr\left[\brho_A'\bsigma_a\right]\I \bigr)\right) + \frac{\I}{2} \ten \Lambda'\left(\brho_A'\right), \\
        &=& \left(\frac{1}{4}\sum_{a\ne0} \bsigma_a \ten \Lambda_{\btau}\left( \bsigma_a + \tr\left[\left(\frac{\I}{2}\right)\bsigma_a\right]\I \right)\right) + \frac{\I}{2} \ten \Lambda_{\btau}\left(\frac{\I}{2}\right), \\
        &=& \frac{1}{4}\sum_{a=0}^3 \bsigma_a \ten \Lambda_{\btau}(\bsigma_a). 
    \end{eqnarray}
    Since the Choi operator $\bchi_{\Lambda_{\btau}}$ of $\Lambda_{\btau}$ is defined as $\frac{1}{4}\sum_{j=0}^3 \left(\bsigma_j\right)^T \ten \Lambda_{\btau}(\bsigma_j)$, we note that 
    \begin{eqnarray}
        \bchi_{\Lambda_{\btau}} = \bigl(\R_{AB}'\bigr)^{T_{A}}, \label{eq:app_choi_pdo}
    \end{eqnarray}
    where $(\cdot)^{T_A}$ denotes the partial transpose on $A$. 
    
It is left to show that $\R_{AB}'$ can be written in terms of $\R_{AB}$, its marginal $\brho_A \equiv \tr_B \R_{AB}$, and $\btau$:
\begin{eqnarray}
    \R_{AB}' &=& \frac{1}{4}\sum_{a\ne0} \bsigma_a \ten \Lambda'\left( \bsigma_a + \tr[\brho_A'\bsigma_a]\I \right) + \frac{\I}{2} \ten \Lambda'\left(\brho_A'\right), \\
        &=& \frac{1}{4}\sum_{a\ne0} \bsigma_a \ten \Lambda_{\btau}\bigl( \bsigma_a + \tr\left[\frac{\I}{2}\cdot\bsigma_a\right]\I \bigr) + \frac{\I}{2} \ten \Lambda_{\btau}\left(\frac{\I}{2}\right), \\
        &=& \frac{1}{4}\sum_{a\ne0} \bsigma_a \ten \Lambda_{\btau}\left( \bsigma_a + \tr[\left(\brho_A - \left(\brho_A - \frac{\I}{2} \right)\right)\cdot\bsigma_a]\I \right) + \frac{\I}{2} \ten \Lambda_{\btau}\left(\left(\brho_A - \left(\brho_A - \frac{\I}{2} \right)\right)\right), \\
        &=& \frac{1}{4}\sum_{a\ne0} \bsigma_a \ten \Lambda_{\btau}\bigl( \bsigma_a + \tr\left[\brho_A\bsigma_a\right]\I \bigr) + \frac{\I}{2} \ten \Lambda_{\btau}\left(\brho_A\right) \\ 
        & & -\frac{1}{4}\sum_{a\ne0} \tr[\left(\brho_A - \frac{\I}{2} \right)\cdot\bsigma_a] \bsigma_a \ten \Lambda_{\btau}\left( \I \right) - \frac{\I}{2} \ten \Lambda_{\btau}\left(\brho_A - \frac{\I}{2}\right), \\
        &=& \frac{1}{4}\sum_{a\ne0} \bsigma_a \ten \Lambda_{\btau}\bigl( \bsigma_a + \tr\left[\brho_A\bsigma_a\right]\I \bigr) + \frac{\I}{2} \ten \Lambda_{\btau}\left(\brho_A\right) \\ 
        & & -\left(\brho_A - \frac{\I}{2} \right)\ten \Lambda_{\btau}\left( \frac{\I}{2} \right) - \frac{\I}{2} \ten \Lambda_{\btau}\left(\brho_A - \frac{\I}{2}\right)\\
        &=& \frac{1}{4}\sum_{a\ne0} \bsigma_a \ten \Bigl( 2\tr_A\left[(\bsigma_a \ten \I) \R_{AB}\right] \Bigr) + \frac{\I}{2} \ten \Bigl(\tr_A\R_{AB}\Bigr) \\ 
        & & +\left( \frac{\I}{2} + \brho_A \right)\ten \frac{1}{2}\biggl( \btau + \tr_A\R_{AB} \biggr) + \frac{\I}{2} \ten \frac{1}{2}\biggl(\btau - \tr_A\R_{AB}\biggr).
\end{eqnarray}
In the last equation, we used Eqs.~(\ref{eq:app_lambda_1},\ref{eq:app_lambda_2},\ref{eq:app_lambda_3}). Lastly, by using Eq.~(\ref{eq:app_R}), we have
\begin{eqnarray}
    \R_{AB}' &=& \R_{AB} + \left( \frac{\I}{2} - \brho_A \right)\ten \frac{1}{2}\biggl( \btau + \tr_A\R_{AB} \biggr) + \frac{\I}{2} \ten \frac{1}{2}\biggl(\btau - \tr_A\R_{AB}\biggr)\\
    &=& \R_{AB} - \frac{1}{2} \brho_A \ten \tr_A\R_{AB} + \frac{1}{2}\bigl( \I - \brho_A \bigr) \ten \btau. \label{eq:app_alt_pdo}
\end{eqnarray}

Hence we conclude that $\bchi_{\Lambda_{\btau}}$ is readily achievable from $\R_{AB}$.
We note that we used a PDO $\R_{AB}'$ with full-ranked marginal which is specifically given by $\frac{\I}{2}$, but any PDOs whose (first) marginal has full rank can be used instead. We merely used this PDO for computational simplicity, so the simple partial transpose results in the desired operator $\bchi_{\Lambda_{\btau}}$.
\end{proof}

In summary, (1) if a given PDO $\R_{AB}$ has a full ranked marginal $\brho_A \equiv \tr_B \R_{AB}$, it has a unique pseudo-channel $\Lambda$ and it can be readily found in the form of its associated Choi operator $\bchi_{\Lambda}$ by using the closed-form expression as
\begin{eqnarray*}
    \bchi_{\Lambda} = (T \ten \id)\left(\R_{AB}-\L_{AB}\right), 
\end{eqnarray*}
with $\L_{AB}\equiv \left(\brho_A-\frac{\I}{2}\right)\ten\tr_A\left[\left(\frac{1}{2}\brho_A^{-1} \ten \I\right) \R_{AB}\right]+\frac{\I}{2}\ten\tr_A\left[\left((\I-\frac{1}{2}\brho_A^{-1}) \ten \I\right) \R_{AB}\right]$. 
(2) If $\R_{AB}$ has a rank-deficient marginal instead, it has infinitely many pseudo-channels $\Lambda_{\btau}$'s ($\btau$ is a Hermitian operator with unit trace of free choice). Each $\Lambda_{\btau}$ can be obtained by finding a suitable PDO $\R_{AB}'$ whose marginal has full rank and applying the same closed-form expression above to $\R_{AB}'$.\\
 
\noindent\textbf{Sufficient condition for atemporality and entanglement negativity to coincide.} Here we present a theorem underlying Result 3 in the main text. 

\begin{restatable}{thm}{atempent}
Let $\R_{AB}$ be a density operator.
Its \at\ is equal to its entanglement negativity, $f(\R_{AB}) = E_\text{neg}(\R_{AB})$, whenever $\R_{AB}$ is pure or marginals of $\R_{AB}$ are given by the maximally mixed state. \label{thm:atempent}
\end{restatable}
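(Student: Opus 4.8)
The plan is to compute the forward atemporality $\overrightarrow{f}$ explicitly in each of the two cases using the closed-form Choi operator of \cref{thm:recovermap}, and to compare it with the entanglement negativity, which I will take to be $E_\text{neg}(\R_{AB}) = \N\bigl((T\ten\id)\R_{AB}\bigr)$, the absolute sum of the negative eigenvalues of the partial transpose. Since $f = \min(\overrightarrow{f},\overleftarrow{f})$ and $\R_{AB}$ is a genuine density operator (so $\overrightarrow f,\overleftarrow f \ge 0$), it suffices to show each one-sided measure equals $E_\text{neg}$; the reverse statement will follow from the forward one by interchanging $A$ and $B$, using that $(T\ten\id)\R_{AB}$ and $(\id\ten T)\R_{AB}$ share the same spectrum for a Hermitian $\R_{AB}$.

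For the maximally mixed case, $\brho_A = \I/2$ has full rank, so the unique forward pseudo-channel applies and $\bchi = (T\ten\id)(\R_{AB}-\L_{AB})$. Substituting $\brho_A = \I/2$ into $\L_{AB}$, the prefactors $\brho_A-\I/2$ and $\I-\tfrac12\brho_A^{-1}$ both vanish, so $\L_{AB}=0$ and $\bchi = (T\ten\id)\R_{AB}$. Hence $\overrightarrow f = \N(\bchi) = E_\text{neg}$, and the same argument with $\brho_B=\I/2$ gives $\overleftarrow f = E_\text{neg}$, so $f = E_\text{neg}$.

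For the pure case I would first note that both $E_\text{neg}$ and $\overrightarrow f$ are invariant under local unitaries, so I may place $\ket\Psi$ in Schmidt form $\sqrt{\lambda_0}\ket{00}+\sqrt{\lambda_1}\ket{11}$. If a Schmidt coefficient vanishes the state is product, hence separable with $E_\text{neg}=0$; being zero-discord it is temporally compatible and $f=0$, matching. Otherwise $\brho_A=\mathrm{diag}(\lambda_0,\lambda_1)$ has full rank, so I read off the nonzero Pauli correlations of $\proj\Psi$ and solve the four linear equations of \cref{eq:deriven_map} for the unique pseudo-channel, expecting $\overrightarrow\Lambda(\I)=\I$, $\overrightarrow\Lambda(\Z)=\Z$, $\overrightarrow\Lambda(\X)=s\X$, $\overrightarrow\Lambda(\Y)=-s\Y$ with $s=2\sqrt{\lambda_0\lambda_1}$. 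Assembling $\bchi = \tfrac14\bigl(\I\ten\I + s\,\X\ten\X + s\,\Y\ten\Y + \Z\ten\Z\bigr)$ and diagonalizing in the Bell basis yields eigenvalues $\{\tfrac12,\tfrac12,\tfrac{s}{2},-\tfrac{s}{2}\}$, so $\overrightarrow f = \N(\bchi)=s/2=\sqrt{\lambda_0\lambda_1}$. Since the partial transpose of $\proj\Psi$ carries the single negative eigenvalue $-\sqrt{\lambda_0\lambda_1}$, this equals $E_\text{neg}$, and symmetry of the Schmidt form gives $\overleftarrow f=\overrightarrow f$, hence $f=E_\text{neg}$.

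The only substantive work lies in the pure entangled subcase: correctly extracting the Pauli correlations, solving the linear system (where the full-rank condition $\lambda_0\lambda_1\ne 0$ is exactly what pins down the solution uniquely), and diagonalizing the resulting Choi operator. The maximally mixed case is immediate once $\L_{AB}=0$ is observed. The one auxiliary point to verify carefully is the local-unitary invariance of $\overrightarrow f$, which holds because pre- and post-composing a pseudo-channel with unitary channels acts on its Choi operator by unitary conjugation and therefore preserves its spectrum.
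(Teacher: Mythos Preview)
Your proposal is correct and follows essentially the same strategy as the paper: in both cases you extract the unique forward pseudo-channel via \cref{thm:recovermap}, compute its Choi negativity, and match it to the partial-transpose negativity. The only difference is cosmetic in the pure case---you invoke local-unitary invariance (the paper's \cref{pro:invariant}) to reduce to the computational-basis Schmidt form and solve \eqref{eq:deriven_map} directly for $\overrightarrow\Lambda(\bsigma_i)$, then diagonalize in the Bell basis, whereas the paper keeps general Schmidt bases, computes $\L_{AB}$ from the closed form, and reads off the eigenvalues from the characteristic polynomial; both routes land on the same spectrum $\{\tfrac12,\tfrac12,\pm\sqrt{\lambda_0\lambda_1}\}$.
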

	
	\begin{proof}
		Recall the definitions of $E_\text{neg}$ and $f$, 
		\begin{eqnarray*}
			E_\text{neg}(\R_{AB}) &\equiv& \frac{\norm{\R^{T_A}_{AB}}_{\tr} - 1}{2},
		\end{eqnarray*}
		\begin{eqnarray*}
			f(\R_{AB}) \equiv \min \left( \overrightarrow{f}(\R_{AB}), \overleftarrow{f}(\R_{AB}) \right),
		\end{eqnarray*} 
		where the forward, reverse atemporalities $\overrightarrow{f}, \overleftarrow{f}$ are defined as 
		\begin{eqnarray}
            \overrightarrow{f}(\R_{AB})&\equiv& \min_{\bchi_{\overrightarrow{\Lambda}} \in \overrightarrow{\mathcal{C}}(\R_{AB})} \frac{\norm{\bchi_{\overrightarrow{\Lambda}}}_{\tr} -1}{2}, \\
            \overleftarrow{f}(\R_{AB})&\equiv& \min_{\bchi_{\overleftarrow{\Lambda}} \in \overleftarrow{\mathcal{C}}(\R_{AB})} \frac{\norm{\bchi_{\overleftarrow{\Lambda}}}_{\tr} -1}{2},
		\end{eqnarray}
		respectively. Recall that $\bchi_{\Lambda}$ denotes the corresponding Choi operator to a linear map $\Lambda$, and $\overrightarrow{\mathcal{C}},\overleftarrow{\mathcal{C}}$ denote the sets of Choi operators of all forward, reverse pseudo-channels, respectively. Also note that the set $\overleftarrow{\mathcal{C}}(\R_{AB})$ is equal to the set $\overrightarrow{\mathcal{C}}(\R'_{BA})$, where $\R'_{BA}$ represents the swapped PDO of $\R_{AB}$ , i.e., $\R'_{BA}=\S (\R_{AB}) \S^{\dagger}$.\\
		
		Let $\R_{AB}$ be a density operator such that its marginals have full rank.
		
		(1) Firstly, we prove that if marginals of $\R_{AB}$ are given by the maximally mixed state, $f(\R_{AB}) = E_\text{neg}(\R_{AB})$. Assume that $\tr_A \R_{AB} = \tr_B \R_{AB} = \frac{\I}{2}$. Since the marginals have full rank, both forward pseudo-channel $\overrightarrow{\Lambda}$ and reverse one $\overleftarrow{\Lambda}$ are unique, and by \cref{thm:recovermap} the Choi operators $\bchi_{\overrightarrow{\Lambda}}, \bchi_{\overleftarrow{\Lambda}}$ of them are obtained by
        \begin{eqnarray}
            \bchi_{\overrightarrow{\Lambda}} = (T_A \ten \id_B)(\R_{AB} - \L_{AB}),\\
            \bchi_{\overleftarrow{\Lambda}} = (T_B \ten \id_A)(\R'_{BA} - \L'_{AB}).
        \end{eqnarray}
        Observe that $\L$ and $\L'$ both vanish because both marginals of $\R_{AB}$ are maximally mixed states, so we have $\bchi_{\overrightarrow{\Lambda}} = \R_{AB}^{T_A}$ and $\bchi_{\overleftarrow{\Lambda}} = \R'^{T_B}_{BA}$, resulting in
		\begin{eqnarray}
			\overrightarrow{f}(\R)&=\dfrac{\norm{\R_{AB}^{T_A}}_{\tr} - 1}{2}, \\ 
			\overleftarrow{f}(\R)&=\dfrac{\norm{\R'^{T_B}_{BA}}_{\tr} - 1}{2}.
		\end{eqnarray}
		As $\R'^{T_B}_{BA} = \swap(\R_{AB}^{T_A})\swap^\dagger$ and the swap operation does not change eigenvalues of $\R_{AB}^{T_A}$, $\overrightarrow{f}(\R_{AB}) = \overleftarrow{f}(\R_{AB})$. So it follows that $f(\R_{AB}) = \frac{\norm{\R^{T_A}_{AB}}_{\tr} -1}{2}$. Hence, we conclude $f(\R_{AB}) = E_\text{neg}(\R_{AB})$ for any PDOs whose marginals are given by the maximally mixed state. \\
		
		(2) Secondly, we prove that if $\R$ is pure, $f(\R_{AB}) = E_\text{neg}(\R_{AB})$. 
        In the case where a marginal of $\R$ does not have full rank, it is trivial that $\R$ is a product state, consequently, it has no entanglement and it is temporal, i.e., $f(\R_{AB}) = E_\text{neg}(\R_{AB})=0$. Now, suppose that marginals have full rank.
        We will show that $\overrightarrow{f}(\R_{AB}) = E_\text{neg}(\R_{AB})$, and together with the fact that $\overrightarrow{f}(\R_{AB}) = \overleftarrow{f}(\R_{AB})$, we will complete the proof.
		
		As $\R_{AB}$ is pure, we denote the PDO by $\proj{\psi}$ such that $\ket{\psi} = \sum_{i} \psi_i \ket{e_i} \ten \ket{f_i}$ for some orthonormal bases $\{e_i\}_i,\{f_i\}_i$, where $\sum_{i} \abs{\psi_i}^2 = 1$. To show $\overrightarrow{f}(\R_{AB}) = E_\text{neg}(\R_{AB})$, we first compute $E_\text{neg}(\R_{AB})$. $E_\text{neg}(\R_{AB})$ is equal to the absolute value of the sum over all negative eigenvalues of $\R_{AB}^{T_A}$. So we need to solve the characteristic equation of $\R_{AB}^{T_A}$ to obtain the eigenvalues of $\R_{AB}^{T_A}$:
		\begin{eqnarray}
			0 &=& \det(\R_{AB}^{T_A} - \lambda\I) \\
			  &=& \det(\U\left(\R_{AB}^{T_A} - \lambda\I\right)\U\+)\\
			  &=& \det(\U\R_{AB}^{T_A}\U\+ - \lambda\I).
		\end{eqnarray}
		Here $\U$ is a unitary such that $\U \equiv \sum_{i,j} \ketbra{i}{e_i^*} \ten \ketbra{j}{f_j}$ where $\{\ket{i}\}_i$,$\{\ket{j}\}_j$ are the canonical bases and $\bra{e_i^*}$ is the transpose of $\ket{e_i}$. The second equation holds due to the multiplicativity of the determinant and $\det\U\det\U\+ = 1$. Now we write $\U\R_{AB}^{T_A}\U\+$ in matrix form in the canonical bases $\{\ket{i}\}_i,\{\ket{j}\}_j$ as follows,
		\begin{eqnarray}
			\U\R_{AB}^{T_A}\U\+ &=& \U\left(\sum_{ij}\psi_i\psi_j^* \ketbra{e_j^*}{e_i^*} \ten \ketbra{f_i}{f_j}\right)\U\+\\
			&=& \sum_{ij}\psi_i\psi_j^* \ketbra{j}{i} \ten \ketbra{i}{j}\\
			&=& \begin{bmatrix} 
				\abs{\psi_0}^2 & 0 & 0 & 0 \\
				0 & 0 & \psi_0^*\psi_1 & 0 \\
				0 & \psi_0\psi_1^* & 0 & 0 \\
				0 & 0 & 0 & \abs{\psi_1}^2 
			\end{bmatrix}.
		\end{eqnarray}
		After some computation, we can readily find the eigenvalues of $\R_{AB}^{T_A}$:
		\begin{eqnarray}
			0  &=& \det\left(\U\left(\R_{AB}^{T_A} - \lambda\I\right)\U\+\right)\\
			&=& \det
			\begin{bmatrix} 
				\abs{\psi_0}^2-\lambda & 0 & 0 & 0 \\
				0 & -\lambda & \psi_0^*\psi_1 & 0 \\
				0 & \psi_0\psi_1^* & -\lambda & 0 \\
				0 & 0 & 0 & \abs{\psi_1}^2-\lambda 
			\end{bmatrix}\\
			&=& \left(\lambda - \abs{\psi_0}^2\right)\left(\lambda - \abs{\psi_1}^2\right)\left( \lambda^2 - \abs{\psi_0}^2\abs{\psi_1}^2\right).
		\end{eqnarray}
		Then it has four eigenvalues, $\abs{\psi_0}^2,\abs{\psi_1}^2, \pm \abs{\psi_0}\abs{\psi_1}$. Thus, we have $E_{\text{neg}}(\R_{AB}) = \abs{\psi_0}\abs{\psi_1}$.\\
		
		In turn, we compute $\overrightarrow{f}(\R_{AB})$. By assumption, the first marginal $\brho_A$ of $\R_{AB}$ has full rank, thus $\R_{AB}$ has a unique pseudo-channel $\Lambda$ that is given by $\bchi_{\Lambda} = \R_{AB}^{T_A} - \L_{AB}^{T_A}$. Now let us calculate the eigenvalues of $\bchi_{\Lambda}$: 
		\begin{eqnarray}
			0 &=& \det(\bchi_{\Lambda} - \lambda\I) \\
			  &=& \det(\R_{AB}^{T_A} - \L_{AB}^{T_A} - \lambda\I) \\
			  &=& \det(\U\left(\R_{AB}^{T_A} - \L_{AB}^{T_A} - \lambda\I\right)\U\+).
		\end{eqnarray}
		Let us write $\U(\L_{AB}^{T_A})\U\+$ in matrix form in the canonical bases $\{\ket{i}\}_i, \{\ket{j}\}_j$,
		\begin{eqnarray}
			\U\L_{AB}^{T_A}\U\+ &=& \U\left(
			\left(\brho_A^{T}-\frac{\I}{2}\right)\ten\tr_A\left[\left(\frac{1}{2}\brho_A^{-1} \ten \I\right) \R_{AB}\right]  +\frac{\I}{2}\ten\tr_A\left[\left(\left(\I-\frac{1}{2}\brho_A^{-1}\right) \ten \I\right) \R_{AB}\right]\right)\U\+\\
			&=& \U\left(
			\left(\brho_A^{T}-\frac{\I}{2}\right)\ten\tr_A\left[\left(\frac{1}{2}\brho_A^{-1} \ten \I\right) \R_{AB}\right]  + \frac{\I}{2}\ten
			\left(\btau_B - \tr_A\left[\left(\frac{1}{2}\brho_A^{-1} \ten \I\right) \R_{AB}\right]\right)\right)\U\+\\
			&=& \U\left(
			\left(\sum_i \Bigl(\abs{\psi_i}^2 -\frac{1}{2}\Bigr) \proj{e_i^*}\right)\ten \frac{\I}{2}  +\frac{\I}{2}\ten\left(\sum_i \Bigl(\abs{\psi_i}^2 -\frac{1}{2}\Bigr) \proj{f_i}\right) \right)\U\+ \label{eq:trA}\\
			&=& \left(\sum_i \Bigl(\abs{\psi_i}^2 -\frac{1}{2}\Bigr) \proj{i}\right)\ten \frac{\I}{2}  +\frac{\I}{2}\ten\left(\sum_i \Bigl(\abs{\psi_i}^2 -\frac{1}{2}\Bigr) \proj{i}\right)\\		
			&=& \frac{1}{2} \begin{bmatrix} 
				\abs{\psi_0}^2 -\frac{1}{2} & 0 & 0 & 0 \\
				0 & \abs{\psi_0}^2 -\frac{1}{2} & 0 & 0 \\
				0 & 0 & \abs{\psi_1}^2 -\frac{1}{2} & 0 \\
				0 & 0 & 0 & \abs{\psi_1}^2 -\frac{1}{2}
			\end{bmatrix}
			+
			\frac{1}{2} \begin{bmatrix} 
				\abs{\psi_0}^2 -\frac{1}{2} & 0 & 0 & 0 \\
				0 & \abs{\psi_1}^2 -\frac{1}{2} & 0 & 0 \\
				0 & 0 & \abs{\psi_0}^2 -\frac{1}{2} & 0 \\
				0 & 0 & 0 & \abs{\psi_1}^2 -\frac{1}{2}
			\end{bmatrix}\\
			&=& \begin{bmatrix} 
				\abs{\psi_0}^2 -\frac{1}{2} & 0 & 0 & 0 \\
				0 & 0 & 0 & 0 \\
				0 & 0 & 0 & 0 \\
				0 & 0 & 0 & \abs{\psi_1}^2 -\frac{1}{2}
			\end{bmatrix},
		\end{eqnarray}
		where $\brho_A \equiv \tr_B\R_{AB}$ and $\btau_B \equiv \tr_A\R_{AB}$, i.e., $\brho_A = \sum_i \abs{\psi_i}^2 \proj{e_i}$, $\btau_B = \sum_i \abs{\psi_i}^2 \proj{f_i}$. Eq. (\ref{eq:trA}) holds because $\brho_A$ has full rank, so we can find its inverse $\brho_A^{-1}$ that is given by $\brho_A^{-1} = \sum_i \abs{\psi_i}^{-2} \proj{e_i}$:
		\begin{eqnarray}
			\tr_A\left[\Bigl(\frac{1}{2}\brho_A^{-1} \ten \I\Bigr) \R_{AB}\right] &=& \tr_A\left[\left( \Bigl(\sum_k \frac{1}{2\abs{\psi_k}^2} \proj{e_k}\Bigr) \ten \I\right) 
			\left(\sum_{ij}\psi_i\psi_j^* \ketbra{e_i}{e_j} \ten \ketbra{f_i}{f_j}\right) 
			\right]\\
			&=& \sum_k \frac{\abs{\psi_k}^2}{2\abs{\psi_k}^2} \ketbra{f_k}{f_k} = \sum_k \frac{1}{2} \ketbra{f_k}{f_k}= \frac{\I}{2}.
		\end{eqnarray}
		Thus, we calculate the characteristic equation as follows: 
		\begin{eqnarray}
			0 &=& \det(\U(\R_{AB}^{T_A} - \L_{AB}^{T_A} - \lambda\I)\U^+)\\
			&=& \det \left(
			\begin{bmatrix} 
				\abs{\psi_0}^2 & 0 & 0 & 0 \\
				0 & 0 & \psi_0^*\psi_1 & 0 \\
				0 & \psi_0\psi_1^* & 0 & 0 \\
				0 & 0 & 0 & \abs{\psi_1}^2 
			\end{bmatrix}
			-
			\begin{bmatrix} 
				\abs{\psi_0}^2 -\frac{1}{2} & 0 & 0 & 0 \\
				0 & 0 & 0 & 0 \\
				0 & 0 & 0 & 0 \\
				0 & 0 & 0 & \abs{\psi_1}^2 -\frac{1}{2}
			\end{bmatrix} 
			-
			\begin{bmatrix} 
				\lambda & 0 & 0 & 0 \\
				0 & \lambda & 0 & 0 \\
				0 & 0 & \lambda & 0 \\
				0 & 0 & 0 & \lambda
			\end{bmatrix} \right)\\
			&=& \det 
			\begin{bmatrix} 
				\frac{1}{2}-\lambda & 0 & 0 & 0 \\
				0 & -\lambda & \psi_0^*\psi_1 & 0 \\
				0 & \psi_0\psi_1^* & -\lambda & 0 \\
				0 & 0 & 0 & \frac{1}{2}-\lambda 
			\end{bmatrix} \\
			&=& \left(\lambda - \frac{1}{2}\right)\left(\lambda - \frac{1}{2}\right)\biggl( \lambda^2 - \abs{\psi_0}^2\abs{\psi_1}^2\biggr).
		\end{eqnarray}
		$\bchi_{\Lambda}$ has four eigenvalues, $\frac{1}{2},\frac{1}{2}, \pm \abs{\psi_0}\abs{\psi_1}$, resulting in $\overrightarrow{f}(\R_{AB}) = \abs{\psi_0}\abs{\psi_1}$, so we have $\overrightarrow{f}(\R_{AB}) = E_{\text{neg}}(\R_{AB})$. Essentially the same procedure works for the swapped PDO $\R'_{BA} \equiv \S\left(\proj{\psi}\right)\S\+$, denoted by $\R'_{BA} = \proj{\psi'}$, where $\ket{\psi'} \equiv \sum_i \psi_i \ket{f_i}\ten \ket{e_i}$. We then gain the same result, namely, $\overrightarrow{f}(\R'_{BA}) = \abs{\psi_0}\abs{\psi_1}$.  
        Together with the fact $\overleftarrow{f}(\R_{AB}) = \overrightarrow{f}(\R'_{BA})$, we have $\overleftarrow{f}(\R_{AB}) = \overrightarrow{f}(\R_{AB})$, leading to $f(\R_{AB}) = \overrightarrow{f}(\R_{AB})$. Thus, we arrive at $f(\R_{AB}) = E_{\text{neg}}(\R_{AB})$.\\
  
        Hence, we conclude that $ f(\R_{AB}) = E_{\text{neg}}(\R_{AB})$ whenever $\R_{AB}$ is pure or its marginals are given by the maximally mixed state.
	\end{proof}
 
\noindent\textbf{Upper bound of entanglement negativity for spatial and temporal PDOs.}
In turn, we present a theorem underlying Result 4 in the main text.
\begin{restatable}{thm}{maxent}
    Let $\R_{AB}$ be a density operator that is temporally compatible, i.e., $\R_{AB} \in \calS \cap \calT$. Then,
            \begin{equation}
                    \max_{\R_{AB}\in \mathcal{S} \cap \mathcal{T}} E_\text{neg}(\R_{AB}) \le \frac{\sqrt{2}-1}{2}. \nonumber
            \end{equation}
    \label{thm:max_ent}
\end{restatable}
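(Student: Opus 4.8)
The plan is to turn the negativity into a spectral statement about the pseudo-channel attached to $\R_{AB}$ and then bound it over the temporally compatible family. Because $E_\text{neg}$ is invariant under interchanging $A$ and $B$, it suffices to treat the forward case: $\R_{AB}\in\calS$ and there is a genuine CPTP map $\Lambda$ with $\R_{AB}=(\id\ten\Lambda)\K$, whose Choi operator $\bchi\equiv\bchi_\Lambda$ is positive semidefinite. Applying the partial transpose $(T\ten\id)$ to the recovery identity $\bchi=(T\ten\id)(\R_{AB}-\L_{AB})$ of \cref{thm:recovermap} yields the key relation
\begin{equation*}
(T\ten\id)\R_{AB}=\bchi+\Delta,\qquad \Delta\equiv\bigl(\brho_A\t-\I\bigr)\ten\Lambda\bigl(\tfrac{\I}{2}\bigr)+\tfrac{\I}{2}\ten\Lambda(\brho_A),
\end{equation*}
in which $\brho_A=\tr_B\R_{AB}$, $\Lambda(\brho_A)=\tr_A\R_{AB}$, and $\Lambda(\I/2)=\tr_A\bchi$ are all fixed by $\R_{AB}$. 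Since the partial transpose of a two-qubit density operator has at most one negative eigenvalue, I would use $E_\text{neg}(\R_{AB})=\max\{0,-\lambda_{\min}(\bchi+\Delta)\}$.

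This recasts the claim as the semidefinite optimisation of $-\lambda_{\min}(\bchi+\Delta)$ over all $(\bchi,\brho_A)$ subject to $\bchi\ge0$, $\tr_B\bchi=\I/2$ (temporal compatibility) and $(T\ten\id)(\bchi+\Delta)=\R_{AB}\ge0$ (spatial compatibility). The decisive feature is that these two positivity constraints oppose one another: using only $\bchi\ge0$ through the Weyl estimate $\lambda_{\min}(\bchi+\Delta)\ge\lambda_{\min}(\Delta)$ is hopeless (it only gives $E_\text{neg}\le1$), while dropping the channel structure and keeping only $\R_{AB}\ge0$ returns the trivial $E_\text{neg}\le\tfrac12$. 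The bound $\tfrac12(\sqrt2-1)$ must therefore come from using both at once.

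Concretely I would, for the min-eigenvector $\ket{v}$ of $\bchi+\Delta$, write $-\lambda_{\min}=-\langle v|\bchi|v\rangle-\langle v|\Delta|v\rangle$ and control the two terms jointly: $\langle v|\Delta|v\rangle$ through the Schmidt coefficients of $\ket{v}$ and the marginals entering $\Delta$, and $\langle v|\bchi|v\rangle\ge0$ supplemented by the lower bound that $\R_{AB}\ge0$ forces on $\bchi$ in the relevant direction. Equivalently, and more cleanly for an upper bound, I would pass to the SDP dual and exhibit a single dual-feasible witness $W$ whose pairing against every feasible $(\bchi,\brho_A)$ certifies $\lambda_{\min}(\bchi+\Delta)\ge\tfrac12(1-\sqrt2)$. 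The candidate extremiser that such a witness must match is the one saturating both boundaries simultaneously -- $\bchi$ rank-deficient (the channel at the edge of complete positivity) and $\R_{AB}$ rank-deficient (at the edge of $\calS$) -- since it is precisely at this double boundary that the competing $2\times2$ secular equations cross and generate the $\sqrt2$.

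The hard part is this final optimisation. A reduction by local-unitary invariance (diagonalising $\brho_A=\mathrm{diag}(a,1-a)$, which trivialises $\brho_A\t$) together with the residual symmetries makes $\bchi$ and $\Delta$ block-diagonal and reduces $-\lambda_{\min}$ to an explicit scalar problem; but such a reduction only controls the maximum over the reduced family and does not by itself establish the \emph{universal} inequality, so one genuinely needs the dual certificate to cover all, possibly asymmetric, configurations. Constructing that witness operator and verifying that it is saturated exactly at $\tfrac12(\sqrt2-1)$ is where the real work lies.
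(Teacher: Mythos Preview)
Your strategic premise is mistaken. You assert that neither the spatial constraint $\R_{AB}\ge0$ nor the temporal constraint $\bchi\ge0$ alone can deliver the bound, so it ``must therefore come from using both at once.'' In fact the paper proves the stronger inequality $\max_{\R_{AB}\in\mathcal{T}}E_\text{neg}(\R_{AB})\le\tfrac12(\sqrt2-1)$ over \emph{all} temporal PDOs, never invoking $\R_{AB}\ge0$ at all. The idea you are missing is that the channel can be removed outright rather than optimised over. Writing $\R_{AB}=(\id\ten\E)\K$ with $\K=\{\brho_A\ten\tfrac{\I}{2},\S\}$ and $\E$ CPTP, trace-norm contractivity of CPTP maps (equivalently $\|\E\|_\diamond=1$) gives
\[
\trnorm{(T\ten\id)\R_{AB}}=\trnorm{(\id\ten\E)\,(T\ten\id)\K}\le\trnorm{(T\ten\id)\K},
\]
so the dependence on $\E$ vanishes and one is left with a family indexed only by $\brho_A$. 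A direct computation of the characteristic polynomial of $(T\ten\id)\K$ then yields eigenvalues $0,0,\tfrac12\bigl(1\pm\sqrt{1+p^2+|c|^2}\bigr)$ in the parametrisation $\brho_A=\tfrac12(\I+p\,\Z+\Re(c)\,\X+\Im(c)\,\Y)$, whence $\trnorm{(T\ten\id)\K}=\sqrt{1+p^2+|c|^2}\le\sqrt2$ by positivity of $\brho_A$. That is the whole proof.

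Your SDP/dual-witness programme is therefore both unnecessary and, as you acknowledge, unexecuted: the construction of the certifying $W$ is left as ``where the real work lies,'' so the proposal is not yet a proof. More importantly, the diagnosis that led you there is off. You dismissed the use of $\bchi\ge0$ alone because the crude Weyl bound $\lambda_{\min}(\bchi+\Delta)\ge\lambda_{\min}(\Delta)$ is too weak; but that is not the right way to exploit complete positivity. The sharp consequence of $\E$ being CPTP here is contractivity under the trace norm, which collapses the problem to a two-real-parameter eigenvalue calculation and produces the $\sqrt2$ immediately from the Bloch-sphere constraint $p^2+|c|^2\le1$.
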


	\begin{proof}
		We prove the theorem by showing $\max_{\R_{AB}\in \mathcal{T}} E_\text{neg}(\R_{AB}) \le \frac{\sqrt{2}-1}{2}$. Together with the fact that 
  \begin{eqnarray}
      \max_{\R_{AB}\in \mathcal{S} \cap \mathcal{T}} E_\text{neg}(\R_{AB}) \le \max_{\R_{AB}\in \mathcal{T}} E_\text{neg}(\R_{AB}),
  \end{eqnarray}
we complete the proof.
		
		Let $\R_{AB}$ be a temporal PDO, and $\brho_A$ the first marginal of $\R_{AB}$, i.e., $\brho_A \equiv \tr_B \R_{AB}$. Without loss of generality, suppose $\R_{AB}$ is compatible with the temporally distributed structure where measurement on the system $A$ occurs first before that on the system $B$. Then $\R_{AB}$ can be written as $\R_{AB} = (\id \ten \E) \{\brho_A\ten \frac{\I}{2}, \swap\}$ with its associated physical time evolution, that is, a CPTP map $\E$ \cite{zhao18pdogeometry,horsman17pdoandothers}. Recall that the entanglement negativity $E_\text{neg}$ is defined as $E_\text{neg}(\R_{AB})\equiv \frac{\norm{(T \ten \id)\R_{AB}}_{\tr}-1}{2}$. We will thus prove that $\norm{(T \ten \id)\R_{AB}}_{\tr} \le \sqrt{2}$.
		
		By using the definition of the diamond norm $\norm{\cdot}_\diamond$, we have the following inequality,
		\begin{eqnarray}
			\norm{(T \ten \id)\R_{AB}}_{\tr} &=& \norm{(T \ten \E)\left\{\brho_A\ten \frac{\I}{2}, \swap\right\}}_{\tr} \\
			&\le& \norm{\E}_{\diamond} \cdot \norm{(T \ten \id)\left\{\brho_A\ten \frac{\I}{2}, \swap\right\}}_{\tr} \\
			&\le& \norm{(T \ten \id)\left\{\brho_A\ten \frac{\I}{2}, \swap\right\}}_{\tr}. \label{ineq:b}
		\end{eqnarray}
		The last inequality holds because a CPTP map has unit diamond norm. Now, we represent $\brho_A$ as a matrix in a canonical basis as
		\begin{eqnarray}
			\brho_A = \frac{1}{2}
			\begin{bmatrix} 
				1-p & c \\
				c^* & 1+p 
			\end{bmatrix},
		\end{eqnarray}
		with a positive real number $0\le p\le1$ and a complex number $c$. In the matrix representation, the operator on the right hand side in Inequality (\ref{ineq:b}) is represented as 
		\begin{eqnarray}
			(T \ten \id)\left\{\brho_A\ten \frac{\I}{2}, \swap\right\} = \frac{1}{4}
			\begin{bmatrix} 
				2(1-p) & c & c^* & 2 \\
				c^* & 0 & 0 & c^* \\
				c & 0 & 0 & c \\
				2 & c & c^* & 2(1+p) 
			\end{bmatrix}.
		\end{eqnarray}
		After solving the characteristic equation of $(T \ten \id)\{\brho_A\ten \frac{\I}{2},\swap\}$, we obtain its eigenvalues; $0,0,\frac{1}{2}(1\pm\sqrt{1+|p|^2+|c|^2})$:
		\begin{eqnarray}
			0 &=& \det((T \ten \id)\left\{\brho_A\ten \frac{\I}{2},\swap\right\} - \lambda \I) \\
			   &=& \det(\frac{1}{4}
			\begin{bmatrix} 
				2(1-p)-4\lambda & c & c^* & 2 \\
				c^* & -4\lambda & 0 & c^* \\
				c & 0 & -4\lambda & c \\
				2 & c & c^* & 2(1+p)-4\lambda 
			\end{bmatrix}) \\
				&=& \det \frac{1}{4}
			\begin{bmatrix} 
				2(1-p)-4\lambda & c & c^* & 2 \\
				c^* & -4\lambda & 0 & c^* \\
				c & 0 & -4\lambda & c \\
				2p+4\lambda & 0 & 0 & 2p-4\lambda 
			\end{bmatrix}\\
				&=& -\left(\frac{1}{4}\right)^4(2p+4\lambda)
			\det\begin{bmatrix}
				c & c^* & 2 \\
				-4\lambda & 0 & c^* \\
				0 & -4\lambda & c
			\end{bmatrix}
			+\left(\frac{1}{4}\right)^4(2p-4\lambda)
			\det\begin{bmatrix} 
				2(1-p)-4\lambda & c & c^*\\
				c^* & -4\lambda & 0 \\
				c & 0 & -4\lambda
			\end{bmatrix} \\
				&=& \lambda^2 \left(\lambda^2-\lambda-\frac{1}{4}\left(p^2+|c|^2\right)\right) \\
			\Leftrightarrow \lambda &=& 0,\frac{1}{2}\left(1\pm\sqrt{1+p^2+|c|^2}\right). 
		\end{eqnarray}
		
		By the positivity of the quantum state $\brho_A$, we have $0 \le |p|^2+|c|^2 \le 1$, which leads to 
		\begin{eqnarray}
			\norm{(T \ten \id)\left\{\brho_A\ten \frac{\I}{2}, \swap\right\}}_{\tr} &=&  \frac{1}{2}\abs{1+\sqrt{1+|p|^2+|c|^2}} + \frac{1}{2}\abs{1-\sqrt{1+|p|^2+|c|^2}} \\
			&=& \frac{1}{2}\left(1+\sqrt{1+|p|^2+|c|^2}\right) - \frac{1}{2}\left(1-\sqrt{1+|p|^2+|c|^2}\right) \\
			&=& \sqrt{1+|p|^2+|c|^2} \\
			&\le& \sqrt{1+1} = \sqrt{2}.
		\end{eqnarray}
		It follows that $E_\text{neg}(\R_{AB})\le \frac{\sqrt{2}-1}{2}$ for any temporal $\R_{AB}$.\\
  
  Hence, we conclude that $\max_{\R\in \mathcal{S} \cap \mathcal{T}} E_\text{neg}(\R_{AB}) \le \frac{\sqrt{2}-1}{2}$.
	\end{proof}

The positivity of PDOs does not change under local unitary operations, and thus aspatiality of PDOs is invariant under such operations. Together with this fact, \cref{pro:invariant} implies that spatial-temporal compatibility is invariant under local unitary operations.

\begin{restatable}{pro}{inv}
    Atemporality is invariant under any local unitary operations acting on a PDO. \label{pro:invariant}
\end{restatable}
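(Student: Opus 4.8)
The plan is to show that a local unitary $\U_A\ten\U_B$ induces a spectrum-preserving bijection between the sets of compatible pseudo-channels of $\R_{AB}$ and those of the transformed PDO, so that the minimization defining $\overrightarrow{f}$ (and likewise $\overleftarrow{f}$, hence $f$) is left unchanged. Fix a local unitary and set $\R_{AB}' = (\U_A\ten\U_B)\R_{AB}(\U_A\+\ten\U_B\+)$, whose first marginal is $\brho_A' = \U_A\brho_A\U_A\+$. Starting from the representation $\R_{AB}=\frac{1}{4}\sum_k\{\brho_A,\bsigma_k\}\ten\Lambda(\bsigma_k)$ of any compatible forward pseudo-channel $\Lambda$ (see Eq.~(\ref{eq:map})), I would conjugate each tensor factor and use $\U_A\{\brho_A,\bsigma_k\}\U_A\+=\{\brho_A',\U_A\bsigma_k\U_A\+\}$ to obtain $\R_{AB}'=\frac{1}{4}\sum_k\{\brho_A',\U_A\bsigma_k\U_A\+\}\ten\U_B\Lambda(\bsigma_k)\U_B\+$.

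Next I would establish the correspondence between pseudo-channels. Since $\{\U_A\bsigma_k\U_A\+\}_k$ is again an orthonormal Hermitian basis and the swap operator is basis independent, $\S=\frac{1}{2}\sum_k(\U_A\bsigma_k\U_A\+)\ten(\U_A\bsigma_k\U_A\+)$ because $(\U_A\ten\U_A)\S(\U_A\+\ten\U_A\+)=\S$. The rewriting above is therefore exactly $\R_{AB}'=(\id\ten\Lambda')\K'$ with $\K'=\{\brho_A'\ten\frac{\I}{2},\S\}$ and the map $\Lambda'(\cdot):=\U_B\,\Lambda(\U_A\+\cdot\,\U_A)\,\U_B\+$. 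One checks $\Lambda'$ is linear, Hermiticity- and trace-preserving (the latter from cyclicity together with trace preservation of $\Lambda$), hence a forward pseudo-channel compatible with $\R_{AB}'$. Because the construction inverts upon conjugating by $\U_A\+,\U_B\+$, the assignment $\Lambda\mapsto\Lambda'$ is a bijection between $\overrightarrow{\mathcal{C}}(\R_{AB})$ and $\overrightarrow{\mathcal{C}}(\R_{AB}')$, which in particular covers the rank-deficient case where compatible pseudo-channels are not unique.

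I would then relate the Choi operators. Writing $\Lambda'$ as the pre- and post-composition of $\Lambda$ with unitary channels and applying the identity $(\I\ten\mathbf{M})\ket{\phi^+}=(\mathbf{M}^T\ten\I)\ket{\phi^+}$ (already used in the proof of the unitary example), I would show $\bchi_{\Lambda'}=(\U_A^*\ten\U_B)\,\bchi_{\Lambda}\,((\U_A^*)\+\ten\U_B\+)$, where $\U_A^*$ denotes the entrywise complex conjugate. As $\U_A^*$ is unitary, this is a genuine unitary conjugation, so $\bchi_{\Lambda'}$ and $\bchi_{\Lambda}$ share the same spectrum and hence $\N(\bchi_{\Lambda'})=\N(\bchi_{\Lambda})$. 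Minimizing over the bijectively matched compatible pseudo-channels gives $\overrightarrow{f}(\R_{AB}')=\overrightarrow{f}(\R_{AB})$. For the reverse direction, note the swapped PDO transforms as $\S\R_{AB}'\S\+=(\U_B\ten\U_A)\,\S\R_{AB}\S\+\,(\U_B\+\ten\U_A\+)$ via $\S(\U_A\ten\U_B)=(\U_B\ten\U_A)\S$, so the forward result applied to swapped PDOs yields $\overleftarrow{f}(\R_{AB}')=\overleftarrow{f}(\R_{AB})$; therefore $f=\min(\overrightarrow{f},\overleftarrow{f})$ is invariant.

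The main obstacle is the step that turns the conjugated expansion back into the canonical form $(\id\ten\Lambda')\K'$, which hinges on the basis-independence of the swap operator and on verifying that $\Lambda'$ remains a bona fide trace-preserving pseudo-channel; the attendant bookkeeping of the conjugate $\U_A^*$ in the Choi transformation is the other delicate point. Crucially, however, only the \emph{spectrum} of $\bchi_\Lambda$ enters the negativity, so any local-unitary form of $\bchi_{\Lambda'}$ suffices and the precise appearance of the conjugate does not affect the conclusion.
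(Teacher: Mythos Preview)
Your proposal is correct and follows essentially the same strategy as the paper: establish a bijection between compatible pseudo-channels of $\R_{AB}$ and those of the locally rotated PDO via pre-/post-composition with the unitary channels, invoke the swap-operator identity (equivalently $(\I\ten\mathbf{M})\ket{\phi^+}=(\mathbf{M}^T\ten\I)\ket{\phi^+}$), and conclude by unitary invariance of the Choi spectrum/trace norm. The only cosmetic differences are that the paper treats $\mathcal{U}\ten\id$ and $\id\ten\mathcal{V}$ separately (then composes) and handles the reverse direction by directly exhibiting $\mathcal{U}\circ\overleftarrow{\Lambda}$, whereas you treat $\U_A\ten\U_B$ at once and reduce the reverse case to the forward one via the swap; you also make the Choi relation $\bchi_{\Lambda'}=(\U_A^*\ten\U_B)\bchi_{\Lambda}(\U_A^*\ten\U_B)\+$ explicit where the paper simply asserts trace-norm invariance.
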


\begin{proof}
    Let $\R_{AB}$ be a PDO. By \cref{lem:exist}, $\R_{AB}$ can be expressed with its forward pseudo-channel $\overrightarrow{\Lambda}$ and the associate initial state $\brho_A = \tr_B \R_{AB}$ as 
    \begin{equation}
        \R_{AB} = \left(\id \ten \overrightarrow{\Lambda}\right) \left\{ \brho_A \ten \frac{\I}{2}, \swap \right\}. \label{eq:fpc}
    \end{equation}
    Likewise, it can be expressed with its reverse pseudo-channel $\overleftarrow{\Lambda}$ and the associate initial state $\btau_B = \tr_A \R_{AB}$ as 
    \begin{equation}
        \R_{AB} = \left(\overleftarrow{\Lambda} \ten \id \right) \left\{  \frac{\I}{2} \ten \btau_B, \swap \right\}.\label{eq:rpc}
    \end{equation}
    
    We denote by $\overrightarrow{\varDelta}(\R_{AB}),\overleftarrow{\varDelta}(\R_{AB})$ the set of forward, reverse pseudo-channels, respectively, compatible with a given PDO $\R_{AB}$. 
    Now we show that (1) $\overrightarrow{\varDelta}\bigl(\left(\mathcal{U} \ten \id \right)(\R_{AB})\bigr) = \left\{ \overrightarrow{\Lambda}\circ\mathcal{U}^\dagger : \overrightarrow{\Lambda} \in \overrightarrow{\varDelta}(\R_{AB}) \right\}$ and (2) $\overleftarrow{\varDelta}\bigl(\left(\mathcal{U} \ten \id \right)(\R_{AB})\bigr) = \left\{ \mathcal{U}\circ\overleftarrow{\Lambda} : \overleftarrow{\Lambda} \in \overleftarrow{\varDelta}(\R_{AB}) \right\}$ for any unitary operation $\mathcal{U}$. Let $\mathcal{U}$ be an arbitrary unitary operation.
    
    (1) For any $\overrightarrow{\Lambda} \in \overrightarrow{\varDelta}(\R_{AB})$, it satisfies Eq.~(\ref{eq:fpc}). Then we can show that forward pseudo-channels for $\left(\mathcal{U} \ten \id \right)(\R_{AB})$ are of the form $\overrightarrow{\Lambda}\circ\mathcal{U}^{\dagger}$: 
    \begin{eqnarray}
        \left(\mathcal{U} \ten \id \right)(\R_{AB}) &=&  \left(\id \ten \overrightarrow{\Lambda}\right) \left\{ \mathcal{U}(\brho_A) \ten \frac{\I}{2}, \left(\mathcal{U} \ten \id \right)(\swap) \right\} \\
            &=& \left(\id \ten \overrightarrow{\Lambda}\right) \left\{ \mathcal{U}(\brho_A) \ten \frac{\I}{2}, \left(\id \ten \mathcal{U}^\dagger \right)(\swap) \right\} \\
            &=& \left(\id \ten \overrightarrow{\Lambda}\circ\mathcal{U}^\dagger\right) \left\{ \mathcal{U}(\brho_A) \ten \frac{\I}{2}, \swap \right\}.
    \end{eqnarray}
    The first equation follows by \cref{lem:exist}, and the second equation holds due to the fact that $(\I \ten \mathbf{M})\ket{\Phi^+} = (\mathbf{M}^T \ten \I)\ket{\Phi^+}$ for any linear operator $\mathbf{M}$, where $\ket{\Phi^+}$ denotes the maximally entangled state. Thus, we have shown that $\overrightarrow{\varDelta}\bigl(\left(\mathcal{U} \ten \id \right)(\R_{AB})\bigr) = \left\{ \overrightarrow{\Lambda}\circ\mathcal{U}^\dagger : \overrightarrow{\Lambda} \in \overrightarrow{\varDelta}(\R_{AB}) \right\}$. We also observe that the relation between $\overrightarrow{\Lambda}$ and $\overrightarrow{\Lambda}\circ\mathcal{U}^\dagger$ is one-to-one correspondence.
    
    (2) For any $\overleftarrow{\Lambda} \in \overleftarrow{\varDelta}(\R_{AB})$, it satisfies Eq.~(\ref{eq:rpc}). Then it can be simply shown that reverse pseudo-channels for $\left(\mathcal{U} \ten \id \right)(\R_{AB})$ are of the form $\mathcal{U}\circ\overleftarrow{\Lambda}$: 
    \begin{eqnarray}
        \left(\mathcal{U} \ten \id \right)(\R_{AB}) &=&  \Bigl(\mathcal{U} \ten \id \Bigr)\left(\overleftarrow{\Lambda} \ten \id \right) \left\{  \frac{\I}{2} \ten \btau_B, \swap \right\}\\
        &=&\left(\mathcal{U}\circ\overleftarrow{\Lambda} \ten \id \right) \left\{  \frac{\I}{2} \ten \btau_B, \swap \right\}.
    \end{eqnarray}
    Thus, we have shown that $\overleftarrow{\varDelta}\bigl(\left(\mathcal{U} \ten \id \right)(\R_{AB})\bigr) = \left\{ \mathcal{U}\circ\overleftarrow{\Lambda} : \overleftarrow{\Lambda} \in \overleftarrow{\varDelta}(\R_{AB}) \right\}$.\\
    
    It is left to show that $f\bigl(\left(\mathcal{U} \ten \id \right)(\R_{AB})\bigr) = f(\R_{AB})$:
    \begin{eqnarray}
        f(\R_{AB}) &\equiv& \min_{\substack{\overrightarrow{\Lambda} \in \overrightarrow{\varDelta}(\R_{AB})\\
        \overleftarrow{\Lambda} \in \overleftarrow{\varDelta}(\R_{AB})}
        } \left( \frac{\norm{\bchi_{\overrightarrow{\Lambda}}}_{\tr} -1}{2}, \frac{\norm{\bchi_{\overleftarrow{\Lambda}}}_{\tr} -1}{2}\right)\\
        &=& \min_{\substack{\overrightarrow{\Lambda} \in \overrightarrow{\varDelta}(\R_{AB})\\
        \overleftarrow{\Lambda} \in \overleftarrow{\varDelta}(\R_{AB})}
        } \left(\frac{\norm{\bchi_{\overrightarrow{\Lambda}\circ\mathcal{U}^\dagger}}_{\tr} -1}{2}, \frac{\norm{\bchi_{\mathcal{U}\circ\overleftarrow{\Lambda}}}_{\tr} -1}{2}\right)\\
        &=&\min_{\substack{\overrightarrow{\Lambda} \in \overrightarrow{\varDelta}\bigl(\left(\mathcal{U} \ten \id \right)(\R_{AB})\bigr)\\
        \overleftarrow{\Lambda} \in \overleftarrow{\varDelta}\bigl(\left(\mathcal{U} \ten \id \right)(\R_{AB})\bigr)}
        } \left( \frac{\norm{\bchi_{\overrightarrow{\Lambda}}}_{\tr} -1}{2}, \frac{\norm{\bchi_{\overleftarrow{\Lambda}}}_{\tr} -1}{2}\right)\\
        &\equiv& f\bigl(\left(\mathcal{U} \ten \id \right)(\R_{AB})\bigr).
    \end{eqnarray}
    The second equation holds due to the fact that $\norm{\cdot}_{\tr}$ is invariant under any unitary operations. The third equation simply follows from above observations. In a similar manner, we can show that $f\bigl(\left(\id \ten \mathcal{V} \right)\R_{AB}\bigr) = f(\R_{AB})$, for any unitary operation $\mathcal{V}$. \\

    Hence, we conclude that atemporality is invariant under any local unitary operations acting on a PDO.
\end{proof}

\begin{restatable}{cor}{entbreaking}
	Let $\R_{AB} \in \mathcal{S} \cap \mathcal{T}$ be a PDO whose first marginal of $\R_{AB}$ is given by the maximally mixed state so that $\tr_B\R_{AB} = \frac{\I}{2}$. The (forward) pseudo-channel for $\R_{AB}$ is an entanglement-breaking channel if and only if $\R_{AB}$ is  separable.
	\label{cor:entbreaking}
\end{restatable}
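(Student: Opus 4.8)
The plan is to reduce the statement to the Choi picture and then invoke the Horodecki--Shor--Ruskai characterization of entanglement-breaking channels. First I would observe that since $\brho_A = \tr_B \R_{AB} = \frac{\I}{2}$ has full rank, \cref{thm:recovermap} guarantees that $\R_{AB}$ possesses a \emph{unique} forward pseudo-channel $\Lambda$, with Choi operator $\bchi_{\Lambda} = (T\ten\id)(\R_{AB} - \L_{AB})$. The crucial simplification is that $\L_{AB}$ vanishes for a maximally mixed first marginal: substituting $\brho_A = \frac{\I}{2}$ gives $\brho_A - \frac{\I}{2} = \zero$ and $\I - \frac{1}{2}\brho_A^{-1} = \I - \I = \zero$, so both terms of $\L_{AB}$ are zero. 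Hence $\bchi_{\Lambda} = (T\ten\id)\R_{AB} = \R_{AB}^{T_A}$, the partial transpose of $\R_{AB}$ on system $A$. Moreover, because $\R_{AB}\in\calT$ there exists at least one genuine CPTP channel compatible with it, and uniqueness forces that channel to coincide with $\Lambda$; thus $\Lambda$ is a bona fide quantum channel and $\bchi_{\Lambda}$ a bona fide (normalized) Choi state, so the entanglement-breaking criterion is applicable.

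Next I would invoke the standard result that a quantum channel is entanglement-breaking if and only if its Choi state is separable. Applied to $\Lambda$, this makes entanglement-breaking-ness of $\Lambda$ equivalent to separability of $\bchi_{\Lambda} = \R_{AB}^{T_A}$.

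Finally I would close the loop using the elementary fact that partial transpose maps separable operators to separable operators and is an involution. Writing a separable $\R_{AB} = \sum_i p_i\, \brho_i \ten \btau_i$, we get $\R_{AB}^{T_A} = \sum_i p_i\, \brho_i^{T} \ten \btau_i$, which is again separable since each $\brho_i^{T}$ is a valid state; conversely, applying the same observation to $\R_{AB}^{T_A}$ together with $(\R_{AB}^{T_A})^{T_A} = \R_{AB}$ yields the reverse implication. Therefore $\R_{AB}^{T_A}$ is separable if and only if $\R_{AB}$ is separable, and chaining the two equivalences shows that $\Lambda$ is entanglement-breaking if and only if $\R_{AB}$ is separable.

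The main obstacle is not any single hard computation but ensuring the Choi-state identification is legitimate: one must confirm that the unique pseudo-channel supplied by \cref{thm:recovermap} is genuinely CPTP (so that the entanglement-breaking criterion indeed applies) and that the normalization of $\bchi_{\Lambda}$ matches the convention under which the Horodecki--Shor--Ruskai characterization is stated. Both points follow from $\R_{AB}\in\calS\cap\calT$ together with the full-rank (maximally mixed) marginal, but they are exactly where care is needed before the separability-preservation argument can be applied.
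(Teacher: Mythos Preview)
Your proposal is correct and follows essentially the same route as the paper: both arguments reduce to $\L_{AB}=\zero$ so that $\bchi_{\Lambda}=\R_{AB}^{T_A}$, then invoke the Choi-separability characterization of entanglement-breaking channels, and finally use that partial transposition preserves separability. Your write-up is in fact slightly more careful than the paper's, as you explicitly justify why $\Lambda$ is a genuine CPTP map (via uniqueness plus $\R_{AB}\in\calT$) and spell out the separability-preservation of partial transpose directly, whereas the paper simply appeals to the two-qubit setting at that step.
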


	\begin{proof}
		As $\tr_B\R_{AB} = \frac{\I}{2}$, we have $\L_{AB} = \zero$. Then it follows that the Choi operator $\bchi_{\Lambda}$ of the (forward) pseudo-channel $\Lambda$ coincides with $\R_{AB}^{T_A}$, i.e., $\bchi_{\Lambda} = \R_{AB}^{T_A}$. By definition of an entanglement-breaking channel, $\bchi_{\Lambda}$ is separable if and only if $\Lambda$ is a entanglement-breaking channel. As $\bchi_{\Lambda} = \R_{AB}^{T_A}$ and we deal with two-level bipartite quantum systems, $\bchi_{\Lambda}$ is separable iff $\R_{AB}$ is separable. Hence, we conclude that $\R_{AB}$ is separable if and only if $\Lambda$ is a entanglement-breaking channel. 
	\end{proof}

    \clearpage
\pagebreak
\subsection{Computing atemporality}\label{app4}

Here we show that a semidefinite programming (SDP) helps computing atemporality effectively, and provide how to formulate a SDP. 

First of all, let us recall the definition of (forward) atemporality:
\begin{equation}
\overrightarrow{f}(\R_{AB})\equiv\min_{ \overrightarrow{\Lambda}}~ \mathcal{N}(\bchi_{\overrightarrow{\Lambda}}),
\end{equation}
where $\mathcal{N}(\X)$ denotes a negativity of $\X$, the sum of the absolute values of all negative eigenvalues of $\X$, i.e., $\mathcal{N}(\X)\equiv \frac{\trnorm{\X}-1}{2}$, and $\bchi_{\overrightarrow{\Lambda}}$ denotes the normalized Choi operator $\bchi_{\overrightarrow{\Lambda}}$ of a pseudo-channel $\overrightarrow{\Lambda}$ that is compatible with $\R_{AB}$. By \cref{lem:exist}, we can derive the following equations that $\overrightarrow{\Lambda}$ must satisfy 
\begin{eqnarray}
        &\overrightarrow{\Lambda}&(\X) + \tr[\brho_A \X] \overrightarrow{\Lambda}(\I) = 2\tr_A[(\X \ten \I) \R_{AB}], \label{eq:app_lambda_X} \\ 
        &\overrightarrow{\Lambda}&(\Y) + \tr[\brho_A \Y] \overrightarrow{\Lambda}(\I) = 2\tr_A[(\Y \ten \I) \R_{AB}], \label{eq:app_lambda_Y}\\
        &\overrightarrow{\Lambda}&(\,\Z) + \tr[\brho_A \,\Z] \overrightarrow{\Lambda}(\I) = 2\tr_A[(\,\Z \ten \I) \R_{AB}], \label{eq:app_lambda_Z}\\
        &\overrightarrow{\Lambda}&(\brho_A) = \tr_{A}\R_{AB}, \label{eq:app_lambda_rho}
\end{eqnarray}
where $\brho_A \equiv \tr_B\R_{AB}$. For cases where $\brho_A$ is rank-deficient, we additionally introduce a Hermitian operator $\btau$ of trace one that satisfy
\begin{eqnarray}
    \overrightarrow{\Lambda}(\I-\brho_A) &=& \btau.
\end{eqnarray}
Note that if $\brho_A$ has full rank, we observe that $\btau$ is fixed because Eqs.~(\ref{eq:app_lambda_X},\ref{eq:app_lambda_Y},\ref{eq:app_lambda_Z},\ref{eq:app_lambda_rho}) already determine $\Lambda(\X),\Lambda(\Y),\Lambda(\Z)$ and $\Lambda(\I)$; otherwise $\btau$ remains to be undetermined (see the proof of \cref{thm:recovermap} for the detailed derivation of these equations and the justification for the introduction of $\btau$). 
We will denote by $\overrightarrow{\Lambda}_{\btau}$ the pseudo-channel with $\btau$ assigned to $\overrightarrow{\Lambda}(\I-\brho_A)$.

We then have the following optimization problem: Given a PDO $\R_{AB}$ and $\brho_A = \R_{AB}$,
\begin{eqnarray}
\begin{aligned}
        \text{minimize } &\mathcal{N}(\bchi_{\overrightarrow{\Lambda}_{\btau}}),\\
    \text{subject to } &\begin{cases} \tr\btau = 1,\\ 
    \btau \text{ is Hermitian},\\
    \overrightarrow{\Lambda}_{\btau}(\X) + \tr[\brho_A \X] \overrightarrow{\Lambda}(\I) = 2\tr_A[(\X \ten \I) \R_{AB}] , \\ 
    \overrightarrow{\Lambda}_{\btau}(\Y) + \tr[\brho_A \Y] \overrightarrow{\Lambda}(\I) = 2\tr_A[(\Y \ten \I) \R_{AB}],\\
    \overrightarrow{\Lambda}_{\btau}(\, \Z) + \tr[\brho_A \,\Z] \overrightarrow{\Lambda}(\I) = 2\tr_A[(\,\Z \ten \I) \R_{AB}] ,\\
    \overrightarrow{\Lambda}_{\btau}(\brho_A) = \tr_{A}\R_{AB},\\
    \overrightarrow{\Lambda}_{\btau}(\I-\brho_A) = \btau
    \end{cases}.
\end{aligned}
\end{eqnarray}

Let us recall Eqs.~(\ref{eq:app_choi_pdo},\ref{eq:app_alt_pdo}) that the Choi operator $\bchi_{\overrightarrow{\Lambda}_{\btau}}$ of $\overrightarrow{\Lambda}_{\btau}$ can be written in terms of 
$\R_{AB}$ and $\btau$ as
\begin{eqnarray}
        \bchi_{\overrightarrow{\Lambda}_{\btau}} = \bigl(\R_{AB}\bigr)^{T_{A}} - \frac{1}{2} \bigl(\tr_B\R_{AB}\bigr)^T \ten \tr_A\R_{AB} + \frac{1}{2}\bigl( \I - \tr_B\R_{AB} \bigr)^T \ten \btau.
\end{eqnarray}
We also note that $\bchi_{\overrightarrow{\Lambda}_{\btau}}$ can be decomposed as $\bchi_{\overrightarrow{\Lambda}_{\btau}} = \mathbf{P} - \mathbf{N}$ with two positive operators $\mathbf{P,N}\ge \zero$, and its negativity $\mathcal{N}(\bchi_{\overrightarrow{\Lambda}_{\btau}})$ is then given by the minimum of $ \frac{1}{2}\bigl(\tr\left[ \mathbf{P} + \mathbf{N} \right]-1\bigr)$ over all possible decomposition $\mathbf{P},\mathbf{N}$.\\

Therefore, we formulate the optimization problem as follows: Given a PDO $\R_{AB}$ and $\brho_A = \tr_B\R_{AB}$,
\begin{eqnarray}
\begin{aligned}
        \text{minimize } & \frac{1}{2}\bigl(\tr\left[ \mathbf{P} + \mathbf{N} \right] - 1\bigr),\\
    \text{subject to } &
    \begin{cases} 
    \mathbf{P} - \mathbf{N} = \bigl(\R_{AB}\bigr)^{T_{A}} - \frac{1}{2} \bigl(\tr_B\R_{AB}\bigr)^T \ten \tr_A\R_{AB} + \frac{1}{2}\bigl( \I - \tr_B\R_{AB} \bigr)^T \ten \btau,\\
    \mathbf{P}, \mathbf{N} \ge \zero,\\
    \tr\btau = 1,\\ 
    \btau \text{ is Hermitian}, \\
    \overrightarrow{\Lambda}_{\btau}(\X) + \tr[\brho_A \X] \overrightarrow{\Lambda}(\I) = 2\tr_A[(\X \ten \I) \R_{AB}] , \\ 
    \overrightarrow{\Lambda}_{\btau}(\Y) + \tr[\brho_A \Y] \overrightarrow{\Lambda}(\I) = 2\tr_A[(\Y \ten \I) \R_{AB}] ,\\
    \overrightarrow{\Lambda}_{\btau}(\,\Z) + \tr[\brho_A \,\Z] \overrightarrow{\Lambda}(\I) = 2\tr_A[(\,\Z \ten \I) \R_{AB}],\\
    \overrightarrow{\Lambda}_{\btau}(\brho_A) = \tr_{A}\R_{AB},\\
    \overrightarrow{\Lambda}_{\btau}(\I-\brho_A) = \btau \end{cases},
\end{aligned}
\end{eqnarray}
and this can be computed efficiently by semidefinite programming.



\end{appendix}

\end{document}